\newtheorem*{fact*}{Fact}                       
\newtheorem{definition}{Definition}[section]    
\newtheorem*{definition*}{Definition}           
\newtheorem*{proposition*}{Proposition}         
\newtheorem{theorem}{Theorem}[section]          
\newtheorem*{theorem*}{Theorem}                 
\newtheorem{lemma}[theorem]{Lemma}              
\newtheorem*{lemma*}{Lemma}                     
\newtheorem*{sublemma*}{Sublemma}               
\newtheorem{corollary}{Corollary}[theorem]      
\newtheorem*{corollary*}{Corollary}             
\newcommand{\mtext}[1]{\ifmmode\operatorname{#1}\else\textnormal{#1}\fi}
\newcommand{\mtexttt}[1]{\ifmmode\operatorname{\mathtt{#1}}\else\textnormal{\texttt{#1}}\fi}
\newcommand{\mtextit}[1]{\ifmmode\operatorname{\mathit{#1}}\else\textnormal{\textit{#1}}\fi}
\newcommand{\mtextbf}[1]{\ifmmode\operatorname{\mathbf{#1}}\else\textnormal{\textbf{#1}}\fi}
\newcommand{\mtextsc}[1]{\ifmmode\operatorname{\textsc{\smaller #1}}\else\textnormal{\textsc{\smaller #1}}\fi}
\definecolor{cbsafeABright}{RGB}{8,72,145}
\definecolor{cbsafeADark}{RGB}{109,36,150}
\definecolor{cbsafeBBright}{RGB}{85,119,13}
\definecolor{cbsafeBDark}{RGB}{109,70,17}
\definecolor{cbsafeCBright}{RGB}{150,48,89}
\definecolor{cbsafeCDark}{RGB}{70,24,48}
\newcommand{\colorMATHA}{cbsafeABright}
\newcommand{\colorSYNTAXA}{cbsafeADark!80!black}
\newcommand{\colorTEXT}{black}
\newcommand{\colorMATH}{\colorMATHA}
\newcommand{\colorSYNTAX}{\colorSYNTAXA}
\renewcommand{\paragraph}[1]{\vspace{5pt}\noindent\textbf{#1}}
\definecolor{mygray}{gray}{0.97}
\newcommand{\dduo}{\textsc{DDuo}\xspace}
\newcommand{\wrapper}{\texttt{Sensitive}\xspace}
\newcommand{\jn}[1]{{\color{orange}[Joe] #1}}
\newcommand{\fuzz}{\textsc{Fuzz}\xspace}
\newcommand{\pinq}{\textsc{PINQ}\xspace}
\newcommand{\dfuzz}{\textsc{DFuzz}\xspace}
\newcommand{\fuzzi}{\textsc{Fuzzi}\xspace}
\newcommand{\apRHL}{\textsc{apRHL}\xspace}
\newcommand{\apRHLplus}{\textsc{apRHL$^+$}\xspace}
\newcommand{\spanapRHL}{\textsc{span-apRHL}\xspace}
\title{\dduo: General-Purpose Dynamic Analysis for Differential Privacy}
\author[1]{Chike Abuah}
\author[1]{Alex Silence}
\author[2]{David Darais}
\author[1]{Joe Near}
\affil[1]{Computer Science Department, University of Vermont}
\affil[2]{Galois, Inc.}
\begin{document}
\maketitle
\thispagestyle{plain}
\pagestyle{plain}

\begin{abstract}
Differential privacy enables general statistical analysis of data with formal guarantees
of privacy protection at the individual level. Tools that assist data analysts with utilizing
differential privacy have frequently taken the form
of programming languages and libraries. However, many existing programming languages
designed for compositional verification of differential privacy impose significant burden on the programmer (in the form of complex type annotations). Supplementary library support for privacy analysis built on top of existing general-purpose languages has been more usable, but incapable of
pervasive end-to-end enforcement of sensitivity analysis and privacy composition.

We introduce \dduo, a \emph{dynamic analysis} for enforcing differential privacy.
\dduo is usable by non-experts: its analysis is automatic and it requires no additional type annotations. \dduo can be implemented \emph{as a library} for existing programming languages; we present a reference implementation in Python which features moderate runtime overheads on realistic workloads. We include support for several data types, distance metrics and operations
which are commonly used in modern machine learning programs. We also provide initial support for
tracking the sensitivity of data transformations in popular Python libraries for data analysis.

We formalize the novel core of the \dduo system and prove it sound for sensitivity
analysis via a logical relation for metric preservation.
We also illustrate \dduo's usability and flexibility through various case studies which implement state-of-the-art machine learning algorithms.
\end{abstract}

\section{Introduction}

Differential privacy has achieved prominence over the past decade as a rigorous formal
foundation upon which diverse tools and mechanisms for performing private data analysis can be
built. The guarantee of differential privacy is that it protects privacy at the individual level: if the result of a differentially private query or operation on a dataset is publicly
released, any individual present in that dataset can claim \emph{plausible deniability}. This means that
any participating individual can deny the presence of their information in the dataset based on the query result, because differentially private queries introduce enough random noise to make the result indistinguishable from that of the same query run on a dataset which actually \emph{does not} contain the individual's information. Additionally, differential privacy guarantees are resilient against any form of \emph{linking attack} in the presence of \emph{auxiliary information} about individuals.

High profile tech companies such as Google have shown a commitment to differential privacy by developing projects such as RAPPOR \cite{rappor} as well as several open-source privacy-preserving technologies \cite{guevara_2019,guevara_2020,wilson2019}.
Facebook recently released an unprecedented social dataset, protected by differential privacy guarantees, which contains information regarding people who publicly shared and engaged with about 38 million unique URLs, as an effort to help researchers study social media's impact on democracy and the 2020 United States presidential election \cite{nayak_2020,kifer2020,king_persily_2020,evans646717,evans643747}. The US Census Bureau has also adopted differential privacy to safeguard the 2020 census results \cite{abowd2018}.

Both static and dynamic tools have been developed to help non-experts write differentially private programs.
Many of the static tools take the form of statically-typed programming languages, where
correct privacy analysis is built into the soundness of the type system. However, existing language-oriented tools for compositional verification of differential privacy impose significant burden on the programmer (in the form of additional type annotations) \cite{reed2010distance, gaboardi2013linear, near2019duet, de2019probabilistic, zhang2019fuzzi, Winograd-CortHR17, DBLP:journals/lmcs/BartheEHSS19, Barthe:POPL12,Barthe:TOPLAS:13, Barthe:LICS16, Sato:LICS19, DBLP:journals/pacmpl/AlbarghouthiH18, zhang2017lightdp, wang2019proving, bichsel2018dp, ding2018detecting, wang2020checkdp} (see Section~\ref{sec:related} for a longer discussion).

The best-known dynamic tool is \pinq \cite{mcsherry2009}, a dynamic analysis for sensitivity and privacy. It features an extensible system which allows non-experts
in differential privacy to execute SQL-like queries against relational databases. However, \pinq comes with
several restrictions that limit its applicability. For example, \pinq's expressiveness is limited to a subset of the SQL language
for relational databases. Methods in \pinq are assumed to be side-effect free, which is necessary
to preserve their privacy guarantee.

We introduce \dduo, a {dynamic analysis} for enforcing differential privacy. \dduo is usable by non-experts: its analysis is automatic and it requires no additional type annotations. \dduo can be implemented \emph{as a library} for existing programming languages; we present a reference implementation in Python.
Our goal in this work is to answer the following four questions, based on the limitations of \pinq:
\begin{itemize}[topsep=1mm,leftmargin=5mm]
  \item Can a PINQ-style dynamic analysis extend to base types in the
    programming language, to allow its use pervasively?
  \item Is the analysis sound in the presence of side effects?
  \item Can we use this style of analysis for complex algorithms like differentially private gradient descent?
  \item Can we extend the privacy analysis beyond pure $\epsilon$-differential privacy?
\end{itemize}

\noindent We answer all four questions in the affirmative, building on \pinq in the following ways:

\begin{itemize}[topsep=1mm,leftmargin=5mm]
  \item \dduo provides a dynamic analysis for base types in a general purpose language
 (Python). \dduo supports general language operations, such as mapping arbitrary
 functions over lists, and tracks the sensitivity (stability) and privacy throughout.
  \item Methods in \dduo are not required to be side-effect free and allow
  programmers to mutate references inside functions which manipulate sensitive values.
  \item \dduo supports for various notions of sensitivity and arbitrary distance metrics (including $L_1$ and $L_2$ distance).
  \item \dduo is capable of leveraging advanced privacy
  variants such as ({{\color{\colorMATH}\ensuremath{\mathit{\epsilon , \delta }}}}) and R\'enyi differential privacy.
\end{itemize}

\noindent Privacy analysis is reliant on \emph{sensitivity} analysis, which determines the scale of noise an analyst must add to values in order to achieve any level of privacy. Dynamic analysis for differential privacy is thus a dual challenge:

\paragraph{Dynamic sensitivity analysis.} Program sensitivity is a (hyper)property quantified over two runs of a program with related inputs (sources). A major challenge for dynamic sensitivity analysis is the ability to bound sensitivity, ensuring that the metric preservation property is satisfied, by only observing \emph{a single run} of the program. In addition, an analysis which is performed on a specific input to the program must generalize to future possible arbitrary inputs.

The key insight to our solution is attaching sensitivity environments and distance metric information to \emph{values} rather than variables. Our approach provides a sound upper bound on global sensitivity even in the presence of side effects, conditionals, and higher-order functions.
We present a proof using a \emph{step-indexed logical relation} which shows that our sensitivity analysis is sound.

\paragraph{Dynamic privacy analysis.} To implement a dynamic privacy analysis, we leverage prior work on privacy filters and odometers~\cite{RogersVRU16}. This work, originally designed for the adaptive choice of privacy parameters, can also be used as part of a dynamic analysis for privacy analysis. We view each application of a privacy mechanism (e.g. the Laplace mechanism) as a \emph{global privacy effect} on total privacy cost, and use privacy filters and odometers to track total privacy cost.

We implemented these features in a Python prototype of \dduo via object proxies and other pythonic idioms. We implement several case studies to showcase these features and demonstrate the usage of \dduo in practice. We also provide integrations with several popular Python libraries for data and privacy analysis.

\paragraph{Contributions.}
In summary, this paper makes the following contributions:

\begin{itemize}[label=\textbf{-},leftmargin=1em]\item  We introduce \dduo, a \emph{dynamic} analysis for enforcing differential privacy, and a reference implementation as a Python library \footnote{The (non-anonymized) implementation is available to reviewers on request}.
\item  We formalize a subset of \dduo in a core language model, and prove the soundness of \dduo's dynamic sensitivity analysis (as encoded in the model) using a step-indexed logical relation.
\item  We present several case studies demonstrating the use of \dduo to build practical, verified Python implementations of complex differentially private algorithms.
\end{itemize}

The rest of the paper is organized as follows. First we provide some background knowledge regarding the field of differential privacy (Section~\ref{sec:background}). We provide an overview of our work and the necessary tradeoffs (Section~\ref{sec:overview}). We illustrate the usefulness and power of \dduo through some worked examples (Section~\ref{sec:dduo-example}). We then discuss some of the nuances of dynamic sensitivity (Section~\ref{sec:sensitivity}) and dynamic privacy tracking (Section~\ref{sec:privacy}). We present the formalization of \dduo and prove the soundness of our sensitivity analysis (Section~\ref{sec:formalism}). We provide several case studies demonstrating the usefulness of \dduo in practice (Section~\ref{sec:case}). Finally we outline related work (Section~\ref{sec:related}) and conclude (Section~\ref{sec:conclusion}).

\section{Background}
\label{sec:background}

\paragraph{Differential Privacy.}
Differential privacy is a formal notion of privacy; certain algorithms (called \emph{mechanisms}) can be said to \emph{satisfy} differential privacy. Intuitively, the idea behind a differential privacy mechanism is that: given inputs which differ in the data of a single individual, the mechanism should return statistically indistinguishable answers.
This means that the data of any one individual should not have any significant effect on the outcome of the mechanism, effectively protecting privacy on the individual level. Formally, differential privacy is parameterized by the privacy parameters {{\color{\colorMATH}\ensuremath{\mathit{\epsilon , \delta }}}} which control the strength of the guarantee.

When we say "neighboring" inputs, this implies two inputs that differ in the information of a single individual. However, formally we can defer to some general \emph{distance metric} which may take several forms. We then say that according to the distance metric, the distance between two databases must have an upper bound of 1. Variation of the distance metric has led to several other useful, non-standard forms of differential privacy in the literature.

\begin{definition}[Differential privacy]
  Given a distance metric {{\color{\colorMATH}\ensuremath{\mathit{d_{A} \in  A \times  A \rightarrow  {\mathbb{R}}}}}}, a randomized algorithm (or \emph{mechanism}) {{\color{\colorMATH}\ensuremath{\mathit{{\mathcal{M}} \in  A \rightarrow  B }}}} satisfies ({{\color{\colorMATH}\ensuremath{\mathit{\epsilon , \delta }}}})-differential privacy if for all {{\color{\colorMATH}\ensuremath{\mathit{x, x^{\prime} \in  A }}}} such that {{\color{\colorMATH}\ensuremath{\mathit{d_{A}(x, x^{\prime}) \leq  1}}}} and all possible sets {{\color{\colorMATH}\ensuremath{\mathit{S \subseteq  B }}}} of outcomes, {{\color{\colorMATH}\ensuremath{\mathit{ {\text{Pr}}[{\mathcal{M}}(x) \in  S] \leq  e^{\epsilon } {\text{Pr}}[{\mathcal{M}}(x^{\prime}) \in  S] + \delta  }}}}.
\end{definition}

Differential privacy is \emph{compositional}: running two mechanisms {{\color{\colorMATH}\ensuremath{\mathit{{\mathcal{M}}_{1}}}}} and {{\color{\colorMATH}\ensuremath{\mathit{{\mathcal{M}}_{2}}}}} with privacy costs of {{\color{\colorMATH}\ensuremath{\mathit{(\epsilon _{1}, \delta _{1})}}}} and {{\color{\colorMATH}\ensuremath{\mathit{(\epsilon _{2}, \delta _{2})}}}} respectively has a total privacy cost of {{\color{\colorMATH}\ensuremath{\mathit{(\epsilon _{1}+\epsilon _{2}, \delta _{1}+\delta _{2})}}}}.
\emph{Advanced composition}~\cite{dwork2014algorithmic} improves on this composition bound for iterative algorithms; several variants of differential privacy (e.g. R\'enyi differential privacy~\cite{Mironov17} and zero-concentrated differential privacy~\cite{bun2016concentrated}) have been developed that improve the bound even further. Importantly, sequential composition theorems for differential privacy do not necessarily allow the privacy parameters to be chosen \emph{adaptively}, which presents a special challenge in our setting---we discuss this issue in Section~\ref{sec:privacy}.


\section{Overview of \dduo}
\label{sec:overview}

\dduo is a \emph{dynamic} analysis for enforcing differential privacy. Our approach does not require static analysis of programs, and allows \dduo to be implemented as a library for programming languages like Python. \dduo's dynamic analysis has complete access to run-time information, so it does not require the programmer to write any additional type annotations---in many cases, \dduo can verify differential privacy for essentially {unmodified} Python programs (see the case studies in Section~\ref{sec:case}). As a Python library, \dduo is easily integrated with popular libraries like Pandas and NumPy.

\paragraph{Threat model.}
We assume an ``honest but fallible'' programmer---that is, the programmer \emph{intends} to produce a differentially private program, but may unintentionally introduce bugs. We assume that the programmer is \emph{not} intentionally attempting to subvert \dduo's enforcement approach. Our reference implementation is embedded in Python, an inherently dynamic language with run-time features like reflection. In this setting, a malicious programmer or privacy-violating third-party libraries can bypass our dynamic monitor and extract sensitive information directly.
Data-independent exceptions can be safely used in our system, however our model must
explicitly avoid data-dependent exceptions such as division-by-zero errors. Terminated programs can be rerun safely (while consuming the privacy budget) because our analysis is independent of any sensitive information (our metatheory implies that sensitivity of a value is itself not sensitive).
We also do not address side-channels, including execution time. Like existing enforcement approaches (PINQ, OpenDP, Diffprivlib), \dduo is intended as a tool to help well-intentioned programmers produce correct differentially private algorithms.

\paragraph{Soundness of the analysis.}
We formalize our dynamic sensitivity analysis and prove its soundness in Section~\ref{sec:formalism}. Our formalization includes the most challenging features of the dynamic setting---conditionals and side effects---and provides evidence that our Python implementation will be effective in catching privacy bugs in real programs. \dduo relies on existing work on privacy filters and odometers (discussed in Section~\ref{sec:privacy}), whose soundness has been previously established, for tracking privacy cost.

\section{\dduo by Example}
\label{sec:dduo-example}

This section introduces the \dduo system via examples written using
our reference Python implementation.

\paragraph{Data Sources.}
Data sources are wrappers around sensitive data that enable tracking of privacy information in the \dduo python library.
Each data source is associated with an identifying string, such as the name of the input file the data was read from.
Data sources can be created manually by attaching an identifying string (such as a filename) to a raw value (such as a vector).
Or, data sources be created automatically upon loading data through \dduo's custom-wrapped third party APIs, such as pandas. Note that our API can be easily modified to account for initial sensitivities greater than 1 when users have multiple datapoints in the input data.

\vspace{1em}
\begin{minted}{python}
from dduo import pandas as pd
df = pd.read_csv("data.csv")
df
\end{minted}
\begin{minted}[frame=lines,bgcolor=mygray]{text}
Sensitive(<'DataFrame'>, |$\{ \textit{data.csv} \mapsto  1\} $|, |$L_{\infty }$|)
\end{minted}

A \wrapper value is returned. \wrapper values represent sensitive information that cannot be viewed by the analyst. When a \wrapper value is printed out, the analyst sees (1) the type of the value, (2) its \emph{sensitivity environment}, and (3) its \emph{distance metric}. The latter two components are described next. The analyst is \emph{prevented} from viewing the value itself.



\paragraph{Sensitivity \& distance metrics.}
Function sensitivity is a scalar value which represents how much a change in a function's input will change the function's output.
For example, the binary addition function {{\color{\colorMATH}\ensuremath{\mathit{f(x,y) = x + y}}}} is 1-sensitive in both {{\color{\colorMATH}\ensuremath{\mathit{x}}}} and {{\color{\colorMATH}\ensuremath{\mathit{y}}}}, because changing either input by {{\color{\colorMATH}\ensuremath{\mathit{n}}}} will change the sum by {{\color{\colorMATH}\ensuremath{\mathit{n}}}}. The function {{\color{\colorMATH}\ensuremath{\mathit{f(x) = x + x}}}}, on the other hand, is 2-sensitive in its argument {{\color{\colorMATH}\ensuremath{\mathit{x}}}}, because changing {{\color{\colorMATH}\ensuremath{\mathit{x}}}} by {{\color{\colorMATH}\ensuremath{\mathit{n}}}} changes the function's output by {{\color{\colorMATH}\ensuremath{\mathit{2n}}}}.
Sensitivity is key to differential privacy because it is directly proportional to the amount of noise we must add to the output of a function to make it private.


\dduo tracks the sensitivity of a value to changes in the program's
inputs using a \emph{sensitivity environment} mapping input data
sources to sensitivities. Our example program returned a \wrapper
value with a sensitivity environment of {{\color{\colorMATH}\ensuremath{\mathit{\{ {\textit{data.csv}} \mapsto  1\} }}}}, indicating
that the underlying value is 1-sensitive in the data contained in
{{\color{\colorMATH}\ensuremath{\mathit{{\textit{data.csv}}}}}}.
The \dduo library tracks and updates the sensitivity environments of
\wrapper objects as operations are applied to them. For example,
adding a constant value to the elements of the DataFrame results in no
change to the sensitivity environment.



\vspace{5pt}
\begin{minted}{python}
df + 5 # no change to sensitivity environment
\end{minted}

\vspace{-1em}
\begin{minted}[frame=lines,bgcolor=mygray]{text}
Sensitive(<'DataFrame'>, |$\{ \textit{data.csv} \mapsto  1\} $|, |$L_{\infty }$|)
\end{minted}

\noindent Adding the DataFrame to \emph{itself} doubles the
sensitivity, in the same way as the function {{\color{\colorMATH}\ensuremath{\mathit{f(x) = x + x}}}}.

\vspace{5pt}
\begin{minted}{python}
df + df  # doubles the sensitivity
\end{minted}

\vspace{-1em}
\begin{minted}[frame=lines,bgcolor=mygray]{text}
Sensitive(<'DataFrame'>, |$\{ \textit{data.csv} \mapsto  2\} $|, |$L_{\infty }$|)
\end{minted}

\noindent Finally, multiplying the DataFrame by a constant scales the
sensitivity, and multiplying the DataFrame by \emph{itself} results in
\emph{infinite} sensitivity.

\vspace{5pt}
\begin{minted}{python}
( df * 5, df * df)
\end{minted}

\vspace{-1em}
\begin{minted}[frame=lines,bgcolor=mygray]{text}
( Sensitive(<'DataFrame'>, |$\{ \textit{data.csv} \mapsto  5\} $|,  |$L_{\infty }$|),
  Sensitive(<'DataFrame'>, |$\{ \textit{data.csv} \mapsto  \infty \} $|, |$L_{\infty }$|) )
\end{minted}



The \emph{distance metric} component of a \wrapper value describes
\emph{how} to measure sensitivity. For simple numeric functions like
{{\color{\colorMATH}\ensuremath{\mathit{f(x) = x + x}}}}, the distance between two possible inputs {{\color{\colorMATH}\ensuremath{\mathit{x}}}} and {{\color{\colorMATH}\ensuremath{\mathit{x^{\prime}}}}}
is simply {{\color{\colorMATH}\ensuremath{\mathit{|x - x^{\prime}|}}}} (this is called the \emph{cartesian metric}). For
more complicated data structures (e.g. DataFrames), calculating the
distance between two values is more involved. The {{\color{\colorMATH}\ensuremath{\mathit{L_{\infty }}}}} metric used
in our example calculates the distance between two DataFrames by
measuring how many rows are different (this is one standard way of
defining ``neighboring databases'' in differential privacy). \dduo's
handling of distance metrics is detailed in
Section~\ref{sec:distance-metrics}.


\paragraph{Privacy.}
\dduo also tracks the \emph{privacy} of computations. To achieve
differential privacy, programs add noise to sensitive values. The
Laplace mechanism described earlier is one basic mechanism for
achieving differential privacy by adding noise drawn from the Laplace
distribution (\dduo provides a number of basic mechanisms, including
the Gaussian mechanism). The following expression counts the number of
rows in our example DataFrame and uses the Laplace mechanism to
achieve {{\color{\colorMATH}\ensuremath{\mathit{\epsilon }}}}-differential privacy, for {{\color{\colorMATH}\ensuremath{\mathit{\epsilon  = 1.0}}}}.

\vspace{1em}
\begin{minted}{python}
dduo.laplace(df.shape[0], |$\epsilon $|=1.0)
\end{minted}

\vspace{-1em}
\begin{minted}[frame=lines,bgcolor=mygray]{text}
9.963971319623278
\end{minted}

\noindent The result is a \emph{regular Python value}---the analyst is
free to view it, write it to a file, or do further computation on
it. Once the correct amount of noise has been added, the principle of
\emph{post-processing} applies, and so \dduo no longer needs to track
the sensitivity or privacy cost of operations on the value.

When the Laplace mechanism is used multiple times, their privacy costs
\emph{compose} (i.e. the {{\color{\colorMATH}\ensuremath{\mathit{\epsilon }}}}s ``add up'' as described earlier). \dduo
tracks \emph{total} privacy cost using objects called \emph{privacy
  odometers}~\cite{RogersVRU16}. The analyst can interact with a
privacy odometer object to learn the total privacy cost of a complex
computation.

\vspace{1em}
\begin{minted}{python}
with dduo.EpsOdometer() as odo:
  _ = dduo.laplace(df.shape[0], |$\epsilon $| = 1.0)
  _ = dduo.laplace(df.shape[0], |$\epsilon $| = 1.0)
  print(odo)
\end{minted}

\vspace{-1em}
\begin{minted}[frame=lines,bgcolor=mygray]{text}
Odometer_|$\epsilon $|(|$\{ \textit{data.csv} \mapsto  2.0\} $|)
\end{minted}

\noindent Printing the odometer's value allows the analyst to view the
privacy cost of the program with respect to each of the data sources
used in the computation. In this example, two differentially private
approximations of the number of rows in the dataframe {{\color{\colorMATH}\ensuremath{\mathit{{\textit{df}}}}}} are
computed, each with a privacy cost of {{\color{\colorMATH}\ensuremath{\mathit{\epsilon  = 1.0}}}}. The total privacy
cost of running the program is therefore {{\color{\colorMATH}\ensuremath{\mathit{2\mathord{\cdotp }\epsilon  = 2.0}}}}.

\dduo also allows the analyst to place upper bounds on total privacy
cost (i.e. a privacy \emph{budget}) using privacy
\emph{filters}~\cite{RogersVRU16}. Privacy odometers and filters are
discussed in detail in Section~\ref{sec:privacy}.













\section{Dynamic Sensitivity Tracking}
\label{sec:sensitivity}

\dduo implements a \emph{dynamic sensitivity analysis} by wrapping
values in \wrapper objects and calculating sensitivities
as operations are performed on these objects. Type systems for
sensitivity~\cite{reed2010distance,gaboardi2013linear} construct a sensitivity environment for each
program expression; in the static analysis setting, a sensitivity
environment records the expression's sensitivity with respect to each
of the variables currently in scope.

\dduo attaches sensitivity environments to \emph{values} at runtime:
each \wrapper object holds both a value and its sensitivity
environment. As described earlier, \dduo's sensitivity environments
record a value's sensitivity with respect to each of the program's
data sources.
Formally, the sensitivity of a single-argument function {{\color{\colorMATH}\ensuremath{\mathit{f}}}} in its input is defined as:
\begingroup\color{\colorMATH}\begin{gather*}{\text{sens}}(f) \triangleq  {\text{argmax}}_{x,y}\Big(\frac{d(f(x), f(y))}{d(x,y)}\Big) \end{gather*}\endgroup
\noindent Where {{\color{\colorMATH}\ensuremath{\mathit{d}}}} is a \emph{distance metric} over the values {{\color{\colorMATH}\ensuremath{\mathit{x}}}}
and {{\color{\colorMATH}\ensuremath{\mathit{y}}}} could take (distance metrics are discussed in
Section~\ref{sec:distance-metrics}). Thus, a sensitivity environment
{{\color{\colorMATH}\ensuremath{\mathit{\{ a \mapsto  1\} }}}} means that if the value of the program input {{\color{\colorMATH}\ensuremath{\mathit{a}}}} changes by
{{\color{\colorMATH}\ensuremath{\mathit{n}}}}, then the value of {{\color{\colorMATH}\ensuremath{\mathit{f(a)}}}} will change by at most {{\color{\colorMATH}\ensuremath{\mathit{n}}}}.

\subsection{Bounding the Sensitivity of Operations}

Operations on \wrapper objects are defined to perform the same
operation on the underlying values, and \emph{also} construct a new
sensitivity environment for the operation's result. For example,
\dduo's \texttt{\_\_add\_\_} operation sums both the underlying values
\emph{and} their sensitivity environments:
\begin{minted}{python}
def __add__(self, other):
    assert self.metric == other.metric
    return dduo.Sensitive(self.value + other.value,
                            self.senv + other.senv,
                            self.metric)
\end{minted}

\noindent The sum of two sensitivity environments is defined as the
element-wise sum of their items. For example:
\begingroup\color{\colorMATH}\begin{gather*} \{ a \mapsto  2, b \mapsto  1 \}  + \{  b \mapsto  3, c \mapsto  5 \}  = \{  a \mapsto  2, b \mapsto  4, c \mapsto  5 \}  \end{gather*}\endgroup
The \dduo library provides sensitivity-aware versions of Python's
basic numeric operations (formalized in
Section~\ref{sec:formalism}). We have also defined
sensitivity-aware versions of commonly-used library functions,
including the Pandas functions used in Section~\ref{sec:dduo-example},
and subsets of NumPy and Scikit-learn.

\subsection{Distance Metrics}
\label{sec:distance-metrics}

At the core of the concept of sensitivity is the notion of distance:
how far apart we consider two information sources to be from each
other. For scalar values, the following two distance metrics are often
used:

\begin{itemize}
\item Cartesian (absolute difference) metric: {{\color{\colorMATH}\ensuremath{\mathit{d(x, y) = |x - y|}}}}
\item Discrete metric: {{\color{\colorMATH}\ensuremath{\mathit{d(x, y) = 0\hspace*{0.33em}{\textit{if}}\hspace*{0.33em}x = y; 1\hspace*{0.33em}{\textit{otherwise}}}}}}
\end{itemize}

For more complex structures---like lists and dataframes---we can use
distance metrics on \emph{vectors}. Two commonly-used metrics for
vectors {{\color{\colorMATH}\ensuremath{\mathit{x}}}} and {{\color{\colorMATH}\ensuremath{\mathit{y}}}} of equal length are:

\begin{itemize}
\item {{\color{\colorMATH}\ensuremath{\mathit{L_{1}(d_{i})}}}} metric: {{\color{\colorMATH}\ensuremath{\mathit{d(x, y) = \sum \limits_{x_{i}, y_{i} \in  x, y} d_{i}(x_{i}, y_{i})}}}}
\item {{\color{\colorMATH}\ensuremath{\mathit{L_{2}(d_{i})}}}} metric: {{\color{\colorMATH}\ensuremath{\mathit{d(x, y) = \sqrt {\sum \limits_{x_{i}, y_{i} \in  x, y} d_{i}(x_{i}, y_{i})^{2}}}}}}
\end{itemize}

\noindent Both metrics are parameterized by {{\color{\colorMATH}\ensuremath{\mathit{d_{i}}}}}, a metric for the
vector's elements. In addition to these two, we use the shorthand
{{\color{\colorMATH}\ensuremath{\mathit{L_{\infty }}}}} to mean {{\color{\colorMATH}\ensuremath{\mathit{L_{1}(d)}}}}, where {{\color{\colorMATH}\ensuremath{\mathit{d}}}} is the cartesian metric defined
above. The {{\color{\colorMATH}\ensuremath{\mathit{L_{\infty }}}}} metric works for any space with equality
(e.g. strings), and measures the \emph{number of elements where {{\color{\colorMATH}\ensuremath{\mathit{x}}}}
  and {{\color{\colorMATH}\ensuremath{\mathit{y}}}} differ}.

The definition of differential privacy is parameterized by a distance
metric that is intended to capture the idea of two inputs that
\emph{differ in one individual's data}. Database-oriented algorithms
typically assume that each individual contributes exactly one row to
the database, and use the {{\color{\colorMATH}\ensuremath{\mathit{L_{\infty }}}}} metric to define neighboring
databases (as we did in Section~\ref{sec:dduo-example}).

Distance metrics can be manipulated manually through operations such as clipping, a technique commonly employed in differentially private machine learning.
\dduo tracks distance metrics for \wrapper information, which can allow for automatic conservation of the
privacy budget while providing more accurate query analysis.




Lists and arrays are compared by one of the {{\color{\colorMATH}\ensuremath{\mathit{L_{1}}}}}, {{\color{\colorMATH}\ensuremath{\mathit{L_{2}}}}}, or {{\color{\colorMATH}\ensuremath{\mathit{L_{\infty }}}}} distance metrics. The choice of distance metric is important when defining sensitivity and thus privacy. For example, the Laplace mechanism can only be used with the {{\color{\colorMATH}\ensuremath{\mathit{L_{1}}}}} metric, while the Gaussian mechanism can be used with either {{\color{\colorMATH}\ensuremath{\mathit{L_{1}}}}} or {{\color{\colorMATH}\ensuremath{\mathit{L_{2}}}}}.

\subsection{Conditionals \& Side Effects}

Conditionals and other branching structures are challenging for any
sensitivity analysis, but they present a particular challenge for our
dynamic analysis. Consider the following conditional:

\vspace{1em}
\begin{minted}[escapeinside=||,mathescape=true]{python}
if df.shape[0] == 10:
  return df.shape[0]
else:
  return df.shape[0] * 10000
\end{minted}

Here, the two branches have \emph{different} sensitivities (the
\emph{else} branch is 10,000 times more sensitive in its data sources
than the \emph{then} branch). Static sensitivity analyses handle this
situation by taking the maximum of the two branches' sensitivities
(i.e. they assume the worst-case branch is executed), but this
approach is not possible in our dynamic analysis.

In addition, special care must be taken when a sensitive value appears
in the guard position (as in our example). Static analyses typically
scale the branches' sensitivity by the sensitivity of the guard; in
practice, this approach results in \emph{infinite sensitivity for
  conditionals with a sensitive guard}.

To retain soundness in our dynamic analysis, \dduo requires that
\emph{conditional guards contain no sensitive values}. A run-time
error is thrown if \dduo finds a sensitive value in the guard position
(as in our example above). Disallowing sensitive guards makes it
possible to ignore branches that are not executed: the guard's value
remains the same under neighboring program inputs, so the program
follows the same branch for neighboring executions. This approach does
not limit the set of useful programs we can write, since conditionals
with sensitive guards yield infinite sensitivities even under a
precise static analysis.

Since \dduo attaches sensitivity environments to \emph{values}
(instead of variables), the use of side effects does not affect the
soundness of the analysis. When a program variable is updated to
reference a new value, that value's sensitivity environment remains
attached. \dduo handles many common side-effect-based patterns used in
Python this way; for example, \dduo correctly infers that the
following program results in the variable \texttt{total} holding a
value that is 20 times more sensitive than \texttt{df.shape[0]}.

\vspace{1em}
\begin{minted}[escapeinside=||,mathescape=true]{python}
total = 0
for i in range(20):
  total = total + df.shape[0]
\end{minted}

For side effects, our dynamic analysis is more capable than type-based
static analysis, due to the additional challenges arising in the
static setting (e.g. aliasing). We have formalized the way \dduo
handles side effects and conditionals, and proved the soundness of our
sensitivity analysis; our formalization appears in
Section~\ref{sec:formalism}.






\section{Dynamic Privacy Tracking}
\label{sec:privacy}

\dduo tracks privacy cost \emph{dynamically}, at runtime. Dynamic
privacy tracking is challenging because the dynamic analysis has no
visibility into code that is \emph{not executed}. For example,
consider the following conditional:

\begin{minted}{python}
if dduo.gauss(|$\epsilon $|=1.0, |$\delta $|=1e-5, x) > 5:
  print(dduo.gauss(|$\epsilon $|=1.0, |$\delta $|=1e-5, y))
else:
  print(dduo.gauss(|$\epsilon $|=100000000000.0, |$\delta $|=1e-5, y))
\end{minted}

\noindent The executed branch of this conditional depends on the
result of the first call to \mintinline{python}{dduo.gauss}, which is
non-deterministic. The two branches use different privacy parameters
for the remaining calls to \mintinline{python}{dduo.gauss}; in other
words, the privacy parameter for the second use of the Gaussian
mechanism is chosen \emph{adaptively}, based on the results of the
first use. Sequential composition theorems for differential
privacy~\cite{dwork2014algorithmic} are typically stated in terms of
\emph{fixed} (i.e. non-adaptive) privacy parameters, and do not apply
if the privacy parameters are chosen adaptively.

A static analysis of this program will consider \emph{both} branches,
and most analyses will produce an upper bound on the program's privacy
cost by combining the two (i.e. taking the maximum of the two {{\color{\colorMATH}\ensuremath{\mathit{\epsilon }}}}
values). This approach avoids the issue of adaptively-chosen privacy
parameters.

A dynamic analysis, by contrast, \emph{cannot} consider both branches,
and must bound privacy cost by analyzing \emph{only} the branch that
is executed. Sequential composition does not apply directly when
privacy parameters are chosen adaptively, so ignoring the non-executed
branch in a dynamic analysis of privacy would be \emph{unsound}.

\subsection{Privacy Filters \& Odometers}

Privacy \emph{filters} and \emph{odometers} were originally developed
by Rogers et al.~\cite{RogersVRU16} specifically to address the
setting in which privacy parameters are selected
adaptively. Winograd-Cort et al.~\cite{Winograd-CortHR17} used privacy
filters and odometers as part of the Adaptive Fuzz framework, which
integrates both dynamic analysis (for composing privacy mechanisms)
and static analysis (for bounding the cost of individual
mechanisms). Recently, Feldman and Zrnic~\cite{feldman2020individual}
developed filters and odometers for R\'enyi differential
privacy~\cite{Mironov17}.

\emph{Privacy odometers} can be used to obtain a running upper bound
on total privacy cost at any point in the sequence of adaptive
mechanisms, and to obtain an overall total at the end of the
sequence. A function {{\color{\colorMATH}\ensuremath{\mathit{{{\color{\colorSYNTAX}\texttt{COMP}}}_{\delta _{g}}\mathrel{:} {\mathbb{R}}_{\geq 0}^{2k} \rightarrow  {\mathbb{R}} \cup  \{ \infty \} }}}} is called a
\emph{valid privacy odometer}~\cite{RogersVRU16} for a sequence of
mechanisms {{\color{\colorMATH}\ensuremath{\mathit{{\mathcal{M}}_{1}, \ldots , {\mathcal{M}}_{k}}}}} if for all (adaptively-chosen) settings of
{{\color{\colorMATH}\ensuremath{\mathit{(\epsilon _{1}, \delta _{1}), \ldots , (\epsilon _{k}, \delta _{k})}}}} for the individual mechanisms in the sequence,
their composition satisfies {{\color{\colorMATH}\ensuremath{\mathit{({{\color{\colorSYNTAX}\texttt{COMP}}}_{\delta _{g}}(\mathord{\cdotp }), \delta _{g})}}}}-differential
privacy. In other words, {{\color{\colorMATH}\ensuremath{\mathit{{{\color{\colorSYNTAX}\texttt{COMP}}}_{\delta _{g}}(\mathord{\cdotp })}}}} returns a value for {{\color{\colorMATH}\ensuremath{\mathit{\epsilon }}}}
that upper-bounds the privacy cost of the adaptive sequence of
mechanisms.
A valid privacy odometer for sequential composition in
{{\color{\colorMATH}\ensuremath{\mathit{(\epsilon , \delta )}}}}-differential privacy can be defined as follows (Rogers et
al.~\cite{RogersVRU16}, Theorem 3.6):
\begingroup\color{\colorMATH}\begin{gather*} {{\color{\colorSYNTAX}\texttt{COMP}}}_{\delta _{g}}(\epsilon _{1}, \delta _{1}, \ldots , \epsilon _{k}, \delta _{k}) =
    \begin{cases}
      \infty  & {\textit{if}}\hspace*{0.33em} \sum \limits_{i=1}^{k} \delta _{i} > \delta _{g} \\
      \sum \limits_{i=1}^{k} \epsilon _{i} & {\textit{otherwise}} \\
    \end{cases}
\end{gather*}\endgroup

\emph{Privacy filters} allow the analyst to place an upper bound
{{\color{\colorMATH}\ensuremath{\mathit{(\epsilon _{g}, \delta _{g})}}}} on the desired privacy cost, and halt the computation
immediately if the bound is violated. A function {{\color{\colorMATH}\ensuremath{\mathit{{{\color{\colorSYNTAX}\texttt{COMP}}}_{\epsilon _{g}, \delta _{g}}\mathrel{:}
{\mathbb{R}}_{\geq 0}^{2k} \rightarrow  \{ {{\color{\colorSYNTAX}\texttt{HALT}}}, {{\color{\colorSYNTAX}\texttt{CONT}}}\} }}}} is called a \emph{valid privacy
  filter}~\cite{RogersVRU16} for a sequence of mechanisms {{\color{\colorMATH}\ensuremath{\mathit{{\mathcal{M}}_{1}, \ldots , {\mathcal{M}}_{k}}}}}
if for all (adaptively-chosen) settings of {{\color{\colorMATH}\ensuremath{\mathit{(\epsilon _{1}, \delta _{1}), \ldots , (\epsilon _{k}, \delta _{k})}}}} for
the individual mechanisms in the sequence, {{\color{\colorMATH}\ensuremath{\mathit{{{\color{\colorSYNTAX}\texttt{COMP}}}_{\epsilon _{g}, \delta _{g}}(\epsilon _{1}, \delta _{1},
\ldots , \epsilon _{k}, \delta _{k})}}}} outputs {{\color{\colorMATH}\ensuremath{\mathit{{{\color{\colorSYNTAX}\texttt{CONT}}}}}}} only if the sequence satisfies {{\color{\colorMATH}\ensuremath{\mathit{(\epsilon _{g},
\delta _{g})}}}}-differential privacy (otherwise, it outputs {{\color{\colorMATH}\ensuremath{\mathit{{{\color{\colorSYNTAX}\texttt{HALT}}}}}}} for the
first mechanism in the sequence that violates the privacy cost bound).
A valid privacy filter for sequential composition in {{\color{\colorMATH}\ensuremath{\mathit{(\epsilon ,
\delta )}}}}-differential privacy can be defined as follows (Rogers et
al.~\cite{RogersVRU16}, Theorem 3.6):
\begingroup\color{\colorMATH}\begin{gather*} {{\color{\colorSYNTAX}\texttt{COMP}}}_{\epsilon _{g}, \delta _{g}}(\epsilon _{1}, \delta _{1}, \ldots , \epsilon _{k}, \delta _{k}) =  \hspace*{1.00em} \hspace*{1.00em} \hspace*{1.00em}\hspace*{1.00em} \hspace*{1.00em} \hspace*{1.00em}\hspace*{1.00em} \hspace*{1.00em} \hspace*{1.00em}\\
    \begin{cases}
      {{\color{\colorSYNTAX}\texttt{HALT}}} & {\textit{if}}\hspace*{0.33em} \sum \limits_{i=1}^{k} \delta _{i} > \delta _{g}\hspace*{0.33em}{\textit{or}}\hspace*{0.33em}\sum \limits_{i=1}^{k} \epsilon _{i} > \epsilon _{g} \\
      {{\color{\colorSYNTAX}\texttt{CONT}}} & {\textit{otherwise}} \\
    \end{cases}
\end{gather*}\endgroup

It is clear from these definitions that the odometer and filter for
sequential composition under {{\color{\colorMATH}\ensuremath{\mathit{(\epsilon , \delta )}}}}-differential privacy yield the
same bounds on privacy loss as the standard theorem for sequential
composition~\cite{dwork2014algorithmic} (i.e. there is no ``cost'' to
picking the privacy parameters adaptively).

Rogers et al.~\cite{RogersVRU16} also define filters and odometers for
\emph{advanced composition} under {{\color{\colorMATH}\ensuremath{\mathit{(\epsilon , \delta )}}}}-differential privacy
(\cite{RogersVRU16}, \S 5 and \S 6); in this case, there \emph{is} a
cost. In exchange for the ability to set privacy parameters
adaptively, filters and odometers for advanced composition have
slightly worse constants than the standard advanced composition
theorem~\cite{dwork2014algorithmic} (but are asymptotically the same).


\subsection{Filters \& Odometers in \dduo}

\dduo's API allows the programmer to explicitly create privacy
odometers and filters, and make them active for a specific part of the
program (using Python's \emph{with} syntax). When an odometer is
active, it records a running total of the total privacy cost, and it
can be queried to return this information to the programmer.

\vspace{1em}
\begin{minted}{python}
with dduo.EdOdometer(delta = 10e-6) as odo:
  _ = dduo.gauss(df.shape[0], |$\epsilon $| = 1.0, |$\delta $| = 10e-6)
  _ = dduo.gauss(df.shape[0], |$\epsilon $| = 1.0, |$\delta $| = 10e-6)
  print(odo)
\end{minted}

\vspace{-1em}
\begin{minted}[frame=lines,bgcolor=mygray]{text}
Odometer_|$(\epsilon ,\delta )$|(|$\{ \textit{data.csv} \mapsto  (2.0, 20^{-6})\} $|)
\end{minted}

\noindent When a filter is active, it tracks the privacy cost for
individual mechanisms, and halts the program if the filter's upper
bound on privacy cost is violated.

\begin{minted}{python}
with dduo.EdFilter(|$\epsilon $| = 1.0, |$\delta $| = 10e-6) as odo:
  print('1:', dduo.gauss(df.shape[0], |$\epsilon $|=1.0, |$\delta $|=10e-6))
  print('2:', dduo.gauss(df.shape[0], |$\epsilon $|=1.0, |$\delta $|=10e-6))
\end{minted}

\vspace{-1em}
\begin{minted}[frame=lines,bgcolor=mygray]{text}
1: 10.5627
Traceback (most recent call last):
  ...
  dduo.PrivacyFilterException
\end{minted}

In addition to odometers and filters for sequential composition under
{{\color{\colorMATH}\ensuremath{\mathit{(\epsilon , \delta )}}}}-differential privacy (such as \texttt{EdFilter} and
\texttt{EdOdometer}), \dduo provides odometers and filters for
advanced composition (\texttt{AdvEdFilter} and
\texttt{AdvEdOdometer}) and R\'enyi differential privacy
(\texttt{RenyiFilter} and \texttt{RenyiOdometer}, which follow the results of Feldman and Zrnic~\cite{feldman2020individual}).

\subsection{Loops and Composition}

Iterative algorithms can be built in \dduo using Python's standard
looping constructs, and \dduo's privacy odometers and filters take
care of ensuring the correct form of composition. Parallel composition
is also available---via functional mapping. Advanced composition can be
achieved via special advanced composition filters and odometers
exposed in the \dduo API. For example, the following simple loop runs
the Laplace mechanism 20 times, and its total privacy cost is
reflected by the odometer:

\vspace{1em}
\begin{minted}{python}
with dduo.EpsOdometer() as odo:
  for i in range(20):
    dduo.laplace(df.shape[0], |$\epsilon $| = 1.0)
  print(odo)
\end{minted}

\vspace{-1em}
\begin{minted}[frame=lines,bgcolor=mygray]{text}
Odometer_|$\epsilon $|(|$\{ \textit{data.csv} \mapsto  20.0\} $|)
\end{minted}

\noindent To use advanced composition instead of sequential
composition, we simply replace the odometer with a different one:

\begin{minted}{python}
with dduo.AdvEdOdometer() as odo:
  for i in range(20):
    dduo.gauss(df.shape[0], |$\epsilon $| = 0.01, |$\delta $| = 0.001)
\end{minted}



\subsection{Mixing Variants of Differential Privacy}

The \dduo library includes support for pure {{\color{\colorMATH}\ensuremath{\mathit{\epsilon }}}}-differential privacy, {{\color{\colorMATH}\ensuremath{\mathit{(\epsilon , \delta )}}}}-differential privacy, and R\'enyi differential privacy (RDP). Programs may use all three variants together, convert between them, and compose mechanisms from each.

We demonstrate execution of a query while switched to the R\'enyi differential privacy variant using pythonic "with" syntax blocks. For programs that make extensive use of composition, this approach yields significant improvements in privacy cost. For example, the following program uses the Gaussian mechanism 200 times, using R\'enyi differential privacy for sequential composition; the total privacy cost is automatically converted into an {{\color{\colorMATH}\ensuremath{\mathit{(\epsilon , \delta )}}}}-differential privacy cost after the loop finishes.

\vspace{1em}
\begin{minted}[escapeinside=||,mathescape=true]{python}
with dduo.RenyiDP(1e-5):
  for x in range(200):
    noisy_count = dduo.renyi_gauss(|$\alpha $| = 10,
      |$\epsilon $|=0.2, df.shape[0])
dduo.print_privacy_cost()
\end{minted}

\vspace{-1em}
\begin{minted}[frame=lines,bgcolor=mygray]{text}
Odometer_|$(\epsilon ,\delta )$|(|$\{ \textit{data.csv} \mapsto  (41.28, 1e-05)\} $|)
\end{minted}

\section{Formal Description of Sensitivity Analysis}
\label{sec:formalism}

In \dduo we implement a novel dynamic analysis for {\textit{function
sensitivity}}, which is a relational (hyper)property quantified over
two runs of the program with arbitrary but related inputs.
In particular, our analysis computes function sensitivity---a two-run
property---after only observing {\textit{one}} execution of the program.
Only observing one execution poses challenges to the design of the
analysis, and significant challenges to the proof, all of which we
overcome.
To overcome this challenge in the design of the analysis, we first
disallow branching control flow which depends on any sensitive
inputs; this ensures that any two runs of the program being
considered for the purposes of privacy will take the same branch
observed by the dynamic analysis.
Second, we disallow sensitive input-dependent arguments to the
``scalar'' side of multiplication; this ensures that the dynamic
analysis' use of that argument in analysis results is identical for
any two runs of the program being considered for the purposes of
privacy.
Our dynamic analysis for function sensitivity is {\textit{sound}}---meaning that
the true sensitivity of a program is guaranteed to be equal or less
than the sensitivity reported by \dduo's dynamic monitor---and we
support this claim with a detailed proof.


\paragraph{Formalism Approach.}
We formalize the correctness of our dynamic analysis for function
sensitivity using a {\textit{step-indexed big-step semantics}} to describe the
dynamic analysis, a {\textit{step-indexed logical relation}} to describe the
meaning of function sensitivity, and a proof by induction and case
analysis on program syntax to show that dynamic analysis results
soundly predict function sensitivity.
A {\textit{step-indexed}} relation is a relation {{\color{\colorMATH}\ensuremath{\mathit{{\mathcal{R}} \in  A \rightarrow  B \rightarrow  {\text{prop}}}}}} whose
definition is stratified by a natural number index {{\color{\colorMATH}\ensuremath{\mathit{n}}}}, so for each
level {{\color{\colorMATH}\ensuremath{\mathit{n}}}} there is a new relation {{\color{\colorMATH}\ensuremath{\mathit{{\mathcal{R}}_{n}}}}}.
Typically, the relation {{\color{\colorMATH}\ensuremath{\mathit{{\mathcal{R}}_{0}}}}} is defined {{\color{\colorMATH}\ensuremath{\mathit{{\mathcal{R}}_{0}(x,y) \triangleq  {\text{true}}}}}}, and the
final relation of interest is {{\color{\colorMATH}\ensuremath{\mathit{\hat {\mathcal{R}} \triangleq  \bigcap _{n}{\mathcal{R}}_{n}}}}}, i.e., {{\color{\colorMATH}\ensuremath{\mathit{\hat {\mathcal{R}}(x,y) \iff    \forall n.\hspace*{0.33em}
{\mathcal{R}}_{n}(x,y)}}}}.
Step-indexing is typically used---as we do in our formalism---when the
definition of a relation would be not well founded in its absence.
The most common reason a relation definition might be not
well-founded is the use of self-reference without any decreasing
measure.
When a decreasing measure exists, self-reference leads to
well-founded recursion, however when a decreasing measure does not
exist, self-reference is not well-founded.
When using step-indexing, self-reference is allowed in the definition
of {{\color{\colorMATH}\ensuremath{\mathit{{\mathcal{R}}_{n}}}}}, but only for the relation at strictly lower levels, so
{{\color{\colorMATH}\ensuremath{\mathit{{\mathcal{R}}_{n^{\prime}}}}}} when {{\color{\colorMATH}\ensuremath{\mathit{n^{\prime}<n}}}}; this is well-founded because the index {{\color{\colorMATH}\ensuremath{\mathit{n}}}}
becomes a decreasing measure for the self-reference.
In this way, step-indexing enables self-reference without any
existing decreasing measure by introducing a new decreasing measure,
and maintains well-foundedness of the relation definition.

A {\textit{logical}} relation is one where the definition of relation on
function values (or types) is extensional, essentially saying ``when
given related inputs, the function produces related outputs''.
This definition is self-referrential and not well-founded, and among
common reasons to introduce step-indexing in programming language
proofs.
As the relation {{\color{\colorMATH}\ensuremath{\mathit{{\mathcal{R}}}}}} is stratified with a step-index to {{\color{\colorMATH}\ensuremath{\mathit{{\mathcal{R}}_{n}}}}}, so must
the definition of the semantics, so for a big-step relation {{\color{\colorMATH}\ensuremath{\mathit{e \Downarrow  v}}}}
(relating an expression {{\color{\colorMATH}\ensuremath{\mathit{e}}}} to its final value {{\color{\colorMATH}\ensuremath{\mathit{v}}}} after evaluation)
we stratify as {{\color{\colorMATH}\ensuremath{\mathit{e \Downarrow _{n} v}}}}.
Also, because the definition of a logical relation {\textit{decrements}} the
step-index for the case of function values, we {\textit{increment}} the
step-index in the semantic case for function application.
These techniques are standard from prior work~\cite{ahmed2006},
and we merely summarize the key ideas here to give background to our
reader.

\paragraph{Formal Definition of Dynamic Analysis.}
We model language features for arithmetic operations ({{\color{\colorMATH}\ensuremath{\mathit{e \odot  e}}}}),
conditionals ({{\color{\colorMATH}\ensuremath{\mathit{{{\color{\colorSYNTAX}\texttt{if0}}}(e)\{ e\} \{ e\} }}}}), pairs ({{\color{\colorMATH}\ensuremath{\mathit{\langle e,e\rangle }}}} and {{\color{\colorMATH}\ensuremath{\mathit{\pi _{i}(e)}}}}),
functions ({{\color{\colorMATH}\ensuremath{\mathit{\lambda x.\hspace*{0.33em}e}}}} and {{\color{\colorMATH}\ensuremath{\mathit{e(e)}}}}) and references ({{\color{\colorMATH}\ensuremath{\mathit{{{\color{\colorSYNTAX}\texttt{ref}}}(e)}}}}, {{\color{\colorMATH}\ensuremath{\mathit{{!}e}}}} and
{{\color{\colorMATH}\ensuremath{\mathit{e\leftarrow e}}}}); the full language is shown in Figure~\ref{fig:syntax}.
There is one base value: {{\color{\colorMATH}\ensuremath{\mathit{r@_{m}^{\Sigma }}}}} for a real number result {{\color{\colorMATH}\ensuremath{\mathit{r}}}} tagged
with dynamic analysis information {{\color{\colorMATH}\ensuremath{\mathit{\Sigma }}}}---the sensitivity analysis for
the expression which evaluated to {{\color{\colorMATH}\ensuremath{\mathit{r}}}}---and {{\color{\colorMATH}\ensuremath{\mathit{m}}}}---the metric associated
with the resulting value {{\color{\colorMATH}\ensuremath{\mathit{r}}}}. The sensitivity analysis {{\color{\colorMATH}\ensuremath{\mathit{\Sigma }}}}---also
called a {\textit{sensitivity environment}}---is a map from sensitive sources {{\color{\colorMATH}\ensuremath{\mathit{o
\in  {\text{source}}}}}} to how sensitive the result is {w.r.t.} that source. Our
formalism includes two base metrics {{\color{\colorMATH}\ensuremath{\mathit{m \in  {\text{metric}}}}}}: {{\color{\colorMATH}\ensuremath{\mathit{{{\color{\colorSYNTAX}\texttt{diff}}}}}}} and
{{\color{\colorMATH}\ensuremath{\mathit{{{\color{\colorSYNTAX}\texttt{disc}}}}}}} for absolute difference ({{\color{\colorMATH}\ensuremath{\mathit{|x - y|}}}}) and discrete distance
({{\color{\colorMATH}\ensuremath{\mathit{0}}}} if {{\color{\colorMATH}\ensuremath{\mathit{x = y}}}} and {{\color{\colorMATH}\ensuremath{\mathit{1}}}} otherwise) respectively---and two derived
metrics: {{\color{\colorMATH}\ensuremath{\mathit{\top }}}} and {{\color{\colorMATH}\ensuremath{\mathit{\bot }}}} for the smallest metric larger than each base
metric and largest metric smaller than each base metric,
respectively. Each metric is commonly used when implementing
differentially private algorithms.
Pair values ({{\color{\colorMATH}\ensuremath{\mathit{\langle v,v\rangle }}}}), closure values ({{\color{\colorMATH}\ensuremath{\mathit{\langle \lambda x.\hspace*{0.33em}e\mathrel{|}\rho \rangle }}}}) and reference
values ({{\color{\colorMATH}\ensuremath{\mathit{\ell }}}}) do not contain dynamic analysis information.

Our dynamic analysis is described formally as a big-step relation {{\color{\colorMATH}\ensuremath{\mathit{\rho
\vdash  \mathrlap{\raisebox{-5pt}{\color{black!50}$\llcorner$}}\mathrlap{\raisebox{6pt}{\color{black!50}$\ulcorner$}}\hspace{2pt}\vphantom{\underset x{\overset XX}}\sigma ,e\hspace{2pt}\mathllap{\raisebox{-5pt}{\color{black!50}$\lrcorner$}}\mathllap{\raisebox{6pt}{\color{black!50}$\urcorner$}}\vphantom{\underset x{\overset XX}} \Downarrow _{n} \mathrlap{\raisebox{-5pt}{\color{black!50}$\llcorner$}}\mathrlap{\raisebox{6pt}{\color{black!50}$\ulcorner$}}\hspace{2pt}\vphantom{\underset x{\overset XX}}\sigma ,v\hspace{2pt}\mathllap{\raisebox{-5pt}{\color{black!50}$\lrcorner$}}\mathllap{\raisebox{6pt}{\color{black!50}$\urcorner$}}\vphantom{\underset x{\overset XX}}}}}} where {{\color{\colorMATH}\ensuremath{\mathit{\rho  \in  {\text{var}} \rightharpoonup  {\text{value}}}}}} is the lexical
environment mapping lexical variables to values, {{\color{\colorMATH}\ensuremath{\mathit{\sigma  \in  {\text{loc}} \rightharpoonup
{\text{value}}}}}} is the dynamic environment (i.e., the heap, or store)
mapping dynamically allocated references to values, {{\color{\colorMATH}\ensuremath{\mathit{e}}}} is the
expression being executed, and {{\color{\colorMATH}\ensuremath{\mathit{v}}}} is the resulting runtime value
which also includes dynamic analysis information. We write gray box
corners around the ``input'' configuration {{\color{\colorMATH}\ensuremath{\mathit{\mathrlap{\raisebox{-5pt}{\color{black!50}$\llcorner$}}\mathrlap{\raisebox{6pt}{\color{black!50}$\ulcorner$}}\hspace{2pt}\vphantom{\underset x{\overset XX}}\sigma ,e\hspace{2pt}\mathllap{\raisebox{-5pt}{\color{black!50}$\lrcorner$}}\mathllap{\raisebox{6pt}{\color{black!50}$\urcorner$}}\vphantom{\underset x{\overset XX}}}}}} and the ``output''
configuration {{\color{\colorMATH}\ensuremath{\mathit{\mathrlap{\raisebox{-5pt}{\color{black!50}$\llcorner$}}\mathrlap{\raisebox{6pt}{\color{black!50}$\ulcorner$}}\hspace{2pt}\vphantom{\underset x{\overset XX}}\sigma ,v\hspace{2pt}\mathllap{\raisebox{-5pt}{\color{black!50}$\lrcorner$}}\mathllap{\raisebox{6pt}{\color{black!50}$\urcorner$}}\vphantom{\underset x{\overset XX}}}}}} to aid readability. The index {{\color{\colorMATH}\ensuremath{\mathit{n}}}} is for
step-indexing, and tracks the number of function applications which
occurred in the process of evaluation. We show the full definition of
the dynamic analysis in Figure~\ref{fig:semantics}.

Consider the following example:
\begingroup\color{\colorMATH}\begin{gather*} \{ x\mapsto 21@_{{{\color{\colorSYNTAX}\texttt{diff}}}}^{\{ o\mapsto 1\} }\}  \vdash  \varnothing ,(x + x) \Downarrow _{0} 42@_{{{\color{\colorSYNTAX}\texttt{diff}}}}^{\{ o\mapsto 2\} } \end{gather*}\endgroup
This relation corresponds to a scenario where the program to evaluate
and analyze is {{\color{\colorMATH}\ensuremath{\mathit{x + x}}}}, the variable {{\color{\colorMATH}\ensuremath{\mathit{x}}}} represents a sensitive
source value {{\color{\colorMATH}\ensuremath{\mathit{o}}}}, we want to track sensitivity {w.r.t.} the absolute
difference metric, and the initial value for {{\color{\colorMATH}\ensuremath{\mathit{x}}}} is {{\color{\colorMATH}\ensuremath{\mathit{21}}}}. This
information is encoded in an initial environment {{\color{\colorMATH}\ensuremath{\mathit{\rho  =
\{ x\mapsto 21@_{{{\color{\colorSYNTAX}\texttt{diff}}}}^{\{ o\mapsto 1\} }}}}}. The result value is {{\color{\colorMATH}\ensuremath{\mathit{42}}}}, and the resulting
analysis reports that {{\color{\colorMATH}\ensuremath{\mathit{e}}}} is {{\color{\colorMATH}\ensuremath{\mathit{2}}}}-sensitive in the source {{\color{\colorMATH}\ensuremath{\mathit{o}}}} {w.r.t.}
the absolute difference metric. This analysis information is encoded
in the return value {{\color{\colorMATH}\ensuremath{\mathit{42@_{{{\color{\colorSYNTAX}\texttt{diff}}}}^{\{ o\mapsto 2\} }}}}}. Because no function
applications occur during evaluation, the step index {{\color{\colorMATH}\ensuremath{\mathit{n}}}} is 0.

\begin{figure*}
\begin{framed}
\vspace{-2ex}
\begingroup\color{\colorMATH}\begin{gather*}
\begin{tabularx}{\linewidth}{>{\centering\arraybackslash\(}X<{\)}}\hfill\hspace{0pt} b \in  {\mathbb{B}} \hfill\hspace{0pt} n \in  {\mathbb{N}} \hfill\hspace{0pt} i \in  {\mathbb{Z}} \hfill\hspace{0pt} r \in  {\mathbb{R}} \hfill\hspace{0pt} x \in  {\text{var}} \hfill\hspace{0pt}
\cr
\cr \hfill\hspace{0pt}
  \begin{array}{rclcl@{\hspace*{1.00em}}l
  } o   &{}\in {}& {\text{source}}    &{} {}&                  & {{\color{\colorTEXT}\textnormal{sensitive sources}}}
  \cr  \ell    &{}\in {}& {\text{loc}}       &{} {}&                  & {{\color{\colorTEXT}\textnormal{reference locations}}}
  \cr  e   &{}\in {}& {\text{expr}}      &{}\mathrel{\Coloneqq }{}& x                & {{\color{\colorTEXT}\textnormal{variables}}}
  \cr      &{} {}&             &{}\mathrel{|}{}& r                & {{\color{\colorTEXT}\textnormal{real numbers}}}
  \cr      &{} {}&             &{}\mathrel{|}{}& e \odot  e            & {{\color{\colorTEXT}\textnormal{arith. operations}}}
  \cr      &{} {}&             &{}\mathrel{|}{}& {{\color{\colorSYNTAX}\texttt{if0}}}(e)\{ e\} \{ e\}    & {{\color{\colorTEXT}\textnormal{cond. branching}}}
  \cr      &{} {}&             &{}\mathrel{|}{}& \langle e,e\rangle             & {{\color{\colorTEXT}\textnormal{pair creation}}}
  \cr      &{} {}&             &{}\mathrel{|}{}& \pi _{i}(e)            & {{\color{\colorTEXT}\textnormal{pair access}}}
  \cr      &{} {}&             &{}\mathrel{|}{}& \lambda x.\hspace*{0.33em}e             & {{\color{\colorTEXT}\textnormal{function creation}}}
  \cr      &{} {}&             &{}\mathrel{|}{}& e(e)             & {{\color{\colorTEXT}\textnormal{function application}}}
  \cr      &{} {}&             &{}\mathrel{|}{}& {{\color{\colorSYNTAX}\texttt{ref}}}(e)         & {{\color{\colorTEXT}\textnormal{reference creation}}}
  \cr      &{} {}&             &{}\mathrel{|}{}& {!}e               & {{\color{\colorTEXT}\textnormal{reference read}}}
  \cr      &{} {}&             &{}\mathrel{|}{}& e \leftarrow  e            & {{\color{\colorTEXT}\textnormal{reference write}}}
  \end{array}
  \hfill\hspace{0pt}
  \begin{array}{rclcl@{\hspace*{1.00em}}l
  } q   &{}\in {}& \hat {\mathbb{R}}         &{}\mathrel{\Coloneqq }{}& r \mathrel{|} \infty             & {{\color{\colorTEXT}\textnormal{ext. reals}}}
  \cr  {\odot } &{}\in {}& {\text{binop}}     &{}\mathrel{\Coloneqq }{}& {+} \mathrel{|} {\ltimes } \mathrel{|} {\rtimes }  & {{\color{\colorTEXT}\textnormal{operations}}}
  \cr  m   &{}\in {}& {\text{metric}}    &{}\mathrel{\Coloneqq }{}& {{\color{\colorSYNTAX}\texttt{diff}}}           & {{\color{\colorTEXT}\textnormal{absolute difference}}}
  \cr      &{} {}&             &{}\mathrel{|}{}& {{\color{\colorSYNTAX}\texttt{disc}}}           & {{\color{\colorTEXT}\textnormal{discrete}}}
  \cr      &{} {}&             &{}\mathrel{|}{}& \bot                 & {{\color{\colorTEXT}\textnormal{bot metric}}}
  \cr      &{} {}&             &{}\mathrel{|}{}& \top                 & {{\color{\colorTEXT}\textnormal{top metric}}}
  \cr  v   &{}\in {}& {\text{value}}     &{}\mathrel{\Coloneqq }{}& r@_{m}^{\Sigma }           & {{\color{\colorTEXT}\textnormal{tagged base value}}}
  \cr      &{} {}&             &{}\mathrel{|}{}& \langle v,v\rangle             & {{\color{\colorTEXT}\textnormal{pair}}}
  \cr      &{} {}&             &{}\mathrel{|}{}& \langle \lambda x.\hspace*{0.33em}e\mathrel{|}\rho \rangle          & {{\color{\colorTEXT}\textnormal{function (closure)}}}
  \cr      &{} {}&             &{}\mathrel{|}{}& \ell                 & {{\color{\colorTEXT}\textnormal{location (pointer)}}}
  \cr  \rho    &{}\in {}& {\text{env}}   &{}\triangleq {}& {\text{var}} \rightharpoonup  {\text{value}}      & {{\color{\colorTEXT}\textnormal{value environment}}}
  \cr  \sigma    &{}\in {}& {\text{store}} &{}\triangleq {}& {\text{loc}} \rightharpoonup  {\text{value}}      & {{\color{\colorTEXT}\textnormal{mutable store}}}
  \cr  \Sigma    &{}\in {}& {\text{senv}}  &{}\triangleq {}& {\text{source}} \rightharpoonup  \hat {\mathbb{R}}       & {{\color{\colorTEXT}\textnormal{sens. environment}}}
  \end{array}
  \hfill\hspace{0pt}
\cr
\cr \hline
\cr \hfill\hspace{0pt}
  \begin{array}[t]{rclrl
  } \rho _{1},\sigma _{1},e_{1} &{}\sim _{0}^{\Sigma }    {}& \rho _{2},\sigma _{2},e_{2} &{}\overset \vartriangle \iff {}& {\text{true}}
  \cr  \rho _{1},\sigma _{1},e_{1} &{}\sim _{n+1}^{\Sigma }{}& \rho _{2},\sigma _{2},e_{2} &{}\overset \vartriangle \iff {}& \forall  n_{1} \leq  n, n_{2}, \sigma _{1}^{\prime},\sigma _{2}^{\prime},v_{1}, v_{2}.\hspace*{0.33em}
  \cr           &{}         {}&          &{}   {}& \rho _{1} \vdash  \mathrlap{\raisebox{-5pt}{\color{black!50}$\llcorner$}}\mathrlap{\raisebox{6pt}{\color{black!50}$\ulcorner$}}\hspace{2pt}\vphantom{\underset x{\overset XX}}\sigma _{1},e_{1}\hspace{2pt}\mathllap{\raisebox{-5pt}{\color{black!50}$\lrcorner$}}\mathllap{\raisebox{6pt}{\color{black!50}$\urcorner$}}\vphantom{\underset x{\overset XX}} \Downarrow _{n_{1}} \mathrlap{\raisebox{-5pt}{\color{black!50}$\llcorner$}}\mathrlap{\raisebox{6pt}{\color{black!50}$\ulcorner$}}\hspace{2pt}\vphantom{\underset x{\overset XX}}\sigma _{1}^{\prime},v_{1}\hspace{2pt}\mathllap{\raisebox{-5pt}{\color{black!50}$\lrcorner$}}\mathllap{\raisebox{6pt}{\color{black!50}$\urcorner$}}\vphantom{\underset x{\overset XX}} \hspace*{0.33em}\wedge \hspace*{0.33em} \rho _{2} \vdash  \mathrlap{\raisebox{-5pt}{\color{black!50}$\llcorner$}}\mathrlap{\raisebox{6pt}{\color{black!50}$\ulcorner$}}\hspace{2pt}\vphantom{\underset x{\overset XX}}\sigma _{2},e_{2}\hspace{2pt}\mathllap{\raisebox{-5pt}{\color{black!50}$\lrcorner$}}\mathllap{\raisebox{6pt}{\color{black!50}$\urcorner$}}\vphantom{\underset x{\overset XX}} \Downarrow _{n_{2}} \mathrlap{\raisebox{-5pt}{\color{black!50}$\llcorner$}}\mathrlap{\raisebox{6pt}{\color{black!50}$\ulcorner$}}\hspace{2pt}\vphantom{\underset x{\overset XX}}\sigma _{2}^{\prime},v_{2}\hspace{2pt}\mathllap{\raisebox{-5pt}{\color{black!50}$\lrcorner$}}\mathllap{\raisebox{6pt}{\color{black!50}$\urcorner$}}\vphantom{\underset x{\overset XX}}
  \cr           &{}         {}&          &{}  \Rightarrow {}& n_{1} = n_{2} \hspace*{0.33em}\wedge \hspace*{0.33em} \sigma _{1}^{\prime} \sim ^{\Sigma }_{n-n_{1}} \sigma _{2}^{\prime} \hspace*{0.33em}\wedge \hspace*{0.33em}  v_{1} \sim ^{\Sigma }_{n-n_{1}} v_{2}
  \end{array}
  \hfill\hspace{0pt}
  \hfill\hspace{0pt}
  \mathllap{\begingroup\color{\colorTEXT}\boxed{\begingroup\color{\colorMATH} \rho ,\sigma ,e \sim _{n}^{\Sigma } \rho ,\sigma ,e \endgroup}\endgroup}
\cr
\cr \hfill\hspace{0pt}
  \begin{array}[t]{rcl
  } r_{1} \sim _{{{\color{\colorSYNTAX}\texttt{diff}}}}^{r} r_{2}            &{}\overset \vartriangle \iff {}& |r_{1} - r_{2}| \leq  r
  \cr  r_{1} \sim _{{{\color{\colorSYNTAX}\texttt{disc}}}}^{r} r_{2}            &{}\overset \vartriangle \iff {}& \left\{ \begin{array}{l@{\hspace*{1.00em}}c@{\hspace*{1.00em}}l
                                         } 0 \leq  r &{}{\textit{if}}{}& r_{1} = r_{2}
                                         \cr  1 \leq  r &{}{\textit{if}}{}& r_{1} \neq  r_{2}
                                         \end{array}\right.
  \end{array}
  \hfill\hspace{0pt}
  \begin{array}[t]{rcl
  } r_{1} \sim _{\bot }^{r} r_{2}                 &{}\overset \vartriangle \iff {}& r_{1} \sim _{{{\color{\colorSYNTAX}\texttt{diff}}}}^{r} r_{2} \hspace*{0.33em}\wedge \hspace*{0.33em} r_{1} \sim _{{{\color{\colorSYNTAX}\texttt{disc}}}}^{r} r_{2}
  \cr  r_{1} \sim _{\top }^{r} r_{2}                 &{}\overset \vartriangle \iff {}& r_{1} \sim _{{{\color{\colorSYNTAX}\texttt{diff}}}}^{r} r_{2} \hspace*{0.33em}\vee \hspace*{0.33em} r_{1} \sim _{{{\color{\colorSYNTAX}\texttt{disc}}}}^{r} r_{2}
  \end{array}
  \hfill\hspace{0pt}
  \hfill\hspace{0pt}
  \mathllap{\begingroup\color{\colorTEXT}\boxed{\begingroup\color{\colorMATH} r \sim _{m}^{r} r \endgroup}\endgroup}
\cr
\cr \hfill\hspace{0pt}
  \begin{array}[t]{rrl
  } r_{1}@^{\Sigma _{1}}_{m_{1}} \sim _{n}^{\Sigma } r_{2}@^{\Sigma _{2}}_{m_{2}} &{}\overset \vartriangle \iff {}& \Sigma _{1} = \Sigma _{2} \hspace*{0.33em}\wedge \hspace*{0.33em} m_{1} = m_{2} \hspace*{0.33em}\wedge \hspace*{0.33em} r_{1} \sim ^{\Sigma \mathord{\cdotp }\Sigma _{1}}_{m_{1}} r_{2}
  \cr  \langle v_{1 1},v_{1 2}\rangle  \sim _{n}^{\Sigma } \langle v_{2 1},v_{2 2}\rangle      &{}\overset \vartriangle \iff {}& v_{1 1} \sim _{n}^{\Sigma } v_{2 1} \hspace*{0.33em}\wedge \hspace*{0.33em} v_{2 1} \sim _{n}^{\Sigma } v_{2 2}
  \cr  \langle \lambda x.\hspace*{0.33em}e_{1}\mathrel{|}\rho _{1}\rangle  \sim _{n}^{\Sigma } \langle \lambda x.\hspace*{0.33em}e_{2}\mathrel{|}\rho _{2}\rangle    &{}\overset \vartriangle \iff {}& \forall  n^{\prime}\leq n,v_{1},v_{2},\sigma _{1},\sigma _{2}.\hspace*{0.33em} \sigma _{1} \sim _{n^{\prime}}^{\Sigma } \sigma _{2} \hspace*{0.33em}\wedge \hspace*{0.33em} v_{1} \sim _{n^{\prime}}^{\Sigma } v_{2}
  \cr                                &{}  \Rightarrow {}& \sigma _{1},\{ x\mapsto v_{1}\} \uplus \rho _{1},e_{1} \sim _{n^{\prime}}^{\Sigma } \sigma _{2},\{ x\mapsto v_{2}\} \uplus \rho _{2},e_{2}
  \cr  \ell _{1} \sim _{n}^{\Sigma } \ell _{2}                   &{}\overset \vartriangle \iff {}& \ell _{1} = \ell _{2}
  \end{array}
  \hfill\hspace{0pt}
  \mathllap{\begingroup\color{\colorTEXT}\boxed{\begingroup\color{\colorMATH} v \sim _{n}^{\Sigma } v \endgroup}\endgroup}
\cr
\cr \hfill\hspace{0pt}
  \begin{array}[t]{rcl
  } \rho _{1} \sim _{n}^{\Sigma } \rho _{2} &{}\overset \vartriangle \iff {}& \forall  x \in  ({\text{dom}}(\rho _{1}) \cup  {\text{dom}}(\rho _{2})).\hspace*{0.33em} \rho _{1}(x) \sim _{n}^{\Sigma } \rho _{2}(x)
  \cr  \sigma _{1} \sim _{n}^{\Sigma } \sigma _{2} &{}\overset \vartriangle \iff {}& \forall  \ell  \in  ({\text{dom}}(\sigma _{1}) \cup  {\text{dom}}(\sigma _{2})).\hspace*{0.33em} \sigma _{1}(\ell ) \sim _{n}^{\Sigma } \sigma _{2}(\ell )
  \end{array}
  \hfill\hspace{0pt}
  \mathllap{\begingroup\color{\colorTEXT}\boxed{\begingroup\color{\colorMATH}\begin{array}{r} \rho  \sim _{n}^{\Sigma } \rho  \cr  \sigma  \sim _{n}^{\Sigma } \sigma  \end{array}\endgroup}\endgroup}
\end{tabularx}
\end{gather*}\endgroup
\vspace{-2ex}
\end{framed}
\vspace{-2ex}
\caption{Formal Syntax \& Step-indexed Logical Relation.}
\label{fig:syntax}\label{fig:logical-relations}
\end{figure*}

\paragraph{Formal Definition of Function Sensitivity.}
Function sensitivity is encoded through multiple relation
definitions:
\begin{enumerate}\item  {{\color{\colorMATH}\ensuremath{\mathit{\rho _{1},\sigma _{1},e_{1} \sim _{n}^{\Sigma } \rho _{2},\sigma _{2},e_{2}}}}} holds when the input triples {{\color{\colorMATH}\ensuremath{\mathit{\rho _{1},\sigma _{1},e_{1}}}}}
   and {{\color{\colorMATH}\ensuremath{\mathit{\rho _{2},\sigma _{2},e_{2}}}}} evaluate to output stores and values which are
   related by {{\color{\colorMATH}\ensuremath{\mathit{\Sigma }}}}. Note this definition decrements the step-index
   {{\color{\colorMATH}\ensuremath{\mathit{n}}}}, and is the constant relation when {{\color{\colorMATH}\ensuremath{\mathit{n=0}}}}.
\item  {{\color{\colorMATH}\ensuremath{\mathit{r_{1} \sim _{m}^{r} r_{2}}}}} holds when the difference between real numbers {{\color{\colorMATH}\ensuremath{\mathit{r_{1}}}}}
   and {{\color{\colorMATH}\ensuremath{\mathit{r_{2}}}}} {w.r.t.} metric {{\color{\colorMATH}\ensuremath{\mathit{m}}}} is less than {{\color{\colorMATH}\ensuremath{\mathit{r}}}}.
\item  {{\color{\colorMATH}\ensuremath{\mathit{v_{1} \sim _{n}^{\Sigma } v_{2}}}}} holds when values {{\color{\colorMATH}\ensuremath{\mathit{v_{1}}}}} and {{\color{\colorMATH}\ensuremath{\mathit{v_{2}}}}} are related for
   initial distance {{\color{\colorMATH}\ensuremath{\mathit{\Sigma }}}} and step-index {{\color{\colorMATH}\ensuremath{\mathit{n}}}}. The definition is by case
   analysis on the syntactic category for values, such as:
   \begin{enumerate}\item  The relation on base values {{\color{\colorMATH}\ensuremath{\mathit{r_{1}@^{\Sigma _{1}}_{m_{1}} \sim _{n}^{\Sigma } r_{2}@^{\Sigma _{2}}_{m_{2}}}}}}
      holds when {{\color{\colorMATH}\ensuremath{\mathit{\Sigma _{1}}}}}, {{\color{\colorMATH}\ensuremath{\mathit{\Sigma _{2}}}}}, {{\color{\colorMATH}\ensuremath{\mathit{m_{1}}}}} and {{\color{\colorMATH}\ensuremath{\mathit{m_{2}}}}} are pairwise equal, and
      when {{\color{\colorMATH}\ensuremath{\mathit{r_{1}}}}} and {{\color{\colorMATH}\ensuremath{\mathit{r_{2}}}}} are related by {{\color{\colorMATH}\ensuremath{\mathit{\Sigma \mathord{\cdotp }\Sigma _{1}}}}}, where {{\color{\colorMATH}\ensuremath{\mathit{\Sigma }}}} is the
      initial distances between each input source {{\color{\colorMATH}\ensuremath{\mathit{o}}}}, and {{\color{\colorMATH}\ensuremath{\mathit{\Sigma _{1}}}}} is
      how much {{\color{\colorMATH}\ensuremath{\mathit{r_{1}}}}} and {{\color{\colorMATH}\ensuremath{\mathit{r_{2}}}}} are allowed to differ as a linear
      function of input distances {{\color{\colorMATH}\ensuremath{\mathit{\Sigma }}}}, and where this function is
      applied via vector dot product {{\color{\colorMATH}\ensuremath{\mathit{\mathord{\cdotp }}}}}.
   \item  The relation on pair values {{\color{\colorMATH}\ensuremath{\mathit{\langle v_{1 1},v_{1 2}\rangle  \sim _{m}^{\Sigma } \langle v_{2 1},v_{2 2}\rangle }}}} holds
      when each element of the pair are pairwise related.
   \item  The relation on function values {{\color{\colorMATH}\ensuremath{\mathit{\langle \lambda x.\hspace*{0.33em}e_{1}\mathrel{|}\rho _{1}\rangle  \sim _{n}^{\Sigma } \langle \lambda x.\hspace*{0.33em}e_{2}\mathrel{|}\rho _{2}\rangle }}}}
      holds when each closure returns related output configurations
      when evaluated with related inputs.
   \item  The relation on locations {{\color{\colorMATH}\ensuremath{\mathit{\ell _{1} \sim _{n}^{\Sigma } \ell _{2}}}}} holds when the two
      locations are equal.
   \end{enumerate}
\item  {{\color{\colorMATH}\ensuremath{\mathit{\rho _{1} \sim _{n}^{\Sigma } \rho _{2}}}}} holds when lexical environments {{\color{\colorMATH}\ensuremath{\mathit{\rho _{1}}}}} and {{\color{\colorMATH}\ensuremath{\mathit{\rho _{2}}}}} map
   all variables to related values.
\item  {{\color{\colorMATH}\ensuremath{\mathit{\sigma _{1} \sim _{n}^{\Sigma } \sigma _{2}}}}} holds when dynamic environments {{\color{\colorMATH}\ensuremath{\mathit{\sigma _{1}}}}} and {{\color{\colorMATH}\ensuremath{\mathit{\sigma _{2}}}}} map
   all locations to related values.
\end{enumerate}
Note that the definitions of {{\color{\colorMATH}\ensuremath{\mathit{\rho _{1},\sigma _{1},e_{1} \sim _{n}^{\Sigma } \rho _{2},\sigma _{2},e_{2}}}}} and {{\color{\colorMATH}\ensuremath{\mathit{v \sim _{n}^{\Sigma }
v}}}} are mutually recursive, but are well founded due to the decrement
of the step index in the former relation. We show the full definition
of these relations in Figure~\ref{fig:logical-relations}.

\begin{figure*}
\begin{framed}
\vspace{-2ex}
\begingroup\color{\colorMATH}\begin{gather*}\begin{tabularx}{\linewidth}{>{\centering\arraybackslash\(}X<{\)}}\hfill\hspace{0pt}
    \begin{array}{rcl
    } {}\rceil \underline{\hspace{0.66em}}\lceil {}^{r} &{}\in {}& \hat {\mathbb{R}} \rightarrow  \hat {\mathbb{R}}
    \cr  {}\rceil \underline{\hspace{0.66em}}\lceil {}^{r} &{}\in {}& {\text{senv}} \rightarrow  {\text{sens}}
    \end{array}
    \hfill\hspace{0pt}
    \begin{array}{rcl
    } {}\rceil r^{\prime}\lceil {}^{r} &{}\triangleq {}& \left\{ \begin{array}{l@{\hspace*{1.00em}}c@{\hspace*{1.00em}}l
                  } 0 &{}{\textit{if}}{}& r^{\prime} = 0
                  \cr  r &{}{\textit{if}}{}& r^{\prime} \neq  0
                  \end{array}\right.
    \cr  {}\rceil \Sigma \lceil {}^{r}(o) &{}\triangleq {}& {}\rceil \Sigma (o)\lceil {}^{r}
    \end{array}
    \hfill\hspace{0pt}
    \begin{array}{rcl
    } {\text{alloc}}({\mathcal{L}}) \notin  {\mathcal{L}} \in  \wp ({\text{loc}})
    \end{array}
    \hfill\hspace{0pt}
    {\textbf{Z}} = \{ o \mapsto  0\}
    \hfill\hspace{0pt}
  \end{tabularx}
\end{gather*}\endgroup
\vspace{-4ex}
\begingroup\color{\colorMATH}\begin{gather*}\begin{tabularx}{\linewidth}{>{\centering\arraybackslash\(}X<{\)}}\hfill\hspace{0pt}\begingroup\color{\colorTEXT}\boxed{\begingroup\color{\colorMATH} \rho  \vdash  \mathrlap{\raisebox{-5pt}{\color{black!50}$\llcorner$}}\mathrlap{\raisebox{6pt}{\color{black!50}$\ulcorner$}}\hspace{2pt}\vphantom{\underset x{\overset XX}}\sigma ,e\hspace{2pt}\mathllap{\raisebox{-5pt}{\color{black!50}$\lrcorner$}}\mathllap{\raisebox{6pt}{\color{black!50}$\urcorner$}}\vphantom{\underset x{\overset XX}} \Downarrow _{n} \mathrlap{\raisebox{-5pt}{\color{black!50}$\llcorner$}}\mathrlap{\raisebox{6pt}{\color{black!50}$\ulcorner$}}\hspace{2pt}\vphantom{\underset x{\overset XX}}\sigma ,v\hspace{2pt}\mathllap{\raisebox{-5pt}{\color{black!50}$\lrcorner$}}\mathllap{\raisebox{6pt}{\color{black!50}$\urcorner$}}\vphantom{\underset x{\overset XX}} \endgroup}\endgroup\end{tabularx}\end{gather*}\endgroup
\vspace{-2ex}
\begingroup
\def\MathparLineskip{\lineskip=6pt}
\begingroup\color{\colorMATH}\begin{mathpar} \inferrule*[lab={\textsc{ Var}}
   ]{ }{
      \rho  \vdash  \mathrlap{\raisebox{-5pt}{\color{black!50}$\llcorner$}}\mathrlap{\raisebox{6pt}{\color{black!50}$\ulcorner$}}\hspace{2pt}\vphantom{\underset x{\overset XX}}\sigma ,x\hspace{2pt}\mathllap{\raisebox{-5pt}{\color{black!50}$\lrcorner$}}\mathllap{\raisebox{6pt}{\color{black!50}$\urcorner$}}\vphantom{\underset x{\overset XX}} \Downarrow _{0} \mathrlap{\raisebox{-5pt}{\color{black!50}$\llcorner$}}\mathrlap{\raisebox{6pt}{\color{black!50}$\ulcorner$}}\hspace{2pt}\vphantom{\underset x{\overset XX}}\sigma ,\rho (x)\hspace{2pt}\mathllap{\raisebox{-5pt}{\color{black!50}$\lrcorner$}}\mathllap{\raisebox{6pt}{\color{black!50}$\urcorner$}}\vphantom{\underset x{\overset XX}}
   }
\and \inferrule*[lab={\textsc{ Real}}
   ]{ }{
      \rho  \vdash  \mathrlap{\raisebox{-5pt}{\color{black!50}$\llcorner$}}\mathrlap{\raisebox{6pt}{\color{black!50}$\ulcorner$}}\hspace{2pt}\vphantom{\underset x{\overset XX}}\sigma ,r\hspace{2pt}\mathllap{\raisebox{-5pt}{\color{black!50}$\lrcorner$}}\mathllap{\raisebox{6pt}{\color{black!50}$\urcorner$}}\vphantom{\underset x{\overset XX}} \Downarrow _{0} \mathrlap{\raisebox{-5pt}{\color{black!50}$\llcorner$}}\mathrlap{\raisebox{6pt}{\color{black!50}$\ulcorner$}}\hspace{2pt}\vphantom{\underset x{\overset XX}}\sigma ,r@_{{{\color{\colorSYNTAX}\texttt{disc}}}}^{{\textbf{Z}}}\hspace{2pt}\mathllap{\raisebox{-5pt}{\color{black!50}$\lrcorner$}}\mathllap{\raisebox{6pt}{\color{black!50}$\urcorner$}}\vphantom{\underset x{\overset XX}}
   }
\and \inferrule*[lab={\textsc{ Fun}}
   ]{ }{
      \rho  \vdash  \mathrlap{\raisebox{-5pt}{\color{black!50}$\llcorner$}}\mathrlap{\raisebox{6pt}{\color{black!50}$\ulcorner$}}\hspace{2pt}\vphantom{\underset x{\overset XX}}\sigma ,\lambda x.\hspace*{0.33em} e\hspace{2pt}\mathllap{\raisebox{-5pt}{\color{black!50}$\lrcorner$}}\mathllap{\raisebox{6pt}{\color{black!50}$\urcorner$}}\vphantom{\underset x{\overset XX}} \Downarrow _{0} \mathrlap{\raisebox{-5pt}{\color{black!50}$\llcorner$}}\mathrlap{\raisebox{6pt}{\color{black!50}$\ulcorner$}}\hspace{2pt}\vphantom{\underset x{\overset XX}}\sigma ,\langle \lambda x.\hspace*{0.33em}e\mathrel{|}\rho \rangle \hspace{2pt}\mathllap{\raisebox{-5pt}{\color{black!50}$\lrcorner$}}\mathllap{\raisebox{6pt}{\color{black!50}$\urcorner$}}\vphantom{\underset x{\overset XX}}
   }
\and \inferrule*[lab={\textsc{ Plus}}
   ]{{\begin{array}{rclclclclcl
      } \rho  &{}\vdash {}& \mathrlap{\raisebox{-5pt}{\color{black!50}$\llcorner$}}\mathrlap{\raisebox{6pt}{\color{black!50}$\ulcorner$}}\hspace{2pt}\vphantom{\underset x{\overset XX}}\sigma  &{},{}&e_{1}\hspace{2pt}\mathllap{\raisebox{-5pt}{\color{black!50}$\lrcorner$}}\mathllap{\raisebox{6pt}{\color{black!50}$\urcorner$}}\vphantom{\underset x{\overset XX}} &{}\Downarrow _{n_{1}}{}& \mathrlap{\raisebox{-5pt}{\color{black!50}$\llcorner$}}\mathrlap{\raisebox{6pt}{\color{black!50}$\ulcorner$}}\hspace{2pt}\vphantom{\underset x{\overset XX}}\sigma ^{\prime}&{},{}&r_{1}&{}@{}&^{\Sigma _{1}}_{m_{1}}\hspace{2pt}\mathllap{\raisebox{-5pt}{\color{black!50}$\lrcorner$}}\mathllap{\raisebox{6pt}{\color{black!50}$\urcorner$}}\vphantom{\underset x{\overset XX}}
      \cr  \rho  &{}\vdash {}& \mathrlap{\raisebox{-5pt}{\color{black!50}$\llcorner$}}\mathrlap{\raisebox{6pt}{\color{black!50}$\ulcorner$}}\hspace{2pt}\vphantom{\underset x{\overset XX}}\sigma ^{\prime}&{},{}&e_{2}\hspace{2pt}\mathllap{\raisebox{-5pt}{\color{black!50}$\lrcorner$}}\mathllap{\raisebox{6pt}{\color{black!50}$\urcorner$}}\vphantom{\underset x{\overset XX}} &{}\Downarrow _{n_{2}}{}& \mathrlap{\raisebox{-5pt}{\color{black!50}$\llcorner$}}\mathrlap{\raisebox{6pt}{\color{black!50}$\ulcorner$}}\hspace{2pt}\vphantom{\underset x{\overset XX}}\sigma ^{\prime \prime}&{},{}&r_{2}&{}@{}&^{\Sigma _{2}}_{m_{2}}\hspace{2pt}\mathllap{\raisebox{-5pt}{\color{black!50}$\lrcorner$}}\mathllap{\raisebox{6pt}{\color{black!50}$\urcorner$}}\vphantom{\underset x{\overset XX}}
      \end{array}}
   \\ }{
      \rho  \vdash  \mathrlap{\raisebox{-5pt}{\color{black!50}$\llcorner$}}\mathrlap{\raisebox{6pt}{\color{black!50}$\ulcorner$}}\hspace{2pt}\vphantom{\underset x{\overset XX}}\sigma ,e_{1} + e_{2}\hspace{2pt}\mathllap{\raisebox{-5pt}{\color{black!50}$\lrcorner$}}\mathllap{\raisebox{6pt}{\color{black!50}$\urcorner$}}\vphantom{\underset x{\overset XX}} \Downarrow _{n_{1}+n_{2}} \mathrlap{\raisebox{-5pt}{\color{black!50}$\llcorner$}}\mathrlap{\raisebox{6pt}{\color{black!50}$\ulcorner$}}\hspace{2pt}\vphantom{\underset x{\overset XX}}\sigma ^{\prime \prime},(r_{1}+r_{2})@^{\Sigma _{1}+\Sigma _{2}}_{m_{1}\sqcup m_{2}}\hspace{2pt}\mathllap{\raisebox{-5pt}{\color{black!50}$\lrcorner$}}\mathllap{\raisebox{6pt}{\color{black!50}$\urcorner$}}\vphantom{\underset x{\overset XX}}
   }
\and \inferrule*[lab={\textsc{ Times-L}}
   ]{{\begin{array}{rclclclclcl
      } \rho  &{}\vdash {}& \mathrlap{\raisebox{-5pt}{\color{black!50}$\llcorner$}}\mathrlap{\raisebox{6pt}{\color{black!50}$\ulcorner$}}\hspace{2pt}\vphantom{\underset x{\overset XX}}\sigma  &{},{}&e_{1}\hspace{2pt}\mathllap{\raisebox{-5pt}{\color{black!50}$\lrcorner$}}\mathllap{\raisebox{6pt}{\color{black!50}$\urcorner$}}\vphantom{\underset x{\overset XX}} &{}\Downarrow _{n_{1}}{}& \mathrlap{\raisebox{-5pt}{\color{black!50}$\llcorner$}}\mathrlap{\raisebox{6pt}{\color{black!50}$\ulcorner$}}\hspace{2pt}\vphantom{\underset x{\overset XX}}\sigma ^{\prime}&{},{}&r_{1}&{}@{}&^{{\textbf{Z}}}_{m_{1}}\hspace{2pt}\mathllap{\raisebox{-5pt}{\color{black!50}$\lrcorner$}}\mathllap{\raisebox{6pt}{\color{black!50}$\urcorner$}}\vphantom{\underset x{\overset XX}}
      \cr  \rho  &{}\vdash {}& \mathrlap{\raisebox{-5pt}{\color{black!50}$\llcorner$}}\mathrlap{\raisebox{6pt}{\color{black!50}$\ulcorner$}}\hspace{2pt}\vphantom{\underset x{\overset XX}}\sigma ^{\prime}&{},{}&e_{2}\hspace{2pt}\mathllap{\raisebox{-5pt}{\color{black!50}$\lrcorner$}}\mathllap{\raisebox{6pt}{\color{black!50}$\urcorner$}}\vphantom{\underset x{\overset XX}} &{}\Downarrow _{n_{2}}{}& \mathrlap{\raisebox{-5pt}{\color{black!50}$\llcorner$}}\mathrlap{\raisebox{6pt}{\color{black!50}$\ulcorner$}}\hspace{2pt}\vphantom{\underset x{\overset XX}}\sigma ^{\prime \prime}&{},{}&r_{2}&{}@{}&^{\Sigma _{2}}_{m_{2}}\hspace{2pt}\mathllap{\raisebox{-5pt}{\color{black!50}$\lrcorner$}}\mathllap{\raisebox{6pt}{\color{black!50}$\urcorner$}}\vphantom{\underset x{\overset XX}}
      \end{array}}
   \\ }{
      \rho  \vdash  \mathrlap{\raisebox{-5pt}{\color{black!50}$\llcorner$}}\mathrlap{\raisebox{6pt}{\color{black!50}$\ulcorner$}}\hspace{2pt}\vphantom{\underset x{\overset XX}}\sigma ,e_{1} \ltimes  e_{2}\hspace{2pt}\mathllap{\raisebox{-5pt}{\color{black!50}$\lrcorner$}}\mathllap{\raisebox{6pt}{\color{black!50}$\urcorner$}}\vphantom{\underset x{\overset XX}} \Downarrow _{n_{1}+n_{2}} \mathrlap{\raisebox{-5pt}{\color{black!50}$\llcorner$}}\mathrlap{\raisebox{6pt}{\color{black!50}$\ulcorner$}}\hspace{2pt}\vphantom{\underset x{\overset XX}}\sigma ^{\prime \prime},(r_{1}\times r_{2})@^{r_{1}\Sigma _{2}}_{m_{2}}\hspace{2pt}\mathllap{\raisebox{-5pt}{\color{black!50}$\lrcorner$}}\mathllap{\raisebox{6pt}{\color{black!50}$\urcorner$}}\vphantom{\underset x{\overset XX}}
   }
\and \inferrule*[lab={\textsc{ Times-R}}
   ]{{\begin{array}{rclclclclcl
      } \rho  &{}\vdash {}& \mathrlap{\raisebox{-5pt}{\color{black!50}$\llcorner$}}\mathrlap{\raisebox{6pt}{\color{black!50}$\ulcorner$}}\hspace{2pt}\vphantom{\underset x{\overset XX}}\sigma  &{},{}&e_{1}\hspace{2pt}\mathllap{\raisebox{-5pt}{\color{black!50}$\lrcorner$}}\mathllap{\raisebox{6pt}{\color{black!50}$\urcorner$}}\vphantom{\underset x{\overset XX}} &{}\Downarrow _{n_{1}}{}& \mathrlap{\raisebox{-5pt}{\color{black!50}$\llcorner$}}\mathrlap{\raisebox{6pt}{\color{black!50}$\ulcorner$}}\hspace{2pt}\vphantom{\underset x{\overset XX}}\sigma ^{\prime}&{},{}&r_{1}&{}@{}&^{\Sigma _{1}}_{m_{1}}\hspace{2pt}\mathllap{\raisebox{-5pt}{\color{black!50}$\lrcorner$}}\mathllap{\raisebox{6pt}{\color{black!50}$\urcorner$}}\vphantom{\underset x{\overset XX}}
      \cr  \rho  &{}\vdash {}& \mathrlap{\raisebox{-5pt}{\color{black!50}$\llcorner$}}\mathrlap{\raisebox{6pt}{\color{black!50}$\ulcorner$}}\hspace{2pt}\vphantom{\underset x{\overset XX}}\sigma ^{\prime}&{},{}&e_{2}\hspace{2pt}\mathllap{\raisebox{-5pt}{\color{black!50}$\lrcorner$}}\mathllap{\raisebox{6pt}{\color{black!50}$\urcorner$}}\vphantom{\underset x{\overset XX}} &{}\Downarrow _{n_{2}}{}& \mathrlap{\raisebox{-5pt}{\color{black!50}$\llcorner$}}\mathrlap{\raisebox{6pt}{\color{black!50}$\ulcorner$}}\hspace{2pt}\vphantom{\underset x{\overset XX}}\sigma ^{\prime \prime}&{},{}&r_{2}&{}@{}&^{{\textbf{Z}}}_{m_{2}}\hspace{2pt}\mathllap{\raisebox{-5pt}{\color{black!50}$\lrcorner$}}\mathllap{\raisebox{6pt}{\color{black!50}$\urcorner$}}\vphantom{\underset x{\overset XX}}
      \end{array}}
   \\ }{
      \rho  \vdash  \mathrlap{\raisebox{-5pt}{\color{black!50}$\llcorner$}}\mathrlap{\raisebox{6pt}{\color{black!50}$\ulcorner$}}\hspace{2pt}\vphantom{\underset x{\overset XX}}\sigma ,e_{1} \rtimes  e_{2}\hspace{2pt}\mathllap{\raisebox{-5pt}{\color{black!50}$\lrcorner$}}\mathllap{\raisebox{6pt}{\color{black!50}$\urcorner$}}\vphantom{\underset x{\overset XX}} \Downarrow _{n_{1}+n_{2}} \mathrlap{\raisebox{-5pt}{\color{black!50}$\llcorner$}}\mathrlap{\raisebox{6pt}{\color{black!50}$\ulcorner$}}\hspace{2pt}\vphantom{\underset x{\overset XX}}\sigma ^{\prime \prime},(r_{1}\times r_{2})@^{r_{2}\Sigma _{1}}_{m_{1}}\hspace{2pt}\mathllap{\raisebox{-5pt}{\color{black!50}$\lrcorner$}}\mathllap{\raisebox{6pt}{\color{black!50}$\urcorner$}}\vphantom{\underset x{\overset XX}}
   }
\and \inferrule*[lab={\textsc{ IfZ-T}}
   ]{{\begin{array}[b]{rclclclclcl
      } \rho  &{}\vdash {}& \mathrlap{\raisebox{-5pt}{\color{black!50}$\llcorner$}}\mathrlap{\raisebox{6pt}{\color{black!50}$\ulcorner$}}\hspace{2pt}\vphantom{\underset x{\overset XX}}\sigma  &{},{}&e_{1}\hspace{2pt}\mathllap{\raisebox{-5pt}{\color{black!50}$\lrcorner$}}\mathllap{\raisebox{6pt}{\color{black!50}$\urcorner$}}\vphantom{\underset x{\overset XX}} &{}\Downarrow _{n_{1}}{}& \mathrlap{\raisebox{-5pt}{\color{black!50}$\llcorner$}}\mathrlap{\raisebox{6pt}{\color{black!50}$\ulcorner$}}\hspace{2pt}\vphantom{\underset x{\overset XX}}\sigma ^{\prime}&{},{}&r_{1}&{}@{}&^{{\textbf{Z}}}_{m_{1}}\hspace{2pt}\mathllap{\raisebox{-5pt}{\color{black!50}$\lrcorner$}}\mathllap{\raisebox{6pt}{\color{black!50}$\urcorner$}}\vphantom{\underset x{\overset XX}}
      \cr  \rho  &{}\vdash {}& \mathrlap{\raisebox{-5pt}{\color{black!50}$\llcorner$}}\mathrlap{\raisebox{6pt}{\color{black!50}$\ulcorner$}}\hspace{2pt}\vphantom{\underset x{\overset XX}}\sigma ^{\prime}&{},{}&e_{2}\hspace{2pt}\mathllap{\raisebox{-5pt}{\color{black!50}$\lrcorner$}}\mathllap{\raisebox{6pt}{\color{black!50}$\urcorner$}}\vphantom{\underset x{\overset XX}} &{}\Downarrow _{n_{2}}{}& \mathrlap{\raisebox{-5pt}{\color{black!50}$\llcorner$}}\mathrlap{\raisebox{6pt}{\color{black!50}$\ulcorner$}}\hspace{2pt}\vphantom{\underset x{\overset XX}}\sigma ^{\prime \prime}&{},{}&v_{2}\hspace{2pt}\mathllap{\raisebox{-5pt}{\color{black!50}$\lrcorner$}}\mathllap{\raisebox{6pt}{\color{black!50}$\urcorner$}}\vphantom{\underset x{\overset XX}}
      \end{array}}
   \\ r_{1} = 0
      }{
      \rho  \vdash  \mathrlap{\raisebox{-5pt}{\color{black!50}$\llcorner$}}\mathrlap{\raisebox{6pt}{\color{black!50}$\ulcorner$}}\hspace{2pt}\vphantom{\underset x{\overset XX}}\sigma ,{{\color{\colorSYNTAX}\texttt{if0}}}(e_{1})\{ e_{2}\} \{ e_{3}\} \hspace{2pt}\mathllap{\raisebox{-5pt}{\color{black!50}$\lrcorner$}}\mathllap{\raisebox{6pt}{\color{black!50}$\urcorner$}}\vphantom{\underset x{\overset XX}} \Downarrow _{n_{1}+n_{2}} \mathrlap{\raisebox{-5pt}{\color{black!50}$\llcorner$}}\mathrlap{\raisebox{6pt}{\color{black!50}$\ulcorner$}}\hspace{2pt}\vphantom{\underset x{\overset XX}}\sigma ^{\prime \prime},v_{2}\hspace{2pt}\mathllap{\raisebox{-5pt}{\color{black!50}$\lrcorner$}}\mathllap{\raisebox{6pt}{\color{black!50}$\urcorner$}}\vphantom{\underset x{\overset XX}}
   }
\and \inferrule*[lab={\textsc{ IfZ-F}}
   ]{{\begin{array}[b]{rclclclclcl
      } \rho  &{}\vdash {}& \mathrlap{\raisebox{-5pt}{\color{black!50}$\llcorner$}}\mathrlap{\raisebox{6pt}{\color{black!50}$\ulcorner$}}\hspace{2pt}\vphantom{\underset x{\overset XX}}\sigma  &{},{}&e_{1}\hspace{2pt}\mathllap{\raisebox{-5pt}{\color{black!50}$\lrcorner$}}\mathllap{\raisebox{6pt}{\color{black!50}$\urcorner$}}\vphantom{\underset x{\overset XX}} &{}\Downarrow _{n_{1}}{}& \mathrlap{\raisebox{-5pt}{\color{black!50}$\llcorner$}}\mathrlap{\raisebox{6pt}{\color{black!50}$\ulcorner$}}\hspace{2pt}\vphantom{\underset x{\overset XX}}\sigma ^{\prime}&{},{}&r_{1}&{}@{}&^{{\textbf{Z}}}_{m_{1}}\hspace{2pt}\mathllap{\raisebox{-5pt}{\color{black!50}$\lrcorner$}}\mathllap{\raisebox{6pt}{\color{black!50}$\urcorner$}}\vphantom{\underset x{\overset XX}}
      \cr  \rho  &{}\vdash {}& \mathrlap{\raisebox{-5pt}{\color{black!50}$\llcorner$}}\mathrlap{\raisebox{6pt}{\color{black!50}$\ulcorner$}}\hspace{2pt}\vphantom{\underset x{\overset XX}}\sigma ^{\prime}&{},{}&e_{3}\hspace{2pt}\mathllap{\raisebox{-5pt}{\color{black!50}$\lrcorner$}}\mathllap{\raisebox{6pt}{\color{black!50}$\urcorner$}}\vphantom{\underset x{\overset XX}} &{}\Downarrow _{n_{2}}{}& \mathrlap{\raisebox{-5pt}{\color{black!50}$\llcorner$}}\mathrlap{\raisebox{6pt}{\color{black!50}$\ulcorner$}}\hspace{2pt}\vphantom{\underset x{\overset XX}}\sigma ^{\prime \prime}&{},{}&v_{3}\hspace{2pt}\mathllap{\raisebox{-5pt}{\color{black!50}$\lrcorner$}}\mathllap{\raisebox{6pt}{\color{black!50}$\urcorner$}}\vphantom{\underset x{\overset XX}}
      \end{array}}
   \\ r_{1} \neq  0
      }{
      \rho  \vdash  \mathrlap{\raisebox{-5pt}{\color{black!50}$\llcorner$}}\mathrlap{\raisebox{6pt}{\color{black!50}$\ulcorner$}}\hspace{2pt}\vphantom{\underset x{\overset XX}}\sigma ,{{\color{\colorSYNTAX}\texttt{if0}}}(e_{1})\{ e_{2}\} \{ e_{3}\} \hspace{2pt}\mathllap{\raisebox{-5pt}{\color{black!50}$\lrcorner$}}\mathllap{\raisebox{6pt}{\color{black!50}$\urcorner$}}\vphantom{\underset x{\overset XX}} \Downarrow _{n_{1}+n_{2}} \mathrlap{\raisebox{-5pt}{\color{black!50}$\llcorner$}}\mathrlap{\raisebox{6pt}{\color{black!50}$\ulcorner$}}\hspace{2pt}\vphantom{\underset x{\overset XX}}\sigma ^{\prime \prime},v_{3}\hspace{2pt}\mathllap{\raisebox{-5pt}{\color{black!50}$\lrcorner$}}\mathllap{\raisebox{6pt}{\color{black!50}$\urcorner$}}\vphantom{\underset x{\overset XX}}
   }
\and \inferrule*[lab={\textsc{ Pair}}
   ]{{\begin{array}[b]{rclclclcl
      } \rho  &{}\vdash {}& \mathrlap{\raisebox{-5pt}{\color{black!50}$\llcorner$}}\mathrlap{\raisebox{6pt}{\color{black!50}$\ulcorner$}}\hspace{2pt}\vphantom{\underset x{\overset XX}}\sigma  &{},{}&e_{1}\hspace{2pt}\mathllap{\raisebox{-5pt}{\color{black!50}$\lrcorner$}}\mathllap{\raisebox{6pt}{\color{black!50}$\urcorner$}}\vphantom{\underset x{\overset XX}} &{}\Downarrow _{n_{1}}{}& \mathrlap{\raisebox{-5pt}{\color{black!50}$\llcorner$}}\mathrlap{\raisebox{6pt}{\color{black!50}$\ulcorner$}}\hspace{2pt}\vphantom{\underset x{\overset XX}}\sigma ^{\prime}&{},{}&v_{1}\hspace{2pt}\mathllap{\raisebox{-5pt}{\color{black!50}$\lrcorner$}}\mathllap{\raisebox{6pt}{\color{black!50}$\urcorner$}}\vphantom{\underset x{\overset XX}}
      \cr  \rho  &{}\vdash {}& \mathrlap{\raisebox{-5pt}{\color{black!50}$\llcorner$}}\mathrlap{\raisebox{6pt}{\color{black!50}$\ulcorner$}}\hspace{2pt}\vphantom{\underset x{\overset XX}}\sigma ^{\prime}&{},{}&e_{2}\hspace{2pt}\mathllap{\raisebox{-5pt}{\color{black!50}$\lrcorner$}}\mathllap{\raisebox{6pt}{\color{black!50}$\urcorner$}}\vphantom{\underset x{\overset XX}} &{}\Downarrow _{n_{2}}{}& \mathrlap{\raisebox{-5pt}{\color{black!50}$\llcorner$}}\mathrlap{\raisebox{6pt}{\color{black!50}$\ulcorner$}}\hspace{2pt}\vphantom{\underset x{\overset XX}}\sigma ^{\prime \prime}&{},{}&v_{2}\hspace{2pt}\mathllap{\raisebox{-5pt}{\color{black!50}$\lrcorner$}}\mathllap{\raisebox{6pt}{\color{black!50}$\urcorner$}}\vphantom{\underset x{\overset XX}}
      \end{array}}
      }{
      \rho  \vdash  \mathrlap{\raisebox{-5pt}{\color{black!50}$\llcorner$}}\mathrlap{\raisebox{6pt}{\color{black!50}$\ulcorner$}}\hspace{2pt}\vphantom{\underset x{\overset XX}}\sigma ,\langle e_{1},e_{2}\rangle \hspace{2pt}\mathllap{\raisebox{-5pt}{\color{black!50}$\lrcorner$}}\mathllap{\raisebox{6pt}{\color{black!50}$\urcorner$}}\vphantom{\underset x{\overset XX}} \Downarrow _{n_{1}+n_{2}} \mathrlap{\raisebox{-5pt}{\color{black!50}$\llcorner$}}\mathrlap{\raisebox{6pt}{\color{black!50}$\ulcorner$}}\hspace{2pt}\vphantom{\underset x{\overset XX}}\sigma ^{\prime \prime},\langle v_{1},v_{2}\rangle \hspace{2pt}\mathllap{\raisebox{-5pt}{\color{black!50}$\lrcorner$}}\mathllap{\raisebox{6pt}{\color{black!50}$\urcorner$}}\vphantom{\underset x{\overset XX}}
   }
\and \inferrule*[lab={\textsc{ Proj}}
   ]{ \rho  \vdash  \mathrlap{\raisebox{-5pt}{\color{black!50}$\llcorner$}}\mathrlap{\raisebox{6pt}{\color{black!50}$\ulcorner$}}\hspace{2pt}\vphantom{\underset x{\overset XX}}\sigma ,e\hspace{2pt}\mathllap{\raisebox{-5pt}{\color{black!50}$\lrcorner$}}\mathllap{\raisebox{6pt}{\color{black!50}$\urcorner$}}\vphantom{\underset x{\overset XX}} \Downarrow _{n} \mathrlap{\raisebox{-5pt}{\color{black!50}$\llcorner$}}\mathrlap{\raisebox{6pt}{\color{black!50}$\ulcorner$}}\hspace{2pt}\vphantom{\underset x{\overset XX}}\sigma ^{\prime},\langle v_{1},v_{2}\rangle \hspace{2pt}\mathllap{\raisebox{-5pt}{\color{black!50}$\lrcorner$}}\mathllap{\raisebox{6pt}{\color{black!50}$\urcorner$}}\vphantom{\underset x{\overset XX}}
      }{
      \rho  \vdash  \mathrlap{\raisebox{-5pt}{\color{black!50}$\llcorner$}}\mathrlap{\raisebox{6pt}{\color{black!50}$\ulcorner$}}\hspace{2pt}\vphantom{\underset x{\overset XX}}\sigma ,\pi _{n^{\prime}}(e)\hspace{2pt}\mathllap{\raisebox{-5pt}{\color{black!50}$\lrcorner$}}\mathllap{\raisebox{6pt}{\color{black!50}$\urcorner$}}\vphantom{\underset x{\overset XX}} \Downarrow _{n} \mathrlap{\raisebox{-5pt}{\color{black!50}$\llcorner$}}\mathrlap{\raisebox{6pt}{\color{black!50}$\ulcorner$}}\hspace{2pt}\vphantom{\underset x{\overset XX}}\sigma ^{\prime},v_{n^{\prime}}\hspace{2pt}\mathllap{\raisebox{-5pt}{\color{black!50}$\lrcorner$}}\mathllap{\raisebox{6pt}{\color{black!50}$\urcorner$}}\vphantom{\underset x{\overset XX}}
   }
\and \inferrule*[lab={\textsc{ Ref}}
   ]{ \rho  \vdash  \mathrlap{\raisebox{-5pt}{\color{black!50}$\llcorner$}}\mathrlap{\raisebox{6pt}{\color{black!50}$\ulcorner$}}\hspace{2pt}\vphantom{\underset x{\overset XX}}\sigma ,e\hspace{2pt}\mathllap{\raisebox{-5pt}{\color{black!50}$\lrcorner$}}\mathllap{\raisebox{6pt}{\color{black!50}$\urcorner$}}\vphantom{\underset x{\overset XX}} \Downarrow _{n} \mathrlap{\raisebox{-5pt}{\color{black!50}$\llcorner$}}\mathrlap{\raisebox{6pt}{\color{black!50}$\ulcorner$}}\hspace{2pt}\vphantom{\underset x{\overset XX}}\sigma ^{\prime},v\hspace{2pt}\mathllap{\raisebox{-5pt}{\color{black!50}$\lrcorner$}}\mathllap{\raisebox{6pt}{\color{black!50}$\urcorner$}}\vphantom{\underset x{\overset XX}}
   \\\\ \ell  = {\text{alloc}}({\text{dom}}(\sigma ))
      }{
      \rho  \vdash  \mathrlap{\raisebox{-5pt}{\color{black!50}$\llcorner$}}\mathrlap{\raisebox{6pt}{\color{black!50}$\ulcorner$}}\hspace{2pt}\vphantom{\underset x{\overset XX}}\sigma ,{{\color{\colorSYNTAX}\texttt{ref}}}(e)\hspace{2pt}\mathllap{\raisebox{-5pt}{\color{black!50}$\lrcorner$}}\mathllap{\raisebox{6pt}{\color{black!50}$\urcorner$}}\vphantom{\underset x{\overset XX}} \Downarrow _{n} \mathrlap{\raisebox{-5pt}{\color{black!50}$\llcorner$}}\mathrlap{\raisebox{6pt}{\color{black!50}$\ulcorner$}}\hspace{2pt}\vphantom{\underset x{\overset XX}}\{ \ell \mapsto v\} \uplus \sigma ^{\prime},\ell \hspace{2pt}\mathllap{\raisebox{-5pt}{\color{black!50}$\lrcorner$}}\mathllap{\raisebox{6pt}{\color{black!50}$\urcorner$}}\vphantom{\underset x{\overset XX}}
   }
\and \inferrule*[lab={\textsc{ Read}}
   ]{ \rho  \vdash  \mathrlap{\raisebox{-5pt}{\color{black!50}$\llcorner$}}\mathrlap{\raisebox{6pt}{\color{black!50}$\ulcorner$}}\hspace{2pt}\vphantom{\underset x{\overset XX}}\sigma ,e\hspace{2pt}\mathllap{\raisebox{-5pt}{\color{black!50}$\lrcorner$}}\mathllap{\raisebox{6pt}{\color{black!50}$\urcorner$}}\vphantom{\underset x{\overset XX}} \Downarrow _{n} \mathrlap{\raisebox{-5pt}{\color{black!50}$\llcorner$}}\mathrlap{\raisebox{6pt}{\color{black!50}$\ulcorner$}}\hspace{2pt}\vphantom{\underset x{\overset XX}}\sigma ^{\prime},\ell \hspace{2pt}\mathllap{\raisebox{-5pt}{\color{black!50}$\lrcorner$}}\mathllap{\raisebox{6pt}{\color{black!50}$\urcorner$}}\vphantom{\underset x{\overset XX}}
      }{
      \rho  \vdash  \mathrlap{\raisebox{-5pt}{\color{black!50}$\llcorner$}}\mathrlap{\raisebox{6pt}{\color{black!50}$\ulcorner$}}\hspace{2pt}\vphantom{\underset x{\overset XX}}\sigma ,{!}e\hspace{2pt}\mathllap{\raisebox{-5pt}{\color{black!50}$\lrcorner$}}\mathllap{\raisebox{6pt}{\color{black!50}$\urcorner$}}\vphantom{\underset x{\overset XX}} \Downarrow _{n} \mathrlap{\raisebox{-5pt}{\color{black!50}$\llcorner$}}\mathrlap{\raisebox{6pt}{\color{black!50}$\ulcorner$}}\hspace{2pt}\vphantom{\underset x{\overset XX}}\sigma ^{\prime},\sigma ^{\prime}(\ell )\hspace{2pt}\mathllap{\raisebox{-5pt}{\color{black!50}$\lrcorner$}}\mathllap{\raisebox{6pt}{\color{black!50}$\urcorner$}}\vphantom{\underset x{\overset XX}}
   }
\and \inferrule*[lab={\textsc{ Write}}
   ]{{\begin{array}[b]{rclclclclcl
      } \rho  &{}\vdash {}& \mathrlap{\raisebox{-5pt}{\color{black!50}$\llcorner$}}\mathrlap{\raisebox{6pt}{\color{black!50}$\ulcorner$}}\hspace{2pt}\vphantom{\underset x{\overset XX}}\sigma  &{},{}&e_{1}\hspace{2pt}\mathllap{\raisebox{-5pt}{\color{black!50}$\lrcorner$}}\mathllap{\raisebox{6pt}{\color{black!50}$\urcorner$}}\vphantom{\underset x{\overset XX}} \Downarrow _{n_{1}} \mathrlap{\raisebox{-5pt}{\color{black!50}$\llcorner$}}\mathrlap{\raisebox{6pt}{\color{black!50}$\ulcorner$}}\hspace{2pt}\vphantom{\underset x{\overset XX}}\sigma ^{\prime}&{},{}&\ell \hspace{2pt}\mathllap{\raisebox{-5pt}{\color{black!50}$\lrcorner$}}\mathllap{\raisebox{6pt}{\color{black!50}$\urcorner$}}\vphantom{\underset x{\overset XX}}
      \cr  \rho  &{}\vdash {}& \mathrlap{\raisebox{-5pt}{\color{black!50}$\llcorner$}}\mathrlap{\raisebox{6pt}{\color{black!50}$\ulcorner$}}\hspace{2pt}\vphantom{\underset x{\overset XX}}\sigma ^{\prime}&{},{}&e_{2}\hspace{2pt}\mathllap{\raisebox{-5pt}{\color{black!50}$\lrcorner$}}\mathllap{\raisebox{6pt}{\color{black!50}$\urcorner$}}\vphantom{\underset x{\overset XX}} \Downarrow _{n_{2}} \mathrlap{\raisebox{-5pt}{\color{black!50}$\llcorner$}}\mathrlap{\raisebox{6pt}{\color{black!50}$\ulcorner$}}\hspace{2pt}\vphantom{\underset x{\overset XX}}\sigma ^{\prime \prime}&{},{}&v\hspace{2pt}\mathllap{\raisebox{-5pt}{\color{black!50}$\lrcorner$}}\mathllap{\raisebox{6pt}{\color{black!50}$\urcorner$}}\vphantom{\underset x{\overset XX}}
      \end{array}}
      }{
      \rho  \vdash  \mathrlap{\raisebox{-5pt}{\color{black!50}$\llcorner$}}\mathrlap{\raisebox{6pt}{\color{black!50}$\ulcorner$}}\hspace{2pt}\vphantom{\underset x{\overset XX}}\sigma ,e_{1} \leftarrow  e_{2}\hspace{2pt}\mathllap{\raisebox{-5pt}{\color{black!50}$\lrcorner$}}\mathllap{\raisebox{6pt}{\color{black!50}$\urcorner$}}\vphantom{\underset x{\overset XX}} \Downarrow _{n} \mathrlap{\raisebox{-5pt}{\color{black!50}$\llcorner$}}\mathrlap{\raisebox{6pt}{\color{black!50}$\ulcorner$}}\hspace{2pt}\vphantom{\underset x{\overset XX}}\sigma ^{\prime \prime}[\ell \mapsto v],v\hspace{2pt}\mathllap{\raisebox{-5pt}{\color{black!50}$\lrcorner$}}\mathllap{\raisebox{6pt}{\color{black!50}$\urcorner$}}\vphantom{\underset x{\overset XX}}
   }
\and \inferrule*[lab={\textsc{ App}}
   ]{       \rho  \vdash  \mathrlap{\raisebox{-5pt}{\color{black!50}$\llcorner$}}\mathrlap{\raisebox{6pt}{\color{black!50}$\ulcorner$}}\hspace{2pt}\vphantom{\underset x{\overset XX}}\sigma  ,e_{1}\hspace{2pt}\mathllap{\raisebox{-5pt}{\color{black!50}$\lrcorner$}}\mathllap{\raisebox{6pt}{\color{black!50}$\urcorner$}}\vphantom{\underset x{\overset XX}} \Downarrow _{n_{1}} \mathrlap{\raisebox{-5pt}{\color{black!50}$\llcorner$}}\mathrlap{\raisebox{6pt}{\color{black!50}$\ulcorner$}}\hspace{2pt}\vphantom{\underset x{\overset XX}}\sigma ^{\prime},\langle \lambda x.\hspace*{0.33em}e\mathrel{|}\rho \rangle \hspace{2pt}\mathllap{\raisebox{-5pt}{\color{black!50}$\lrcorner$}}\mathllap{\raisebox{6pt}{\color{black!50}$\urcorner$}}\vphantom{\underset x{\overset XX}}
   \\       \rho  \vdash  \mathrlap{\raisebox{-5pt}{\color{black!50}$\llcorner$}}\mathrlap{\raisebox{6pt}{\color{black!50}$\ulcorner$}}\hspace{2pt}\vphantom{\underset x{\overset XX}}\sigma ^{\prime},e_{2}\hspace{2pt}\mathllap{\raisebox{-5pt}{\color{black!50}$\lrcorner$}}\mathllap{\raisebox{6pt}{\color{black!50}$\urcorner$}}\vphantom{\underset x{\overset XX}} \Downarrow _{n_{2}} \mathrlap{\raisebox{-5pt}{\color{black!50}$\llcorner$}}\mathrlap{\raisebox{6pt}{\color{black!50}$\ulcorner$}}\hspace{2pt}\vphantom{\underset x{\overset XX}}\sigma ^{\prime \prime},v\hspace{2pt}\mathllap{\raisebox{-5pt}{\color{black!50}$\lrcorner$}}\mathllap{\raisebox{6pt}{\color{black!50}$\urcorner$}}\vphantom{\underset x{\overset XX}}
   \\ \{ x\mapsto v\} \uplus \rho  \vdash  \mathrlap{\raisebox{-5pt}{\color{black!50}$\llcorner$}}\mathrlap{\raisebox{6pt}{\color{black!50}$\ulcorner$}}\hspace{2pt}\vphantom{\underset x{\overset XX}}\sigma ^{\prime \prime},e\hspace{2pt}\mathllap{\raisebox{-5pt}{\color{black!50}$\lrcorner$}}\mathllap{\raisebox{6pt}{\color{black!50}$\urcorner$}}\vphantom{\underset x{\overset XX}}  \Downarrow _{n_{3}} \mathrlap{\raisebox{-5pt}{\color{black!50}$\llcorner$}}\mathrlap{\raisebox{6pt}{\color{black!50}$\ulcorner$}}\hspace{2pt}\vphantom{\underset x{\overset XX}}\sigma ^{\prime \prime \prime},v^{\prime}\hspace{2pt}\mathllap{\raisebox{-5pt}{\color{black!50}$\lrcorner$}}\mathllap{\raisebox{6pt}{\color{black!50}$\urcorner$}}\vphantom{\underset x{\overset XX}}
      }{
      \rho  \vdash  \mathrlap{\raisebox{-5pt}{\color{black!50}$\llcorner$}}\mathrlap{\raisebox{6pt}{\color{black!50}$\ulcorner$}}\hspace{2pt}\vphantom{\underset x{\overset XX}}\sigma ,e_{1}(e_{2})\hspace{2pt}\mathllap{\raisebox{-5pt}{\color{black!50}$\lrcorner$}}\mathllap{\raisebox{6pt}{\color{black!50}$\urcorner$}}\vphantom{\underset x{\overset XX}} \Downarrow _{n_{1}+n_{2}+n_{3}+1} \mathrlap{\raisebox{-5pt}{\color{black!50}$\llcorner$}}\mathrlap{\raisebox{6pt}{\color{black!50}$\ulcorner$}}\hspace{2pt}\vphantom{\underset x{\overset XX}}\sigma ^{\prime \prime \prime},v^{\prime}\hspace{2pt}\mathllap{\raisebox{-5pt}{\color{black!50}$\lrcorner$}}\mathllap{\raisebox{6pt}{\color{black!50}$\urcorner$}}\vphantom{\underset x{\overset XX}}
   }
\end{mathpar}\endgroup
\endgroup
\vspace{-4ex}
\end{framed}
\vspace{-2ex}
\caption{Formal Big-step, Step-indexed Semantics and Metafunctions.}
\label{fig:semantics}
\end{figure*}

The function sensitivity of an expression is encoded first as a
statement about expressions respecting relatedness, that is, returning
related outputs when given related inputs, i.e., (assuming no use of
the store) if {{\color{\colorMATH}\ensuremath{\mathit{\rho _{1} \sim _{n}^{\Sigma } \rho _{2}}}}} and {{\color{\colorMATH}\ensuremath{\mathit{\rho _{1} \vdash  \varnothing ,e \Downarrow _{n_{1}} \varnothing ,v_{1}}}}} and {{\color{\colorMATH}\ensuremath{\mathit{\rho _{2} \vdash  \varnothing e
\Downarrow _{n_{2}} \varnothing ,v_{2}}}}} then {{\color{\colorMATH}\ensuremath{\mathit{n_{1} = n_{2}}}}} and {{\color{\colorMATH}\ensuremath{\mathit{v_{1} \sim ^{\Sigma }_{n-n_{1}} v_{2}}}}}. When instantiated
to base types, we have: if {{\color{\colorMATH}\ensuremath{\mathit{\rho _{1} \sim _{n}^{\Sigma } \rho _{2}}}}} and {{\color{\colorMATH}\ensuremath{\mathit{\rho _{1} \vdash  \varnothing ,e \Downarrow _{n_{1}}
\varnothing ,r_{1}@^{\Sigma _{1}}_{m_{1}}}}}} and {{\color{\colorMATH}\ensuremath{\mathit{\rho _{2} \vdash  \varnothing e \Downarrow _{n_{2}} \varnothing ,r_{2}@^{\Sigma _{2}}_{m_{2}}}}}} then {{\color{\colorMATH}\ensuremath{\mathit{n_{1}=n_{2}}}}},
{{\color{\colorMATH}\ensuremath{\mathit{\Sigma _{1}=\Sigma _{2}}}}}, {{\color{\colorMATH}\ensuremath{\mathit{m_{1}=m_{2}}}}} and {{\color{\colorMATH}\ensuremath{\mathit{r_{1} \sim ^{\Sigma \mathord{\cdotp }\Sigma _{1}}_{m_{1}} r_{2}}}}}. The fully general form of
this property is called {\textit{metric preservation}}, which is the main
property we prove in our formal development.

\paragraph{Metric Preservation.}
Metric preservation states that when given related initial
configurations and evaluation outputs, then those outputs are
related. Outputs include result values, as well as dynamic analysis
results, and the relationship that holds demonstrates the soundness
of the analysis results.

\begin{theorem}[Metric Preservation]\ \\
  \begin{itemize}[label={},leftmargin=0pt]\item  \begin{tabular}{r@{\hspace*{1.00em}}l
     } If:      & {{\color{\colorMATH}\ensuremath{\mathit{\rho _{1} \sim _{n}^{\Sigma } \rho _{2}}}}}
     \cr  And:     & {{\color{\colorMATH}\ensuremath{\mathit{\sigma _{1} \sim _{n}^{\Sigma } \sigma _{2}}}}}
     \cr  Then:    & {{\color{\colorMATH}\ensuremath{\mathit{\rho _{1},\sigma _{1},e \sim _{n}^{\Sigma } \rho _{2},\sigma _{2},e}}}}
     \end{tabular}
  \item
  \item  That is, either {{\color{\colorMATH}\ensuremath{\mathit{n = 0}}}} or {{\color{\colorMATH}\ensuremath{\mathit{n = n^{\prime}+1}}}} and\ldots
  \item
  \item  \begin{tabular}{r@{\hspace*{1.00em}}l@{\hspace*{1.00em}}c
     } If:      & {{\color{\colorMATH}\ensuremath{\mathit{n_{1} \leq  n^{\prime}}}}}
     \cr  And:     & {{\color{\colorMATH}\ensuremath{\mathit{\rho _{1} \vdash  \mathrlap{\raisebox{-5pt}{\color{black!50}$\llcorner$}}\mathrlap{\raisebox{6pt}{\color{black!50}$\ulcorner$}}\hspace{2pt}\vphantom{\underset x{\overset XX}}\sigma _{1},e\hspace{2pt}\mathllap{\raisebox{-5pt}{\color{black!50}$\lrcorner$}}\mathllap{\raisebox{6pt}{\color{black!50}$\urcorner$}}\vphantom{\underset x{\overset XX}} \Downarrow _{n_{1}} \mathrlap{\raisebox{-5pt}{\color{black!50}$\llcorner$}}\mathrlap{\raisebox{6pt}{\color{black!50}$\ulcorner$}}\hspace{2pt}\vphantom{\underset x{\overset XX}}\sigma _{1}^{\prime},v_{1}\hspace{2pt}\mathllap{\raisebox{-5pt}{\color{black!50}$\lrcorner$}}\mathllap{\raisebox{6pt}{\color{black!50}$\urcorner$}}\vphantom{\underset x{\overset XX}}}}}}
     \cr  And:     & {{\color{\colorMATH}\ensuremath{\mathit{\rho _{2} \vdash  \mathrlap{\raisebox{-5pt}{\color{black!50}$\llcorner$}}\mathrlap{\raisebox{6pt}{\color{black!50}$\ulcorner$}}\hspace{2pt}\vphantom{\underset x{\overset XX}}\sigma _{2},e\hspace{2pt}\mathllap{\raisebox{-5pt}{\color{black!50}$\lrcorner$}}\mathllap{\raisebox{6pt}{\color{black!50}$\urcorner$}}\vphantom{\underset x{\overset XX}} \Downarrow _{n_{2}} \mathrlap{\raisebox{-5pt}{\color{black!50}$\llcorner$}}\mathrlap{\raisebox{6pt}{\color{black!50}$\ulcorner$}}\hspace{2pt}\vphantom{\underset x{\overset XX}}\sigma _{2}^{\prime},v_{2}\hspace{2pt}\mathllap{\raisebox{-5pt}{\color{black!50}$\lrcorner$}}\mathllap{\raisebox{6pt}{\color{black!50}$\urcorner$}}\vphantom{\underset x{\overset XX}}}}}}
     \cr  Then:    & {{\color{\colorMATH}\ensuremath{\mathit{n_{1} = n_{2}}}}}
     \cr  And:     & {{\color{\colorMATH}\ensuremath{\mathit{\sigma _{1}^{\prime} \sim ^{\Sigma }_{n^{\prime}-n_{1}} \sigma _{2}^{\prime}}}}}
     \cr  And:     & {{\color{\colorMATH}\ensuremath{\mathit{v_{1} \sim ^{\Sigma }_{n^{\prime}-n_{1}} v_{2}}}}}
     \end{tabular}
  \end{itemize}
\end{theorem}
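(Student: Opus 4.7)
The plan is to prove the theorem by strong induction on the step index $n$, with inner case analysis on the syntax of $e$ (equivalently, on the shape of the two evaluation derivations, which are forced to agree because the semantic rules are syntax-directed and the branch taken by \textsc{IfZ} is fixed by the guard). The $n = 0$ base case is immediate since $\sim_0^{\Sigma}$ is constantly true by definition. For the inductive step with $n = n'+1$ we unfold the definition of $\rho_1,\sigma_1,e \sim_n^{\Sigma} \rho_2,\sigma_2,e$: fixing $n_1 \leq n'$ and two derivations producing $(\sigma_1',v_1)$ and $(\sigma_2',v_2)$, we must exhibit $n_1 = n_2$, $\sigma_1' \sim^{\Sigma}_{n'-n_1} \sigma_2'$, and $v_1 \sim^{\Sigma}_{n'-n_1} v_2$. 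The case analysis then dispatches on the rule concluding the first derivation.

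The easy cases---\textsc{Var}, \textsc{Real}, \textsc{Fun}, \textsc{Pair}, \textsc{Proj}, \textsc{Plus}, \textsc{Ref}, \textsc{Read}, \textsc{Write}---follow from direct application of the hypotheses or the induction hypothesis to subexpressions. \textsc{Var} uses $\rho_1 \sim_n^{\Sigma} \rho_2$; \textsc{Real} emits the same literal tagged with $\mathbf{Z}$, related vacuously; \textsc{Fun} emits closures whose relatedness unfolds to the IH on the body at lower step indices. \textsc{Plus} invokes the IH on both operands and then appeals to the fact that each base metric (\textsc{diff}, \textsc{disc}, $\bot$, $\top$) distributes over addition so that the summed sensitivity environment $\Sigma_1 + \Sigma_2$ correctly bounds the sum's distance. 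The reference cases exploit that $\mathrm{alloc}$ is a deterministic function of the store's domain, so synchronized stores produce the same fresh location; write-through preserves store relatedness pointwise.

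Three cases lean on \dduo's design restrictions and deserve more care. For \textsc{IfZ-T} and \textsc{IfZ-F}, the rule forces the guard's sensitivity environment to be $\mathbf{Z}$; applying the IH to the guard yields relatedness at bound $\Sigma \cdot \mathbf{Z} = 0$, and for every metric in our formalism a zero bound collapses to strict equality of the two reals. Hence both runs take the same branch, and the IH on that branch completes the case. \textsc{Times-L} and \textsc{Times-R} use the analogous restriction on the scalar side: by the IH that side has sensitivity environment $\mathbf{Z}$, so its real is identical across runs, and the bound $r \cdot \Sigma$ on the product follows from linearity of the chosen metric in the non-scalar operand. \textsc{App} is where step-indexing earns its keep: the IH supplies related closures and related arguments at index $n'-n_{1a}-n_{2a}$, and the $+1$ charged by the rule provides precisely the slack needed to unfold the closure relation one step lower and invoke it on the body.

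The main obstacle is the step-index bookkeeping in the sequential cases (\textsc{Plus}, \textsc{Pair}, \textsc{Write}, and especially \textsc{App}), where the IH on one subderivation yields related intermediate configurations at a given level, and these must be fed into the IH on the next subderivation at a possibly lower level. Resolving this cleanly requires an auxiliary downward-closure lemma: $\sim^{\Sigma}_n$ implies $\sim^{\Sigma}_{n'}$ for all $n' \leq n$, at each of the environment, store, value, and configuration relations. With that structural lemma in hand, the remaining work in each case is mechanical manipulation of sensitivity environments and metric-distance bounds, and the overall induction closes.
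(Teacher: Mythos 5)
Your proposal follows essentially the same route as the paper's proof: strong induction on $n$ with inner case analysis on $e$, a trivial $n=0$ base case, direct appeal to hypotheses and the IH in the easy cases, use of the $\mathbf{Z}$-tagging restriction to force agreement of the guard in \textsc{IfZ} and of the scalar operand in \textsc{Times}, the extra $+1$ step charged by \textsc{App} to unfold the closure relation at a strictly lower index, and a downward-closure (step-index weakening) lemma to glue together sequential subderivations. The paper's formal development additionally names a few small arithmetic facts (``Plus Respects,'' ``Times Respects,'' and a Triangle lemma for composing the two operand bounds in \textsc{Plus}) that you subsume under ``the metric distributes over addition'' and ``linearity,'' but these are the same ideas and the overall argument is the same.
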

\begin{proof}
  See detailed proof in the appendix.
\end{proof}

\paragraph{Instantiating Metric Preservation.}
Metric preservation is not enough on its own to demonstrate sound
dynamic analysis of function sensitivity. Suppose we execute the
dynamic analysis on program {{\color{\colorMATH}\ensuremath{\mathit{e}}}} with initial environment {{\color{\colorMATH}\ensuremath{\mathit{\rho }}}},
yielding a final store {{\color{\colorMATH}\ensuremath{\mathit{\sigma }}}}, base value {{\color{\colorMATH}\ensuremath{\mathit{r}}}}, sensitivity
environment {{\color{\colorMATH}\ensuremath{\mathit{\Sigma }}}}, metric {{\color{\colorMATH}\ensuremath{\mathit{m}}}} and step-index {{\color{\colorMATH}\ensuremath{\mathit{n}}}} as a result:
\begingroup\color{\colorMATH}\begin{gather*} {{\color{\colorMATH}\ensuremath{\mathit{\rho  \vdash  \varnothing ,e \Downarrow _{n} \sigma ,r@_{m}^{\Sigma }}}}} \end{gather*}\endgroup
To know the sensitivity of {{\color{\colorMATH}\ensuremath{\mathit{e}}}} is to know a bound on two {\textit{arbitrary}}
runs of {{\color{\colorMATH}\ensuremath{\mathit{e}}}}, that is, using two arbitrary environments {{\color{\colorMATH}\ensuremath{\mathit{\rho _{1}}}}} and {{\color{\colorMATH}\ensuremath{\mathit{\rho _{2}}}}}.
Does {{\color{\colorMATH}\ensuremath{\mathit{\Sigma }}}} tell us this? Remarkably, it does, with one small condition:
{{\color{\colorMATH}\ensuremath{\mathit{\rho _{1}}}}} and {{\color{\colorMATH}\ensuremath{\mathit{\rho _{2}}}}} must agree with {{\color{\colorMATH}\ensuremath{\mathit{\rho }}}} on all non-sensitive values. This
is not actually limiting: a non-sensitive value is essentially
auxiliary information; they are constants and fixed for the purposes
of sensitivity and privacy.

We can encode the relationship that environments {{\color{\colorMATH}\ensuremath{\mathit{\rho }}}} and {{\color{\colorMATH}\ensuremath{\mathit{\rho _{1}}}}} agree
on all non-sensitive values as {{\color{\colorMATH}\ensuremath{\mathit{\rho  \sim ^{\Sigma ^{\prime}} \rho _{1}}}}} for any {{\color{\colorMATH}\ensuremath{\mathit{\Sigma ^{\prime}}}}}, and we
allow for environments {{\color{\colorMATH}\ensuremath{\mathit{\rho }}}} and {{\color{\colorMATH}\ensuremath{\mathit{\rho _{1}}}}} to differ on any sensitive value
while agreeing on non-sensitive values as {{\color{\colorMATH}\ensuremath{\mathit{\rho  \sim ^{\{ o\mapsto \infty \} } \rho _{1}}}}}.
Under such an assumption, {{\color{\colorMATH}\ensuremath{\mathit{\Sigma }}}} and {{\color{\colorMATH}\ensuremath{\mathit{m}}}} are sound dynamic analysis
results for two arbitrary runs of {{\color{\colorMATH}\ensuremath{\mathit{e}}}}, i.e., under environments {{\color{\colorMATH}\ensuremath{\mathit{\rho _{1}}}}}
and {{\color{\colorMATH}\ensuremath{\mathit{\rho _{2}}}}}, so long as one of those environments agrees with {{\color{\colorMATH}\ensuremath{\mathit{\rho }}}}---the
environment used to compute the dynamic analysis. We encode this
property formally as the following corollary to metric preservation:
\begin{corollary}[Sound Dynamic Analysis for Sensitivity]\ \\
   \begin{tabular}{r@{\hspace*{1.00em}}l@{\hspace*{1.00em}}l
   } If:   & {{\color{\colorMATH}\ensuremath{\mathit{n_{1} < n}}}}, {{\color{\colorMATH}\ensuremath{\mathit{n_{2} < n}}}} and {{\color{\colorMATH}\ensuremath{\mathit{n_{3} < n}}}}     & {\textit{(H1)}}
   \cr  And:  & {{\color{\colorMATH}\ensuremath{\mathit{\rho  \sim _{n}^{\{ o\mapsto \infty \} } \rho _{1}}}}}                    & {\textit{(H2)}}
   \cr  And:  & {{\color{\colorMATH}\ensuremath{\mathit{\rho  \vdash  \mathrlap{\raisebox{-5pt}{\color{black!50}$\llcorner$}}\mathrlap{\raisebox{6pt}{\color{black!50}$\ulcorner$}}\hspace{2pt}\vphantom{\underset x{\overset XX}}\varnothing ,e\hspace{2pt}\mathllap{\raisebox{-5pt}{\color{black!50}$\lrcorner$}}\mathllap{\raisebox{6pt}{\color{black!50}$\urcorner$}}\vphantom{\underset x{\overset XX}} \Downarrow _{n_{1}} \mathrlap{\raisebox{-5pt}{\color{black!50}$\llcorner$}}\mathrlap{\raisebox{6pt}{\color{black!50}$\ulcorner$}}\hspace{2pt}\vphantom{\underset x{\overset XX}}\sigma ,r@_{m}^{\Sigma }\hspace{2pt}\mathllap{\raisebox{-5pt}{\color{black!50}$\lrcorner$}}\mathllap{\raisebox{6pt}{\color{black!50}$\urcorner$}}\vphantom{\underset x{\overset XX}}}}}}        & {\textit{(H3)}}
   \cr  And:  & {{\color{\colorMATH}\ensuremath{\mathit{\rho _{1} \sim ^{\Sigma ^{\prime}}_{n} \rho _{2}}}}}                      & {\textit{(H4)}}
   \cr  And:  & {{\color{\colorMATH}\ensuremath{\mathit{\rho _{1} \vdash  \mathrlap{\raisebox{-5pt}{\color{black!50}$\llcorner$}}\mathrlap{\raisebox{6pt}{\color{black!50}$\ulcorner$}}\hspace{2pt}\vphantom{\underset x{\overset XX}}\varnothing ,e\hspace{2pt}\mathllap{\raisebox{-5pt}{\color{black!50}$\lrcorner$}}\mathllap{\raisebox{6pt}{\color{black!50}$\urcorner$}}\vphantom{\underset x{\overset XX}} \Downarrow _{n_{2}} \mathrlap{\raisebox{-5pt}{\color{black!50}$\llcorner$}}\mathrlap{\raisebox{6pt}{\color{black!50}$\ulcorner$}}\hspace{2pt}\vphantom{\underset x{\overset XX}}\sigma _{1},r_{1}@^{\Sigma _{1}}_{m_{1}}\hspace{2pt}\mathllap{\raisebox{-5pt}{\color{black!50}$\lrcorner$}}\mathllap{\raisebox{6pt}{\color{black!50}$\urcorner$}}\vphantom{\underset x{\overset XX}}}}}} & {\textit{(H5)}}
   \cr  And:  & {{\color{\colorMATH}\ensuremath{\mathit{\rho _{2} \vdash  \mathrlap{\raisebox{-5pt}{\color{black!50}$\llcorner$}}\mathrlap{\raisebox{6pt}{\color{black!50}$\ulcorner$}}\hspace{2pt}\vphantom{\underset x{\overset XX}}\varnothing ,e\hspace{2pt}\mathllap{\raisebox{-5pt}{\color{black!50}$\lrcorner$}}\mathllap{\raisebox{6pt}{\color{black!50}$\urcorner$}}\vphantom{\underset x{\overset XX}} \Downarrow _{n_{3}} \mathrlap{\raisebox{-5pt}{\color{black!50}$\llcorner$}}\mathrlap{\raisebox{6pt}{\color{black!50}$\ulcorner$}}\hspace{2pt}\vphantom{\underset x{\overset XX}}\sigma _{2},r_{2}@^{\Sigma _{2}}_{m_{2}}\hspace{2pt}\mathllap{\raisebox{-5pt}{\color{black!50}$\lrcorner$}}\mathllap{\raisebox{6pt}{\color{black!50}$\urcorner$}}\vphantom{\underset x{\overset XX}}}}}} & {\textit{(H6)}}
   \cr  Then: & {{\color{\colorMATH}\ensuremath{\mathit{r_{1} \sim ^{\Sigma ^{\prime}\mathord{\cdotp }\Sigma }_{m_{1}} r_{2}}}}}                 & {\textit{(C1)}}
   \end{tabular}
\end{corollary}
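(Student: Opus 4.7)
The plan is to apply the Metric Preservation theorem twice, once to align the dynamic analysis results computed from $\rho$ with those computed from $\rho_1$, and once to relate the actual result values $r_1$ and $r_2$ produced by the two arbitrary runs.

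For the first invocation, I use hypothesis (H2), $\rho \sim_n^{\{o \mapsto \infty\}} \rho_1$, together with the trivial store relation $\varnothing \sim_n^{\{o \mapsto \infty\}} \varnothing$, which holds vacuously at every step index and sensitivity environment. Metric Preservation then yields $\rho,\varnothing,e \sim_n^{\{o \mapsto \infty\}} \rho_1,\varnothing,e$. Since (H1) guarantees $n_1, n_2 < n$, I may discharge the implication inside the unfolded triple relation with (H3) and (H5); this forces $n_1 = n_2$ and $r@_m^\Sigma \sim_{n-1-n_1}^{\{o \mapsto \infty\}} r_1@^{\Sigma_1}_{m_1}$. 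Unfolding the base-value clause of the logical relation forces $\Sigma = \Sigma_1$ and $m = m_1$. In other words, the dynamic analysis results are invariant under any two environments that agree on non-sensitive inputs, exactly because a difference in sensitive inputs has been assigned distance $\infty$ and is therefore tolerated by the relation.

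For the second invocation, I use (H4), $\rho_1 \sim_n^{\Sigma'} \rho_2$, together with $\varnothing \sim_n^{\Sigma'} \varnothing$, and apply Metric Preservation to obtain $\rho_1,\varnothing,e \sim_n^{\Sigma'} \rho_2,\varnothing,e$. Unfolding with (H5), (H6), and the bounds from (H1) gives $r_1@^{\Sigma_1}_{m_1} \sim_{n-1-n_2}^{\Sigma'} r_2@^{\Sigma_2}_{m_2}$. Expanding the base-value relation yields $r_1 \sim^{\Sigma' \cdot \Sigma_1}_{m_1} r_2$. Substituting $\Sigma_1 = \Sigma$ from the first step immediately discharges (C1).

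The proof itself carries essentially no structural content; all of the substantive work is inside Metric Preservation. The only delicate point I expect is the step-index bookkeeping: I must verify that (H1) is exactly the right strength to fire the implication buried inside the $n=n'+1$ case of the triple relation (which requires $n_i \leq n' = n-1$), and that the base-value clause of the logical relation does not consume any further step budget, so that both unfoldings land safely even at $n$ as small as $\max(n_1,n_2,n_3)+1$.
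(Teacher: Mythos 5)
Your proof is correct and takes the same route as the paper's: one application of Metric Preservation to the pair (H2)/(H3)/(H5) to identify the analysis tags ($\Sigma_1=\Sigma$, $m_1=m$), followed by a second application to the pair (H4)/(H5)/(H6) to relate $r_1$ and $r_2$, and then substitution of $\Sigma_1=\Sigma$ into the base-value clause. You have merely spelled out the intermediate unfoldings (the vacuous store relation $\varnothing\sim_n^\Sigma\varnothing$ and the step-index arithmetic $n_i < n \Rightarrow n_i \leq n'$) that the paper's two-line proof leaves implicit.
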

\begin{proof}\ \\
  By \nameref{thm:metric-preservation}, {\textit{(H2)}}, {\textit{(H1)}}, {\textit{(H3)}} and
  {\textit{(H5)}} we have {{\color{\colorMATH}\ensuremath{\mathit{\Sigma _{1} = \Sigma }}}} and {{\color{\colorMATH}\ensuremath{\mathit{m_{1} = m}}}}. By
  \nameref{thm:metric-preservation}, {\textit{(H4)}}, {\textit{(H1)}}, {\textit{(H5)}} and
  {\textit{(H6)}} we have proved the goal {\textit{(C1)}}.
\end{proof}
Note that the final results are related using {{\color{\colorMATH}\ensuremath{\mathit{\Sigma }}}}---the analysis result
derived from an execution under {{\color{\colorMATH}\ensuremath{\mathit{\rho }}}}---while {{\color{\colorMATH}\ensuremath{\mathit{r_{1}}}}} and {{\color{\colorMATH}\ensuremath{\mathit{r_{2}}}}} are derived
from executions under unrelated (modulo auxiliary information)
environments {{\color{\colorMATH}\ensuremath{\mathit{\rho _{1}}}}} and {{\color{\colorMATH}\ensuremath{\mathit{\rho _{2}}}}}.

In simpler terms, this corollary shows that even though the dynamic
analysis only sees one particular execution of the program, it is
accurate in describing the sensitivity of the program---even though
the notion of sensitivity considers two arbitrary runs of the
program, including those whose inputs differ entirely from those
used in the dynamic analysis.

\section{Implementation \& Case Studies}
\label{sec:case}

\begin{figure*}
  \begingroup\color{\colorMATH}\begin{gather*} \begin{array}{rllll@{\hspace*{1.00em}}r@{\hspace*{1.00em}}rcl
     } {\textbf{Name}} &{}{}& {\textbf{Type}} &{}{}& & & {\textbf{Conditions}}
     \cr  \hline
        {\text{laplace}} &{}\mathrel{:}{}& (\epsilon  \mathrel{:} {\mathbb{R}}, {\textit{value}} \mathrel{:} {\mathbb{R}})             &{}\rightarrow {}& {\mathbb{R}}                     & {\textit{where}} & {\text{priv}}({\text{laplace}}(\epsilon ,{\textit{value}})) &{}\triangleq {}& \epsilon
     \cr  {\text{gauss}} &{}\mathrel{:}{}& (\epsilon  \mathrel{:} {\mathbb{R}}, \delta  \mathrel{:} {\mathbb{R}}, {\textit{value}} \mathrel{:} {\mathbb{R}})             &{}\rightarrow {}& {\mathbb{R}}                     & {\textit{where}} & {\text{priv}}({\text{gauss}}(\epsilon ,\delta ,{\textit{value}})) &{}\triangleq {}& (\epsilon ,\delta )
     \cr  {\text{ed\_odo}}    &{}\mathrel{:}{}& (f \mathrel{:} A \rightarrow  B , {\textit{in}} \mathrel{:} A )         &{}\rightarrow {}& ({\textit{out}} \mathrel{:} B , (\epsilon ,\delta ) \mathrel{:} ({\mathbb{R}},{\mathbb{R}}))  & {\textit{where}} & {\text{priv}}({\text{ed\_odo}}(f,{\textit{in}}))    &{}\triangleq {}& {\text{priv}}(f({\textit{in}}))
     \cr  {\text{renyi\_odo}}    &{}\mathrel{:}{}& (f \mathrel{:} A \rightarrow  B , {\textit{in}} \mathrel{:} A )         &{}\rightarrow {}& ({\textit{out}} \mathrel{:} B ,  (\alpha ,\epsilon ) \mathrel{:} ({\mathbb{R}},{\mathbb{R}})) & {\textit{where}} & {\text{priv}}({\text{renyi\_odo}}(f,{\textit{in}}))    &{}\triangleq {}& {\text{priv}}(f({\textit{in}}))
     \cr  {\text{ed\_filter}}    &{}\mathrel{:}{}& (f \mathrel{:} A \rightarrow  B , {\textit{in}} \mathrel{:} A , (\epsilon ,\delta ) \mathrel{:} ({\mathbb{R}},{\mathbb{R}}))         &{}\rightarrow {}& ({\textit{out}} \mathrel{:} B ) & {\textit{where}} & (\epsilon ,\delta )    &{}\geq {}& {\text{priv}}(f({\textit{in}}))
     \cr  {\text{renyi\_filter}}    &{}\mathrel{:}{}& (f \mathrel{:} A \rightarrow  B , {\textit{in}} \mathrel{:} A , (\alpha ,\epsilon ) \mathrel{:} ({\mathbb{R}},{\mathbb{R}}))         &{}\rightarrow {}& ({\textit{out}} \mathrel{:} B ) & {\textit{where}} & (\alpha ,\epsilon )    &{}\geq {}& {\text{priv}}(f({\textit{in}}))
     \cr  {\text{conv\_renyi}}    &{}\mathrel{:}{}& (f \mathrel{:} A \rightarrow  B , {\textit{in}} \mathrel{:} A , \delta  \mathrel{:} {\mathbb{R}})         &{}\rightarrow {}& ({\textit{out}} \mathrel{:} B ) & {\textit{where}} & {\text{priv}}(f(...))    &{}={}& (\alpha ,\epsilon )
     \cr            &{} {}&                                  &{} {}&                       & {\textit{and}}   & {\text{conv}}(\alpha , \epsilon , \delta )                               &{}={}& (\epsilon ,\delta )
     \cr  {\text{svt}}    &{}\mathrel{:}{}& (\epsilon  \mathrel{:} {\mathbb{R}}, qs \mathrel{:} [A \rightarrow  B], {\textit{data}} \mathrel{:} [A] , {\textit{t}} \mathrel{:} {\mathbb{R}})         &{}\rightarrow {}& {\mathbb{N}} & {\textit{where}} & {\text{for}}\hspace*{0.33em}{\text{q}}\hspace*{0.33em}{\text{in}}\hspace*{0.33em}{\text{qs}}, sens(q)    &{}={}& 1
     \cr            &{} {}&                                  &{} {}&                       & {\textit{and}}   & {\text{priv}}(svt(\epsilon  , qs , {\textit{data}} , {\textit{t}}))                               &{}\triangleq {}& \epsilon
     \cr  {\text{exp}}    &{}\mathrel{:}{}& (\epsilon  \mathrel{:} {\mathbb{R}}, q \mathrel{:} A \rightarrow  B , {\textit{data}} \mathrel{:} [A] )         &{}\rightarrow {}& {\mathbb{N}} & {\textit{where}} & {\text{priv}}(exp(\epsilon  , q , {\textit{data}}))    &{}\triangleq {}& \epsilon
     \cr  {\text{map}}     &{}\mathrel{:}{}& (f \mathrel{:} A \rightarrow  B , {\textit{in}} \mathrel{:} [A] ) &{}\rightarrow {}& [B]             & {\textit{where}} & {\text{sens}}({\text{map}}(f, \underline{\hspace{0.66em}}))      &{}\triangleq {}& {\text{sens}}(f(\underline{\hspace{0.66em}}))
     \cr  \hline
     \end{array}
  \end{gather*}\endgroup
{\footnotesize $priv$: denotes the privacy leakage of a program given by dynamic analysis; $sens$: denotes the sensitivity of a program given by dynamic analysis; $conv$: represents the conversion equation from renyi to approximate differential privacy }
\newline
{\footnotesize Types are written as follows: the $\rightarrow $ symbol is used to seperate the domain and range of a function, either of which may be given as an atomic type such as a natural number (${\mathbb{N}}$), or as a tuple which is a comma-seperated list of types surrounded by parentheses, or as a symbol ($A$) indicating parametric polymorphism (generics). In some cases, types may also be accompanied with a placeholder name ($\epsilon  \mathrel{:} {\mathbb{R}}$) for further qualification in the $where$ clause.}
\caption{Core API Methods}
\end{figure*}

We have developed a reference implementation of \dduo as a Python
library, using the approaches described in
Sections~\ref{sec:dduo-example}, \ref{sec:sensitivity}, and
\ref{sec:privacy}.

A major goal in the design of \dduo is seamless integration with other
libraries.
Our reference implementation provides initial support for NumPy,
Pandas, and Sklearn.
\dduo provides hooks for tracking both sensitivity and privacy, to
simplify integrating with additional libraries.

We present the case studies which focus on
demonstrating \dduo's (1) similarity to regular Python code, (2)
applicability to complex algorithms, (3) easy integration with
existing libraries. In this section we focus on the Noisy Gradient Descent case study,
other case studies have been moved to the appendix due to space requirements.
We introduce new
\emph{adaptive} variants of algorithms that stop early when
possible to conserve privacy budget. These variants cannot be verified
by prior work using purely static analyses, because their privacy
parameters are chosen adaptively.

\paragraph{Run-time overhead.}
Run-time overhead is a key concern in \dduo's instrumentation for dynamic analysis. Fortunately, experiments on our case studies suggest that the overhead of \dduo's analysis is generally low. Table~\ref{tbl:case_studies} presents the run-time performance overhead of \dduo's analysis as a percentage \emph{increase} of total runtime. The worst overhead time observed in our case studies was less than $60\%$.

In certain rare cases, \dduo's overhead can be much higher. For example, mapping the function \texttt{lambda x: x + 1} over a list of 1 million numbers takes 160x longer under \dduo than in standard Python. The overhead in this case comes from a combination of factors: first, \dduo's \texttt{map} function, itself implemented in Python, is much slower than Python's built-in \texttt{map} operator; second, \dduo's \texttt{map} function requires the creation of a new \wrapper object for each element of the list---a slow operation in Python.

Fortunately, the same strategies for producing high-performance Python code \emph{without} privacy also help reduce \dduo's overhead. Python's performance characteristics have prompted the development of higher-performance libraries like NumPy and Pandas, which essentially provide data-processing combinators that programmers compose. By providing sensitivity annotations for these libraries, we can re-use these high-performance implementations and avoid creating extra objects. As a result, none of our case studies demonstrates the worst-case performance overhead described above.

\begin{table}
\centering
\begin{tabular}{l l l l}
\textbf{Technique} & \textbf{Ref.} & \textbf{Libraries Used} & \textbf{Overhead} \\
\hline
Noisy Gradient Descent & \cite{BST}  & NumPy & $6.42\%$ \\
Multiplicative Weights (MWEM) & \cite{hardt2012simple}  & Pandas & $14.90\%$ \\
Private Naive Bayes Classification & \cite{vaidya2013}  & DiffPrivLib & $12.44\%$ \\
Private Logistic Regression & \cite{Chaudhuri2008}  & DiffPrivLib & $56.33\%$ \\
\end{tabular}
\smallskip
\caption{List of case studies included with the \dduo
  implementation.}
\vspace*{-5mm}
\label{tbl:case_studies}
\end{table}

\paragraph{Case study: gradient descent with NumPy.}
Our first case study (Figure~\ref{fig:gd}) is a simple machine learning algorithm based on \cite{BST} implemented directly with \dduo-instrumented NumPy primitives. The remaining case studies appear in the Appendix.

Given a dataset $X$ which is a list of feature vectors representing training examples, and a vector $y$ which classifies each element of $X$ in a finite set, gradient descent is the process of computing a model (a linear set of weights) which most accurately classifies a new, never seen before training example, based on our pre-existing evidence represented by the model.

Gradient descent works by first specifying a loss function that computes the effectiveness of a model in classifying a given dataset according to its known labels. The algorithm then iteratively computes a model that minimizes the loss function, by calculating the gradient of the loss function and moving the model in the opposite direction of the gradient.

One method of ensuring privacy in gradient descent involves adding noise to the gradient calculation, which is the only part of the process that is exposed to the private training data. In order to add noise to the gradient, it is convenient to bound its sensitivity via clipping to some L2 norm. In this example, clipping occurs in the \texttt{gradient\_sum} function before summation.

The original implementation of this algorithm~\cite{BST} was based on the advanced composition theorem. Advanced composition improves on sequential composition by providing much tighter privacy bounds over several iterations, but requires the analyst to fix the number of iterations up front, regardless of how
many iterations the gradient descent algorithm actually takes to converge to minimal error.

We present a modified version based on \emph{adaptive} Renyi differential privacy which provides not only a tighter analysis of the privacy leakage over several iterations, but also
allows the analyst to halt computation adaptively (conserving the remaining privacy budget) once a certain level of model accuracy has been reached, or loss has been minimized. We obscure the accuracy calculation
because it is a computation on the sensitive input training dataset in this case.

\begin{figure}

\begin{minted}{python}
def dp_gradient_descent(iterations, alpha, eps):
  eps_i = eps/iterations
  theta = np.zeros(X.shape[1])
  with dduo.RenyiFilter(alpha,eps_max):
    with dduo.RenyiOdometer((alpha, eps)) as odo:
      noisy_count = dduo.renyi_gauss(|$\alpha $|=alpha,
        |$\epsilon $|=eps,X.shape[0])
        priv_acc = 0
        acc_diff = 1
      while acc_diff > 0.05:
        grad_sum = gradient_sum(theta,
          X_train, y_train, sensitivity)
        noisy_grad_sum = dduo.gauss_vec(grad_sum,
          |$\alpha $|=alpha,|$\epsilon $|=eps_i)
        noisy_avg_grad = noisy_grad_sum/noisy_count
        theta = np.subtract(theta, noisy_avg_grad)
        priv_acc_curr = dduo.renyi_gauss(alpha,
          eps_acc, accuracy(theta))
        acc_diff =  priv_acc_curr - priv_acc
        priv_acc = priv_acc_curr
      print(odo)
    return theta

theta = dp_gradient_descent(iterations,
  |$\alpha $|=alpha, |$\epsilon $|=epsilon)
acc = dduo.renyi_gauss(alpha,
  eps_acc, accuracy(theta))
print(f"final accuracy: {acc}")
\end{minted}

\vspace{-1em}
\begin{minted}[frame=lines,bgcolor=mygray]{text}
Odometer_|$(\alpha ,\epsilon )$|(|$\{ \textit{data.csv} \mapsto  (10.0, 2.40)\} $|)
final accuracy: 0.753
\end{minted}
\vspace{-2ex}
\caption{Gradient Descent with NumPy}
\label{fig:gd}
\end{figure}

\section{Related Work}
\label{sec:related}

\paragraph{Dynamic Enforcement of Differential Privacy.}
The first approach for dynamic enforcement of differential privacy was \pinq \cite{mcsherry2009}.
Since then several works have been based on \pinq, such as Featherweight PINQ \cite{ebadi2015featherweight} which models \pinq formally and proves that any programs which use its simplified \pinq API are differentially private. ProPer \cite{ebadi2015} is a system (based on \pinq) designed to maintain a privacy budget for each individual in a database system, and operates by silently dropping records from queries when their privacy budget is exceeded. UniTrax \cite{munz2018} follows up on ProPer: this system allows per-user budgets but gets around the issue of silently dropping records by tracking queries against an abstract database as opposed to the actual database records.
These approaches are limited to an embedded DSL for expressing relational database queries, and do not support general purpose programming.

A number of programming frameworks for differential privacy have been
developed as libraries for existing programming languages. {\sc
  DPella}~\cite{lobo2020programming} is a Haskell library that
provides static bounds on the accuracy of differentially private
programs. Diffprivlib~\cite{holohan2019diffprivlib} (for Python) and
Google's library~\cite{wilson2020differentially} (for several
languages) provide differentially private algorithms, but do not track
sensitivity or privacy as these algorithms are composed.
{{\color{\colorMATH}\ensuremath{\mathit{\epsilon }}}}ktelo~\cite{zhang2018ektelo} executes programmer-specified
\emph{plans} that encode differentially private algorithms using
framework-supplied building blocks.

\paragraph{Dynamic Information Flow Control.}
Our approach to dynamic enforcement of differential privacy can be seen as similar
to work on dynamic information flow control (IFC) and taint analysis \cite{Austin2009EfficientPI}. The sensitivities that we attach to values are comparable to IFC labels. However, dynamic IFC typically allows the programmer to branch on sensitive information and handles implicit flows dynamically. \dduo prevents branching on sensitive information, similar to an approach taken for preventing side-channels in the "constant time" programming discipline for cryptographic code \cite{barthe2019}.
Another connection with this line of work is that our use of the logical relations proof technique for a differential privacy theorem is similar to the usage of this technique for noninterference theorems \cite{sabelfield1999,heintze1998,abadi1999,bowman2015,algehed2019,gregersen2021}.

\paragraph{Static Verification of Differential Privacy.}
The first approach for static verification of differential privacy was
\fuzz~\cite{reed2010distance}, which used \emph{linear types} to
analyze sensitivity. \dfuzz~\cite{gaboardi2013linear} adds dependent
types, and later work~\cite{de2019probabilistic, near2019duet} extends
the approach to {{\color{\colorMATH}\ensuremath{\mathit{(\epsilon , \delta )}}}}-differential privacy. \fuzzi~\cite{zhang2019fuzzi}
integrates a \fuzz-like type system with an expressive program logic.
\textit{Adaptive Fuzz}~\cite{Winograd-CortHR17} combines a \fuzz-style
\emph{static} type system for sensitivity analysis with a \emph{dynamic}
privacy analysis using privacy filters and odometers; \emph{Adaptive
  Fuzz} is most similar to our approach, but uses a static sensitivity
analysis. All of these approaches require additional type annotations.

A second category is based on \emph{approximate
  couplings}~\cite{DBLP:journals/lmcs/BartheEHSS19}. The
\apRHL~\cite{Barthe:POPL12,Barthe:TOPLAS:13},
\apRHLplus~\cite{Barthe:LICS16}, and \spanapRHL~\cite{Sato:LICS19}
relational logics are extremely expressive, but
less amenable to automation. Albarghouthi and
Hsu~\cite{DBLP:journals/pacmpl/AlbarghouthiH18} use an alternative
approach based on constraint solving to synthesize approximate
couplings.
A third approach is based on \emph{randomness alignments};
LightDP~\cite{zhang2017lightdp} and ShadowDP~\cite{wang2019proving}
take this approach. Randomness alignments are effective for verifying
low-level mechanisms like the sparse vector technique. The latter two
categories are generally restricted to first order imperative
programs.

\paragraph{Dynamic Testing for Differential Privacy.}
A recent line of work~\cite{bichsel2018dp, ding2018detecting,
  wang2020checkdp, wilson2020differentially} has resulted in
approaches for \emph{testing} differentially private programs. These
approaches generate a series of neighboring inputs, run the program
many times on the neighboring inputs, and raise an alarm if a
counterexample is found. These approaches do not require type
annotations, but do require running the program many times.





\section{Conclusion}
\label{sec:conclusion}
We have presented \dduo, a dynamic analysis that supports general-purpose differentially private programming with an emphasis on machine learning.
We have formalized the sensitivity analysis of \dduo and proven its soundness using a step-indexed logical relation. Our case studies demonstrate the utility of \dduo by enforcing adaptive variants of several differentially private state-of-the-art machine learning algorithms from the ground up, while integrating with some of Python's most popular libraries for data analysis.

\clearpage

\bibliographystyle{plain}
\bibliography{refs,gdp}

\begin{thebibliography}{10}

\bibitem{abadi1999}
Mart\'{\i}n Abadi, Anindya Banerjee, Nevin Heintze, and Jon~G. Riecke.
\newblock A core calculus of dependency.
\newblock In {\em Proceedings of the 26th ACM SIGPLAN-SIGACT Symposium on
  Principles of Programming Languages}, POPL '99, pages 147--160, New York, NY,
  USA, 1999. Association for Computing Machinery.

\bibitem{abowd2018}
John~M. Abowd.
\newblock The u.s. census bureau adopts differential privacy.
\newblock In {\em Proceedings of the 24th ACM SIGKDD International Conference
  on Knowledge Discovery and Data Mining}, KDD '18, page 2867, New York, NY,
  USA, 2018. Association for Computing Machinery.

\bibitem{ahmed2006}
Amal Ahmed.
\newblock Step-indexed syntactic logical relations for recursive and quantified
  types.
\newblock In Peter Sestoft, editor, {\em Programming Languages and Systems},
  pages 69--83, Berlin, Heidelberg, 2006. Springer Berlin Heidelberg.

\bibitem{DBLP:journals/pacmpl/AlbarghouthiH18}
Aws Albarghouthi and Justin Hsu.
\newblock Synthesizing coupling proofs of differential privacy.
\newblock {\em {PACMPL}}, 2({POPL}):58:1--58:30, 2018.

\bibitem{algehed2019}
Maximilian Algehed and Jean-Philippe Bernardy.
\newblock Simple noninterference from parametricity.
\newblock {\em Proc. ACM Program. Lang.}, 3(ICFP), July 2019.

\bibitem{Austin2009EfficientPI}
Thomas~H. Austin and C.~Flanagan.
\newblock Efficient purely-dynamic information flow analysis.
\newblock In {\em PLAS '09}, 2009.

\bibitem{barthe2019}
Gilles Barthe, Gustavo Betarte, Juan~Diego Campo, and Carlos Luna.
\newblock System-level non-interference of constant-time cryptography. part
  {I:} model.
\newblock {\em J. Autom. Reason.}, 63(1):1--51, 2019.

\bibitem{DBLP:journals/lmcs/BartheEHSS19}
Gilles Barthe, Thomas Espitau, Justin Hsu, Tetsuya Sato, and Pierre{-}Yves
  Strub.
\newblock Relational {\(\star\)}{\(\star\)}{\textbackslash}star-liftings for
  differential privacy.
\newblock {\em Logical Methods in Computer Science}, 15(4), 2019.

\bibitem{Barthe:LICS16}
Gilles Barthe, Marco Gaboardi, Benjamin Gr\'{e}goire, Justin Hsu, and
  Pierre-Yves Strub.
\newblock Proving differential privacy via probabilistic couplings.
\newblock In {\em Proceedings of the 31st Annual ACM/IEEE Symposium on Logic in
  Computer Science}, LICS '16, pages 749--758, New York, NY, USA, 2016.
  Association for Computing Machinery.

\bibitem{Barthe:POPL12}
Gilles Barthe, Boris K\"{o}pf, Federico Olmedo, and Santiago
  Zanella~B\'{e}guelin.
\newblock Probabilistic relational reasoning for differential privacy.
\newblock In {\em Proceedings of the 39th Annual ACM SIGPLAN-SIGACT Symposium
  on Principles of Programming Languages}, POPL '12, pages 97--110, New York,
  NY, USA, 2012. Association for Computing Machinery.

\bibitem{Barthe:TOPLAS:13}
Gilles Barthe, Boris K\"{o}pf, Federico Olmedo, and Santiago
  Zanella-B\'{e}guelin.
\newblock Probabilistic relational reasoning for differential privacy.
\newblock {\em ACM Trans. Program. Lang. Syst.}, 35(3), November 2013.

\bibitem{BST}
Raef Bassily, Adam Smith, and Abhradeep Thakurta.
\newblock Private empirical risk minimization: Efficient algorithms and tight
  error bounds.
\newblock In {\em Foundations of Computer Science (FOCS), 2014 IEEE 55th Annual
  Symposium on}, pages 464--473. IEEE, 2014.

\bibitem{bichsel2018dp}
Benjamin Bichsel, Timon Gehr, Dana Drachsler-Cohen, Petar Tsankov, and Martin
  Vechev.
\newblock Dp-finder: Finding differential privacy violations by sampling and
  optimization.
\newblock In {\em Proceedings of the 2018 ACM SIGSAC Conference on Computer and
  Communications Security}, pages 508--524, 2018.

\bibitem{bowman2015}
William~J. Bowman and Amal Ahmed.
\newblock Noninterference for free.
\newblock {\em SIGPLAN Not.}, 50(9):101--113, August 2015.

\bibitem{bun2016concentrated}
Mark Bun and Thomas Steinke.
\newblock Concentrated differential privacy: Simplifications, extensions, and
  lower bounds.
\newblock In {\em Theory of Cryptography Conference}, pages 635--658. Springer,
  2016.

\bibitem{Chaudhuri2008}
K.~Chaudhuri and C.~Monteleoni.
\newblock Privacy-preserving logistic regression.
\newblock In {\em NIPS}, 2008.

\bibitem{de2019probabilistic}
Arthur~Azevedo de~Amorim, Marco Gaboardi, Justin Hsu, and Shin-ya Katsumata.
\newblock Probabilistic relational reasoning via metrics.
\newblock In {\em 2019 34th Annual ACM/IEEE Symposium on Logic in Computer
  Science (LICS)}, pages 1--19. IEEE, 2019.

\bibitem{ding2018detecting}
Zeyu Ding, Yuxin Wang, Guanhong Wang, Danfeng Zhang, and Daniel Kifer.
\newblock Detecting violations of differential privacy.
\newblock In {\em Proceedings of the 2018 ACM SIGSAC Conference on Computer and
  Communications Security}, pages 475--489, 2018.

\bibitem{dwork2014algorithmic}
Cynthia Dwork, Aaron Roth, et~al.
\newblock The algorithmic foundations of differential privacy.
\newblock {\em Foundations and Trends{\textregistered} in Theoretical Computer
  Science}, 9(3--4):211--407, 2014.

\bibitem{ebadi2015featherweight}
Hamid Ebadi and David Sands.
\newblock Featherweight pinq, 2015.

\bibitem{ebadi2015}
Hamid Ebadi, David Sands, and Gerardo Schneider.
\newblock Differential privacy: Now it's getting personal.
\newblock volume~50, 01 2015.

\bibitem{evans646717}
Georgina Evans and Gary King.
\newblock Statistically valid inferences from differentially private data
  releases, with application to the facebook urls dataset, Working Paper.

\bibitem{evans643747}
Georgina Evans, Gary King, Margaret Schwenzfeier, and Abhradeep Thakurta.
\newblock Statistically valid inferences from privacy protected data, Working
  Paper.

\bibitem{feldman2020individual}
Vitaly Feldman and Tijana Zrnic.
\newblock Individual privacy accounting via a renyi filter.
\newblock {\em arXiv preprint arXiv:2008.11193}, 2020.

\bibitem{gaboardi2013linear}
Marco Gaboardi, Andreas Haeberlen, Justin Hsu, Arjun Narayan, and Benjamin~C
  Pierce.
\newblock Linear dependent types for differential privacy.
\newblock In {\em Proceedings of the 40th annual ACM SIGPLAN-SIGACT symposium
  on Principles of programming languages}, pages 357--370, 2013.

\bibitem{gregersen2021}
Simon~Oddershede Gregersen, Johan Bay, Amin Timany, and Lars Birkedal.
\newblock Mechanized logical relations for termination-insensitive
  noninterference.
\newblock {\em Proc. ACM Program. Lang.}, 5(POPL), January 2021.

\bibitem{guevara_2019}
Miguel Guevara.
\newblock Enabling developers and organizations to use differential privacy,
  Sep 2019.

\bibitem{guevara_2020}
Miguel Guevara, Mirac~Vuslat Basaran, Sasha Kulankhina, and Badih Ghazi.
\newblock Expanding our differential privacy library, Jun 2020.

\bibitem{hardt2012simple}
Moritz Hardt, Katrina Ligett, and Frank McSherry.
\newblock A simple and practical algorithm for differentially private data
  release.
\newblock In {\em Advances in Neural Information Processing Systems}, pages
  2339--2347, 2012.

\bibitem{heintze1998}
Nevin Heintze and Jon~G. Riecke.
\newblock The slam calculus: Programming with secrecy and integrity.
\newblock In {\em Proceedings of the 25th ACM SIGPLAN-SIGACT Symposium on
  Principles of Programming Languages}, POPL '98, pages 365--377, New York, NY,
  USA, 1998. Association for Computing Machinery.

\bibitem{holohan2019diffprivlib}
Naoise Holohan, Stefano Braghin, P{\'o}l Mac~Aonghusa, and Killian Levacher.
\newblock Diffprivlib: the ibm differential privacy library.
\newblock {\em arXiv preprint arXiv:1907.02444}, 2019.

\bibitem{kifer2020}
Daniel Kifer, Solomon Messing, Aaron Roth, Abhradeep Thakurta, and Danfeng
  Zhang.
\newblock Guidelines for implementing and auditing differentially private
  systems, 2020.

\bibitem{king_persily_2020}
Gary King and Nathaniel Persily.
\newblock Unprecedented facebook urls dataset now available for academic
  research through social science one, Feb 2020.

\bibitem{lobo2020programming}
Elisabet Lobo-Vesga, Alejandro Russo, and Marco Gaboardi.
\newblock A programming framework for differential privacy with accuracy
  concentration bounds.
\newblock In {\em 2020 IEEE Symposium on Security and Privacy (SP)}, pages
  411--428. IEEE, 2020.

\bibitem{mcsherry2009}
Frank~D. McSherry.
\newblock Privacy integrated queries: An extensible platform for
  privacy-preserving data analysis.
\newblock In {\em Proceedings of the 2009 ACM SIGMOD International Conference
  on Management of Data}, SIGMOD '09, pages 19--30, New York, NY, USA, 2009.
  Association for Computing Machinery.

\bibitem{Mironov17}
Ilya Mironov.
\newblock R{\'{e}}nyi differential privacy.
\newblock In {\em 30th {IEEE} Computer Security Foundations Symposium, {CSF}
  2017, Santa Barbara, CA, USA, August 21-25, 2017}, pages 263--275. {IEEE}
  Computer Society, 2017.

\bibitem{munz2018}
Reinhard Munz, Fabienne Eigner, Matteo Maffei, Paul Francis, and Deepak Garg.
\newblock Unitrax: Protecting data privacy with discoverable biases.
\newblock In Lujo Bauer and Ralf K{\"u}sters, editors, {\em Principles of
  Security and Trust}, pages 278--299, Cham, 2018. Springer International
  Publishing.

\bibitem{nayak_2020}
Chaya Nayak.
\newblock New privacy-protected facebook data for independent research on
  social media's impact on democracy, Feb 2020.

\bibitem{near2019duet}
Joseph~P Near, David Darais, Chike Abuah, Tim Stevens, Pranav Gaddamadugu, Lun
  Wang, Neel Somani, Mu~Zhang, Nikhil Sharma, Alex Shan, et~al.
\newblock Duet: an expressive higher-order language and linear type system for
  statically enforcing differential privacy.
\newblock {\em Proceedings of the ACM on Programming Languages},
  3(OOPSLA):1--30, 2019.

\bibitem{reed2010distance}
Jason Reed and Benjamin~C Pierce.
\newblock Distance makes the types grow stronger: a calculus for differential
  privacy.
\newblock In {\em Proceedings of the 15th ACM SIGPLAN international conference
  on Functional programming}, pages 157--168, 2010.

\bibitem{RogersVRU16}
Ryan~M. Rogers, Salil~P. Vadhan, Aaron Roth, and Jonathan Ullman.
\newblock Privacy odometers and filters: Pay-as-you-go composition.
\newblock In Daniel~D. Lee, Masashi Sugiyama, Ulrike von Luxburg, Isabelle
  Guyon, and Roman Garnett, editors, {\em Advances in Neural Information
  Processing Systems 29: Annual Conference on Neural Information Processing
  Systems 2016, December 5-10, 2016, Barcelona, Spain}, pages 1921--1929, 2016.

\bibitem{sabelfield1999}
Andrei Sabelfeld and David Sands.
\newblock A per model of secure information flow in sequential programs.
\newblock In S.~Doaitse Swierstra, editor, {\em Programming Languages and
  Systems}, pages 40--58, Berlin, Heidelberg, 1999. Springer Berlin Heidelberg.

\bibitem{Sato:LICS19}
T.~{Sato}, G.~{Barthe}, M.~{Gaboardi}, J.~{Hsu}, and S.~{Katsumata}.
\newblock Approximate span liftings: Compositional semantics for relaxations of
  differential privacy.
\newblock In {\em 2019 34th Annual ACM/IEEE Symposium on Logic in Computer
  Science (LICS)}, pages 1--14, June 2019.

\bibitem{vaidya2013}
J.~{Vaidya}, B.~{Shafiq}, A.~{Basu}, and Y.~{Hong}.
\newblock Differentially private naive bayes classification.
\newblock In {\em 2013 IEEE/WIC/ACM International Joint Conferences on Web
  Intelligence (WI) and Intelligent Agent Technologies (IAT)}, volume~1, pages
  571--576, 2013.

\bibitem{wang2020checkdp}
Yuxin Wang, Zeyu Ding, Daniel Kifer, and Danfeng Zhang.
\newblock Checkdp: An automated and integrated approach for proving
  differential privacy or finding precise counterexamples.
\newblock In {\em Proceedings of the 2020 ACM SIGSAC Conference on Computer and
  Communications Security}, pages 919--938, 2020.

\bibitem{wang2019proving}
Yuxin Wang, Zeyu Ding, Guanhong Wang, Daniel Kifer, and Danfeng Zhang.
\newblock Proving differential privacy with shadow execution.
\newblock In {\em Proceedings of the 40th ACM SIGPLAN Conference on Programming
  Language Design and Implementation}, pages 655--669, 2019.

\bibitem{wilson2019}
Royce~J. Wilson, Celia~Yuxin Zhang, William Lam, Damien Desfontaines, Daniel
  Simmons{-}Marengo, and Bryant Gipson.
\newblock Differentially private {SQL} with bounded user contribution.
\newblock {\em CoRR}, abs/1909.01917, 2019.

\bibitem{wilson2020differentially}
Royce~J Wilson, Celia~Yuxing Zhang, William Lam, Damien Desfontaines, Daniel
  Simmons-Marengo, and Bryant Gipson.
\newblock Differentially private sql with bounded user contribution.
\newblock {\em Proceedings on Privacy Enhancing Technologies}, 2020(2), 2020.

\bibitem{Winograd-CortHR17}
Daniel Winograd{-}Cort, Andreas Haeberlen, Aaron Roth, and Benjamin~C. Pierce.
\newblock A framework for adaptive differential privacy.
\newblock {\em Proc. {ACM} Program. Lang.}, 1({ICFP}):10:1--10:29, 2017.

\bibitem{zhang2018ektelo}
Dan Zhang, Ryan McKenna, Ios Kotsogiannis, Michael Hay, Ashwin Machanavajjhala,
  and Gerome Miklau.
\newblock Ektelo: A framework for defining differentially-private computations.
\newblock In {\em Proceedings of the 2018 International Conference on
  Management of Data}, pages 115--130, 2018.

\bibitem{zhang2017lightdp}
Danfeng Zhang and Daniel Kifer.
\newblock Lightdp: towards automating differential privacy proofs.
\newblock In {\em Proceedings of the 44th ACM SIGPLAN Symposium on Principles
  of Programming Languages}, pages 888--901, 2017.

\bibitem{zhang2019fuzzi}
Hengchu Zhang, Edo Roth, Andreas Haeberlen, Benjamin~C Pierce, and Aaron Roth.
\newblock Fuzzi: A three-level logic for differential privacy.
\newblock {\em Proceedings of the ACM on Programming Languages}, 3(ICFP):1--28,
  2019.

\bibitem{rappor}
Úlfar Erlingsson, Vasyl Pihur, and Aleksandra Korolova.
\newblock Rappor: Randomized aggregatable privacy-preserving ordinal response.
\newblock In {\em Proceedings of the 21st ACM Conference on Computer and
  Communications Security}, Scottsdale, Arizona, 2014.

\end{thebibliography}

\appendices

\section{Lemmas,  Theorems \& Proofs}

\begin{lemma}[Plus Respects]\label{thm:plus-respects}
  If {{\color{\colorMATH}\ensuremath{\mathit{r_{1} \sim _{m}^{r} r_{2}}}}}
  Then {{\color{\colorMATH}\ensuremath{\mathit{(r_{1} + r_{3}) \sim _{m}^{r} (r_{2} + r_{3})}}}}.
\end{lemma}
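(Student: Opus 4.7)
The plan is to proceed by case analysis on the metric $m \in \{\texttt{diff}, \texttt{disc}, \bot, \top\}$, reducing in each case to an arithmetic fact about how translation by $r_3$ interacts with the underlying definition of $\sim_m^r$.

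First I would unfold the definition of $\sim_m^r$ in each case. For $m = \texttt{diff}$, the hypothesis gives $|r_1 - r_2| \leq r$, and since $(r_1 + r_3) - (r_2 + r_3) = r_1 - r_2$, the conclusion $|(r_1 + r_3) - (r_2 + r_3)| \leq r$ is immediate. For $m = \texttt{disc}$, observe that the truth value of the equality $r_1 = r_2$ is identical to the truth value of $r_1 + r_3 = r_2 + r_3$ (addition by $r_3$ is a bijection on $\mathbb{R}$); hence the two-branch condition on $r$ transfers unchanged from the hypothesis to the conclusion.

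The two derived metric cases then follow from the base cases by pure propositional reasoning: for $m = \bot$, the hypothesis is the conjunction of the $\texttt{diff}$ and $\texttt{disc}$ facts, and applying the two base cases above to each conjunct yields the conjunction for $(r_1 + r_3)$ and $(r_2 + r_3)$; for $m = \top$, the hypothesis is a disjunction, and we case-split on which disjunct holds and apply the corresponding base case.

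I do not anticipate any real obstacle here: the lemma is essentially the observation that addition by a constant is an isometry with respect to both base metrics (and therefore with respect to the derived $\sqcup$/$\sqcap$-style metrics $\top$ and $\bot$). The only minor care needed is to be explicit that $r_1 = r_2 \iff r_1 + r_3 = r_2 + r_3$ in the discrete case, so that both sub-clauses of the piecewise definition carry over simultaneously.
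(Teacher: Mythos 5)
Your proof is correct and is essentially the paper's proof made explicit: the paper disposes of this lemma with ``by unfolding definitions and simple arithmetic,'' and your case analysis on $m$ together with the observations that $(r_1+r_3)-(r_2+r_3)=r_1-r_2$ and that $r_1=r_2 \iff r_1+r_3=r_2+r_3$ is exactly the unfolding and arithmetic being alluded to.
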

\begin{proof}
  By unfolding definitions and simple arithmetic.
\end{proof}

\begin{lemma}[Times Respects]\label{thm:times-respects}
  If {{\color{\colorMATH}\ensuremath{\mathit{r_{1} \sim _{m}^{r} r_{2}}}}}
  then {{\color{\colorMATH}\ensuremath{\mathit{(r_{1} \times  r_{3}) \sim _{m}^{r_{3}r} (r_{2} \times  r_{3})}}}}.
\end{lemma}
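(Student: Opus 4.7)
The plan is to proceed by case analysis on the metric $m$, unfolding the corresponding definition of $\sim_m^r$ in each case and reducing to an arithmetic inequality. The structure closely mirrors the proof of \nameref{thm:plus-respects}, but with the extra bookkeeping that multiplication scales the distance by the factor $r_3$ rather than preserving it.

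First I would dispatch the base metric cases. For $m = \texttt{diff}$, the hypothesis unfolds to $|r_1 - r_2| \leq r$. I would then factor $|r_1 r_3 - r_2 r_3| = |r_3|\cdot |r_1 - r_2| \leq |r_3|\cdot r$ and conclude by observing that $|r_3|\cdot r \leq r_3 \cdot r$ under the sign conventions used for scalar factors in the \textsc{Times-L}/\textsc{Times-R} rules (where the scalar side carries the zero sensitivity environment $\mathbf{Z}$). For $m = \texttt{disc}$, I would split on whether $r_1 = r_2$: in the equal case, multiplication preserves equality so the goal reduces to $0 \leq r_3 r$, which follows from $0 \leq r$; in the unequal case, the hypothesis gives $1 \leq r$, and I split further on whether $r_3 = 0$ (so that $r_1 r_3 = r_2 r_3$ and the goal becomes $0 \leq 0$) or $r_3 \neq 0$ (so that $r_1 r_3 \neq r_2 r_3$ and the goal $1 \leq r_3 r$ follows from $r \geq 1$ together with the scalar-factor conventions).

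The derived metrics then fall out immediately by the definitions $\sim_\bot^r \triangleq \sim_{\texttt{diff}}^r \wedge \sim_{\texttt{disc}}^r$ and $\sim_\top^r \triangleq \sim_{\texttt{diff}}^r \vee \sim_{\texttt{disc}}^r$: for $\bot$ I would prove both conjuncts via the two base-metric arguments above, and for $\top$ I would preserve whichever disjunct the hypothesis supplies and apply the corresponding base case.

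The main obstacle is the discrete-metric subcase when $r_3$ is strictly between $0$ and $1$: here multiplication by $r_3$ does not in general preserve the discrete distance of $1$, and the conclusion $1 \leq r_3 r$ need not follow from $1 \leq r$ alone. I expect this to be handled by the invariant maintained by the semantics---namely that the scalar position of multiplication (as in \textsc{Times-L} and \textsc{Times-R}) carries a $\mathbf{Z}$ sensitivity environment, so in context the lemma is only ever instantiated at integer or otherwise well-behaved scalars, and the remaining cases go through by elementary arithmetic.
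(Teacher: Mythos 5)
You correctly unfold the relation by case analysis on the metric and reduce to elementary inequalities, which is exactly what the paper does---its entire proof is the one-liner ``By unfolding definitions and simple arithmetic.'' But you have also put your finger on a genuine issue, and the fix you reach for does not close it. For the discrete metric with $r_1 \neq r_2$, the hypothesis gives $1 \leq r$ while the conclusion demands $1 \leq r_3 r$, which fails whenever $0 < r_3 < 1$ (take $r = 1$, $r_3 = 1/2$, $r_1 = 1$, $r_2 = 2$: then $r_1 r_3 = 1/2 \neq 1 = r_2 r_3$, yet $r_3 r = 1/2 < 1$). You propose that this is excluded by a semantic invariant coming from the $\mathbf{Z}$ sensitivity environment on the scalar side of \textsc{Times-L}/\textsc{Times-R}. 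That inference is not sound: $\mathbf{Z} = \{o \mapsto 0\}$ constrains the scalar's \emph{sensitivity}, not its \emph{value}. The constant $1/2$ is perfectly non-sensitive and is a legal left operand of $\ltimes$, and in the Times-L case of the metric preservation proof the lemma is applied with $r_3 = r_{11}$ under no side condition whatsoever. Your $\texttt{diff}$ case has the same unjustified step, since $|r_3|\cdot r \leq r_3\cdot r$ requires $r_3 \geq 0$, which again is not supplied by $\mathbf{Z}$.

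So the correct reading is: the easy cases of your argument match the paper's intended ``simple arithmetic,'' but the lemma as literally stated is false for $\texttt{disc}$ (and for $\texttt{diff}$ with negative scalar), and the paper's terse proof does not address this. Rather than inventing an invariant the formalism does not provide, you should surface the needed side condition explicitly---for instance, restricting $r_3$ to be a non-negative scalar and either excluding $\texttt{disc}$ from the scaled conclusion or replacing $r_3 r$ with a bound like $\lceil r_3\rceil r$---and note that the downstream Times-L/Times-R cases of metric preservation would need the same restriction.
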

\begin{proof}
  By unfolding definitions and simple arithmetic.
\end{proof}

\begin{lemma}[Triangle]\label{thm:triangle}
  If   {{\color{\colorMATH}\ensuremath{\mathit{r_{1} \sim ^{r_{A}}_{m_{A}} r_{2}}}}}
  and  {{\color{\colorMATH}\ensuremath{\mathit{r_{2} \sim ^{r_{B}}_{m_{B}} r_{3}}}}}
  then {{\color{\colorMATH}\ensuremath{\mathit{r_{1} \sim ^{r_{A}+r_{B}}_{m_{A}\sqcup m_{B}} r_{3}}}}}.
\end{lemma}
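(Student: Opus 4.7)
The plan is to proceed by a case analysis on the pair of metrics $(m_A, m_B)$, unfolding the definition of $r \sim^r_m r$ from Figure~\ref{fig:logical-relations} in each case. Since both base metrics already satisfy a standard triangle inequality and the two derived metrics are pointwise conjunctions/disjunctions of the base ones, the whole argument reduces to a few elementary facts plus bookkeeping about the join $\sqcup$.

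First I would dispatch the two ``base/base'' cases. For $m_A = m_B = \mathtt{diff}$: from $|r_1 - r_2| \leq r_A$ and $|r_2 - r_3| \leq r_B$, the ordinary triangle inequality for absolute value gives $|r_1 - r_3| \leq r_A + r_B$, and $\mathtt{diff} \sqcup \mathtt{diff} = \mathtt{diff}$. For $m_A = m_B = \mathtt{disc}$: if $r_1 = r_3$ the conclusion is immediate; otherwise either $r_1 \neq r_2$ or $r_2 \neq r_3$, so one of $r_A, r_B$ is at least $1$, hence so is $r_A + r_B$. Next, the mixed case $\{m_A, m_B\} = \{\mathtt{diff}, \mathtt{disc}\}$, whose join is $\top$ and therefore requires only one disjunct: taking $m_A = \mathtt{diff}, m_B = \mathtt{disc}$, if $r_2 = r_3$ then $|r_1 - r_3| = |r_1 - r_2| \leq r_A \leq r_A + r_B$ witnesses the $\mathtt{diff}$ disjunct, and if $r_2 \neq r_3$ then $1 \leq r_B \leq r_A + r_B$ witnesses the $\mathtt{disc}$ disjunct (trivially when $r_1 = r_3$).

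The remaining cases involve $\bot$ or $\top$ and reduce to the ones above. When $m_A = \bot$, the hypothesis supplies both $r_1 \sim^{r_A}_{\mathtt{diff}} r_2$ and $r_1 \sim^{r_A}_{\mathtt{disc}} r_2$; since $\bot \sqcup m_B = m_B$, I combine each conjunct as needed with the second hypothesis using the base cases, and symmetrically for $m_B = \bot$. When $m_A = \top$ or $m_B = \top$, the join is $\top$ and only one disjunct is required in the conclusion; I pick a disjunct supplied by the $\top$-hypothesis and pair it with the matching component of the other hypothesis via either a base case or the mixed case, depending on whether the other metric agrees with the chosen disjunct.

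The main obstacle is organizational rather than mathematical: there are sixteen $(m_A, m_B)$ combinations, and a naive enumeration is tedious and error-prone. A clean presentation factors the argument into two auxiliary facts, \emph{triangle for $\mathtt{diff}$} and \emph{triangle for $\mathtt{disc}$}, together with the mixed $\mathtt{diff}/\mathtt{disc}$ observation above; the $\bot$ and $\top$ cases then follow uniformly from the lattice identities $\bot \sqcup m = m$ and $\top \sqcup m = \top$ by projecting the appropriate conjunct or selecting the appropriate disjunct. This structure also makes the proof robust to adding further derived metrics in the future.
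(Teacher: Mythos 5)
Your proposal is correct and takes essentially the same approach as the paper, which disposes of this lemma in one line (``by unfolding definitions, simple arithmetic, and the standard triangle inequality property for real numbers''); your case analysis on $(m_A, m_B)$ simply spells out the bookkeeping the paper leaves implicit.
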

\begin{proof}
  By unfolding definitions, simple arithmetic, and the standard
  triangle inequality property for real numbers.
\end{proof}

\begin{lemma}[Real Metric Weakening]\label{thm:weakening}
  If {{\color{\colorMATH}\ensuremath{\mathit{r_{1} \sim ^{r}m r_{2}}}}},
  {{\color{\colorMATH}\ensuremath{\mathit{r \leq  r^{\prime}}}}}
  and  {{\color{\colorMATH}\ensuremath{\mathit{m \sqsubseteq  m^{\prime}}}}}
  then {{\color{\colorMATH}\ensuremath{\mathit{r_{1} \sim ^{r^{\prime}}_{m^{\prime}} r_{2}}}}}.
\end{lemma}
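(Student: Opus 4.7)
The plan is to prove the lemma by a straightforward case analysis on the metric ordering $m \sqsubseteq m'$, followed in each case by unfolding the definitions of the relevant metrics and appealing to elementary monotonicity of $\leq$ on reals. Since the metric lattice has only four elements ($\bot$, $\mathtt{diff}$, $\mathtt{disc}$, $\top$) with $\bot$ below both base metrics and $\top$ above both, there are only a small finite number of cases to consider, and they split naturally into two groups: the diagonal cases where $m = m'$, and the strict weakening cases where $m \sqsubsetneq m'$.

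For the diagonal cases, I will simply unfold the definition of $\sim^{(\cdot)}_m$ and observe that in each base definition, $r$ appears only on the right-hand side of a $\leq$. Thus $r_1 \sim_m^r r_2$ together with $r \leq r'$ yields $r_1 \sim_m^{r'} r_2$ by transitivity of $\leq$ (both for $\mathtt{diff}$, where we need $|r_1 - r_2| \leq r \leq r'$, and for $\mathtt{disc}$, where we need either $0 \leq r \leq r'$ or $1 \leq r \leq r'$). The cases $m = m' = \bot$ and $m = m' = \top$ follow immediately from the base cases applied componentwise through the conjunction or disjunction.

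For the strict weakening cases, I will use the fact that $\bot$ is defined as the \emph{conjunction} of the two base relations and $\top$ as their \emph{disjunction}. So when $m = \bot$ and $m' \in \{\mathtt{diff}, \mathtt{disc}, \top\}$, the hypothesis $r_1 \sim_\bot^r r_2$ immediately gives the required component (or both components, for $\top$); when $m \in \{\mathtt{diff}, \mathtt{disc}, \bot\}$ and $m' = \top$, the hypothesis gives one of the disjuncts defining $\sim_\top$. In each subcase we also apply the diagonal monotonicity established above to replace $r$ with $r'$.

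The proof involves no real obstacle; the only mild nuisance is the bookkeeping of enumerating all pairs $(m, m')$ with $m \sqsubseteq m'$. I expect the entire argument to amount to roughly one sentence per case, with the common pattern ``unfold, extract the relevant component of $\bot$ or weaken into a disjunct of $\top$, then apply monotonicity of $\leq$.'' No induction or step-index manipulation is required, since the relation $\sim^r_m$ at base type is defined without reference to the step index.
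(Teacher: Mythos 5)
Your proof is correct and takes essentially the same approach as the paper, whose entire proof reads ``By unfolding definitions and simple arithmetic''; your detailed enumeration of the nine pairs $(m,m')$ and the observation that $r$ appears only on the right of $\leq$ is exactly what that sketch abbreviates. The case analysis is complete given the diamond lattice $\bot \sqsubseteq \{\texttt{diff},\texttt{disc}\} \sqsubseteq \top$, and your treatment of $\bot$ as conjunction (project a component) and $\top$ as disjunction (inject a component) is the right reading of the definitions.
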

\begin{proof}
  By unfolding definitions and simple arithmetic.
\end{proof}

\begin{lemma}[Step-index Weakening]\label{thm:step-index-weakening}
  For {{\color{\colorMATH}\ensuremath{\mathit{n^{\prime} \leq  n}}}}:
  (1): If {{\color{\colorMATH}\ensuremath{\mathit{\rho _{1} \sim _{n}^{\Sigma } \rho _{2}}}}}             then {{\color{\colorMATH}\ensuremath{\mathit{\rho _{1} \sim _{n^{\prime}}^{\Sigma } \rho _{2}}}}};
  (2): If {{\color{\colorMATH}\ensuremath{\mathit{\sigma _{1} \sim _{n}^{\Sigma } \sigma _{2}}}}}             then {{\color{\colorMATH}\ensuremath{\mathit{\sigma _{1} \sim _{n^{\prime}}^{\Sigma } \sigma _{2}}}}}; and
  (3): If {{\color{\colorMATH}\ensuremath{\mathit{\rho _{1},\sigma _{1},e_{1} \sim _{n}^{\Sigma } \rho _{2},\sigma _{2},e_{2}}}}} then {{\color{\colorMATH}\ensuremath{\mathit{\rho _{1},\sigma _{1},e_{1} \sim _{n^{\prime}}^{\Sigma } \rho _{2},\sigma _{2},e_{2}}}}}.
\end{lemma}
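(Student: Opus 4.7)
The plan is to prove all three parts simultaneously by strong induction on $n$, strengthening the inductive statement with a fourth clause for the value relation itself: if $v_1 \sim_n^\Sigma v_2$ and $n' \leq n$ then $v_1 \sim_{n'}^\Sigma v_2$. This extra clause is needed because parts (1) and (2) unfold pointwise to value-relation obligations, and part (3) appeals to value and store weakening at indices strictly below $n$.

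The base case $n = 0$ forces $n' = 0$, so every clause holds trivially (for (3) via the definitional $\sim_0$, and for (1), (2), and the value clause because they are pointwise obligations discharged at any $n'$).

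For the inductive step I would handle the value clause first, by case analysis on the value form. Base values $r@_m^\Sigma$ and locations $\ell$ contain no nontrivial step-indexed content, so weakening is immediate from the unfolded definition. Pairs reduce to componentwise weakening at the same index, which follows by structural recursion on values. The closure case is pure quantifier monotonicity: the relation at $n$ universally quantifies over $n'' \leq n$, and once weakened to $n' \leq n$, any witness $n'' \leq n'$ automatically satisfies $n'' \leq n$, so the original hypothesis applies unchanged. Clauses (1) and (2) then follow immediately, since the environment and store relations are defined pointwise via the value relation at the same index.

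The main work is in clause (3), the configuration relation. If $n' = 0$ the conclusion is trivial. Otherwise write $n = m+1$ and $n' = k+1$ with $k \leq m$. To discharge the universal at level $n' = k+1$, I pick an arbitrary $n_1 \leq k$ together with two evaluations producing $\sigma_i', v_i$. Because $n_1 \leq k \leq m$, the hypothesis at level $n = m+1$ already yields $n_1 = n_2$ together with $\sigma_1' \sim_{m - n_1}^\Sigma \sigma_2'$ and $v_1 \sim_{m - n_1}^\Sigma v_2$. The goal demands the same judgments at the smaller index $k - n_1$. Since $m - n_1 < n$, I invoke the store and value clauses of the induction hypothesis at that smaller index to descend from $m - n_1$ to $k - n_1$, closing the case.

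There is no real obstacle beyond bookkeeping: the only subtlety is arranging the mutual statement so that each clause instantiated at index $n$ only appeals to induction-hypothesis instances at strictly smaller indices. Strong induction on $n$ combined with structural recursion on values is precisely what makes that arrangement work, and no step-indexed fixpoint reasoning is required.
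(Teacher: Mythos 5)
Your proof is correct and takes essentially the same approach as the paper's one-line sketch (mutual induction on $n$); the main value you add is making explicit the fourth clause for value-relation weakening, which the paper leaves implicit but which is indeed needed to discharge parts (1) and (2) after they unfold pointwise, and observing that the closure case is pure quantifier monotonicity while the pair case needs an inner structural recursion on value shape.
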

\begin{proof}
  By mutual induction on {{\color{\colorMATH}\ensuremath{\mathit{n}}}} for all three properties, and
  additionally case analysis on {{\color{\colorMATH}\ensuremath{\mathit{e_{1}}}}} and {{\color{\colorMATH}\ensuremath{\mathit{e_{2}}}}} for property {\textit{(3)}}.
\end{proof}

\begin{theorem}[Metric Preservation]\label{thm:metric-preservation}\ \\
  \begin{itemize}[label={},leftmargin=0pt]\item  \begin{tabular}{r@{\hspace*{1.00em}}l@{\hspace*{1.00em}}c
     } If:      & {{\color{\colorMATH}\ensuremath{\mathit{\rho _{1} \sim _{n}^{\Sigma } \rho _{2}}}}}                & {\textit{(H1)}}
     \cr  And:     & {{\color{\colorMATH}\ensuremath{\mathit{\sigma _{1} \sim _{n}^{\Sigma } \sigma _{2}}}}}                & {\textit{(H2)}}
     \cr  Then:    & {{\color{\colorMATH}\ensuremath{\mathit{\rho _{1},\sigma _{1},e \sim _{n}^{\Sigma } \rho _{2},\sigma _{2},e}}}}      &
     \end{tabular}
  \item
  \item  That is, either {{\color{\colorMATH}\ensuremath{\mathit{n = 0}}}} or {{\color{\colorMATH}\ensuremath{\mathit{n = n^{\prime}+1}}}} and\ldots
  \item
  \item  \begin{tabular}{r@{\hspace*{1.00em}}l@{\hspace*{1.00em}}c
     } If:      & {{\color{\colorMATH}\ensuremath{\mathit{n_{1} \leq  n^{\prime}}}}}                    & {\textit{(H3)}}
     \cr  And:     & {{\color{\colorMATH}\ensuremath{\mathit{\rho _{1} \vdash  \mathrlap{\raisebox{-5pt}{\color{black!50}$\llcorner$}}\mathrlap{\raisebox{6pt}{\color{black!50}$\ulcorner$}}\hspace{2pt}\vphantom{\underset x{\overset XX}}\sigma _{1},e\hspace{2pt}\mathllap{\raisebox{-5pt}{\color{black!50}$\lrcorner$}}\mathllap{\raisebox{6pt}{\color{black!50}$\urcorner$}}\vphantom{\underset x{\overset XX}} \Downarrow _{n_{1}} \mathrlap{\raisebox{-5pt}{\color{black!50}$\llcorner$}}\mathrlap{\raisebox{6pt}{\color{black!50}$\ulcorner$}}\hspace{2pt}\vphantom{\underset x{\overset XX}}\sigma _{1}^{\prime},v_{1}\hspace{2pt}\mathllap{\raisebox{-5pt}{\color{black!50}$\lrcorner$}}\mathllap{\raisebox{6pt}{\color{black!50}$\urcorner$}}\vphantom{\underset x{\overset XX}}}}}} & {\textit{(H4)}}
     \cr  And:     & {{\color{\colorMATH}\ensuremath{\mathit{\rho _{2} \vdash  \mathrlap{\raisebox{-5pt}{\color{black!50}$\llcorner$}}\mathrlap{\raisebox{6pt}{\color{black!50}$\ulcorner$}}\hspace{2pt}\vphantom{\underset x{\overset XX}}\sigma _{2},e\hspace{2pt}\mathllap{\raisebox{-5pt}{\color{black!50}$\lrcorner$}}\mathllap{\raisebox{6pt}{\color{black!50}$\urcorner$}}\vphantom{\underset x{\overset XX}} \Downarrow _{n_{2}} \mathrlap{\raisebox{-5pt}{\color{black!50}$\llcorner$}}\mathrlap{\raisebox{6pt}{\color{black!50}$\ulcorner$}}\hspace{2pt}\vphantom{\underset x{\overset XX}}\sigma _{2}^{\prime},v_{2}\hspace{2pt}\mathllap{\raisebox{-5pt}{\color{black!50}$\lrcorner$}}\mathllap{\raisebox{6pt}{\color{black!50}$\urcorner$}}\vphantom{\underset x{\overset XX}}}}}} & {\textit{(H5)}}
     \cr  Then:    & {{\color{\colorMATH}\ensuremath{\mathit{n_{1} = n_{2}}}}}                    & {\textit{(C1)}}
     \cr  And:     & {{\color{\colorMATH}\ensuremath{\mathit{\sigma _{1}^{\prime} \sim ^{\Sigma }_{n^{\prime}-n_{1}} \sigma _{2}^{\prime}}}}}        & {\textit{(C2)}}
     \cr  And:     & {{\color{\colorMATH}\ensuremath{\mathit{v_{1} \sim ^{\Sigma }_{n^{\prime}-n_{1}} v_{2}}}}}          & {\textit{(C3)}}
     \end{tabular}
  \end{itemize}
\end{theorem}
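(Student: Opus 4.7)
The plan is strong induction on the step-index $n$, with inner case analysis on the expression $e$. At $n = 0$ the conclusion $\rho_1, \sigma_1, e \sim_0^\Sigma \rho_2, \sigma_2, e$ is immediate, since the configuration relation at index $0$ is defined to be the constant-true relation. For $n = n' + 1$, the task reduces to deriving \textit{(C1)}--\textit{(C3)} from hypotheses (H1)--(H5) by dispatching on the unique evaluation rule applicable to $e$ in Figure~\ref{fig:semantics}.

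The easy cases (Var, Real, Fun) all evaluate in zero steps. Var consumes (H1) directly; Real produces $r@_{\mathtt{disc}}^{\mathbf{Z}}$, whose base-value relation unfolds to the reflexive $r \sim_{\mathtt{disc}}^{0} r$; Fun produces closures whose relatedness is the extensional ``related-inputs-to-related-outputs'' clause, proved by combining Step-index Weakening applied to (H1) with a further invocation of the inductive hypothesis at a strictly smaller index supplied by the closure clause itself.

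For the arithmetic cases (Plus, Times-L, Times-R), apply IH to each sub-evaluation in turn, threading the intermediate store (which remains related by the IH's second conclusion together with Step-index Weakening) and obtaining related base values. Combine these using Plus Respects, Times Respects, Triangle, and Real Metric Weakening to derive the required bound on the combined result. The asymmetric Times rules depend crucially on the sensitivity-$\mathbf{Z}$ premise on the scalar side: at distance zero the base-value relation forces the two scalars to coincide, which is exactly the hypothesis Times Respects needs to scale the other operand by a concrete constant factor. The branching cases (IfZ-T, IfZ-F) apply the same zero-sensitivity argument to the guard, forcing both runs to evaluate it to the same real and hence take the same branch; IH on the chosen branch then supplies the result. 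Pair and Proj follow directly from IH and the pair clause of the value relation. For references (Ref, Read, Write), the location relation is syntactic equality, so we exploit that the store relation is maintained with identical domains throughout by the evaluation rules; both runs therefore call $\mathrm{alloc}$ on the same domain and receive the same fresh location, and Read/Write then follow by applying IH to subexpressions and updating the store relation pointwise.

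The main obstacle is App, whose derivation takes $n_1 + n_2 + n_3 + 1$ steps. After applying IH to the function sub-derivation to obtain related closures at index $n' - n_1$, and again to the argument sub-derivation (threaded through the updated store) to obtain related arguments at index $n' - n_1 - n_2$, we unfold the closure clause of the value relation at the smaller index $n' - n_1 - n_2 - 1$. That clause delivers a configuration-relation statement on the function bodies at that index, and pairing it with the third sub-derivation of the App rule discharges the required conclusions on $v'$ and $\sigma'''$. The ``$+1$'' in App's step count is precisely what pays for the decrement built into the closure clause of the value relation; Step-index Weakening handles any off-by-one bookkeeping along the chain of invocations, so nothing extra is needed beyond the already-stated helper lemmas.
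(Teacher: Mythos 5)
Your proposal is correct and follows essentially the same structure as the paper's proof: strong induction on the step index with case analysis on the expression, the same use of Step-index Weakening to thread related intermediate stores through multi-premise rules, the same zero-sensitivity-forces-equality argument for the scalar side of $\ltimes$/$\rtimes$ and for the $\mathtt{if0}$ guard, the same equal-domain argument to pin down $\mathrm{alloc}$ in the Ref case, and the same observation that App's $+1$ step pays for the decrement baked into the closure clause. Nothing is missing at the level of a proof plan; fleshing it out would reproduce the appendix proof case by case.
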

\begin{proof}
  See detailed proof later in this section.
\end{proof}

\begin{corollary}[Sound Dynamic Analysis for Sensitivity]\ \\
   \begin{tabular}{r@{\hspace*{1.00em}}l@{\hspace*{1.00em}}l
   } If:   & {{\color{\colorMATH}\ensuremath{\mathit{n_{1} < n}}}}, {{\color{\colorMATH}\ensuremath{\mathit{n_{2} < n}}}} and {{\color{\colorMATH}\ensuremath{\mathit{n_{3} < n}}}}     & {\textit{(H1)}}
   \cr  And:  & {{\color{\colorMATH}\ensuremath{\mathit{\rho  \sim _{n}^{\{ o\mapsto \infty \} } \rho _{1}}}}}                        & {\textit{(H2)}}
   \cr  And:  & {{\color{\colorMATH}\ensuremath{\mathit{\rho  \vdash  \mathrlap{\raisebox{-5pt}{\color{black!50}$\llcorner$}}\mathrlap{\raisebox{6pt}{\color{black!50}$\ulcorner$}}\hspace{2pt}\vphantom{\underset x{\overset XX}}\varnothing ,e\hspace{2pt}\mathllap{\raisebox{-5pt}{\color{black!50}$\lrcorner$}}\mathllap{\raisebox{6pt}{\color{black!50}$\urcorner$}}\vphantom{\underset x{\overset XX}} \Downarrow _{n_{1}} \mathrlap{\raisebox{-5pt}{\color{black!50}$\llcorner$}}\mathrlap{\raisebox{6pt}{\color{black!50}$\ulcorner$}}\hspace{2pt}\vphantom{\underset x{\overset XX}}\sigma ,r@_{m}^{\Sigma }\hspace{2pt}\mathllap{\raisebox{-5pt}{\color{black!50}$\lrcorner$}}\mathllap{\raisebox{6pt}{\color{black!50}$\urcorner$}}\vphantom{\underset x{\overset XX}}}}}}        & {\textit{(H3)}}
   \cr  And:  & {{\color{\colorMATH}\ensuremath{\mathit{\rho _{1} \sim ^{\Sigma ^{\prime}}_{n} \rho _{2}}}}}                      & {\textit{(H4)}}
   \cr  And:  & {{\color{\colorMATH}\ensuremath{\mathit{\rho _{1} \vdash  \mathrlap{\raisebox{-5pt}{\color{black!50}$\llcorner$}}\mathrlap{\raisebox{6pt}{\color{black!50}$\ulcorner$}}\hspace{2pt}\vphantom{\underset x{\overset XX}}\varnothing ,e\hspace{2pt}\mathllap{\raisebox{-5pt}{\color{black!50}$\lrcorner$}}\mathllap{\raisebox{6pt}{\color{black!50}$\urcorner$}}\vphantom{\underset x{\overset XX}} \Downarrow _{n_{2}} \mathrlap{\raisebox{-5pt}{\color{black!50}$\llcorner$}}\mathrlap{\raisebox{6pt}{\color{black!50}$\ulcorner$}}\hspace{2pt}\vphantom{\underset x{\overset XX}}\sigma _{1},r_{1}@^{\Sigma _{1}}_{m_{1}}\hspace{2pt}\mathllap{\raisebox{-5pt}{\color{black!50}$\lrcorner$}}\mathllap{\raisebox{6pt}{\color{black!50}$\urcorner$}}\vphantom{\underset x{\overset XX}}}}}} & {\textit{(H5)}}
   \cr  And:  & {{\color{\colorMATH}\ensuremath{\mathit{\rho _{2} \vdash  \mathrlap{\raisebox{-5pt}{\color{black!50}$\llcorner$}}\mathrlap{\raisebox{6pt}{\color{black!50}$\ulcorner$}}\hspace{2pt}\vphantom{\underset x{\overset XX}}\varnothing ,e\hspace{2pt}\mathllap{\raisebox{-5pt}{\color{black!50}$\lrcorner$}}\mathllap{\raisebox{6pt}{\color{black!50}$\urcorner$}}\vphantom{\underset x{\overset XX}} \Downarrow _{n_{3}} \mathrlap{\raisebox{-5pt}{\color{black!50}$\llcorner$}}\mathrlap{\raisebox{6pt}{\color{black!50}$\ulcorner$}}\hspace{2pt}\vphantom{\underset x{\overset XX}}\sigma _{2},r_{2}@^{\Sigma _{2}}_{m_{2}}\hspace{2pt}\mathllap{\raisebox{-5pt}{\color{black!50}$\lrcorner$}}\mathllap{\raisebox{6pt}{\color{black!50}$\urcorner$}}\vphantom{\underset x{\overset XX}}}}}} & {\textit{(H6)}}
   \cr  Then: & {{\color{\colorMATH}\ensuremath{\mathit{r_{1} \sim ^{\Sigma ^{\prime}\mathord{\cdotp }\Sigma }_{m_{1}} r_{2}}}}}                 & {\textit{(C1)}}
   \end{tabular}
\end{corollary}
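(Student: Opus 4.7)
The plan is to apply the Metric Preservation theorem twice, once to each pair of runs, and then combine at the base-value case of the value relation. The first application pins down that the analysis decoration $(\Sigma,m)$ produced from the reference run under $\rho$ is identical to the decoration $(\Sigma_1,m_1)$ produced from the run under $\rho_1$; the second application transfers a bound on the real-number distance from the pair $(\rho_1,\rho_2)$ to the goal.

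First I would instantiate Metric Preservation with the hypothesis (H2), i.e.\ $\rho \sim_n^{\{o\mapsto\infty\}} \rho_1$, using empty initial stores (which are related at every index) together with the derivations (H3) and (H5). The step-index premise is discharged by (H1), which gives $n_1 \le n-1$ (and likewise for $n_2$), matching the $n = n'+1$, $n_1 \le n'$ form of the theorem statement. The conclusion yields relatedness of base values $r@_m^{\Sigma} \sim_{n-1-n_1}^{\{o\mapsto\infty\}} r_1@_{m_1}^{\Sigma_1}$; unfolding the base-value case of the value relation extracts the two equality conjuncts $\Sigma = \Sigma_1$ and $m = m_1$. The real-number conjunct degenerates because the input distances are $\infty$ and is not used in the sequel.

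Next I would apply Metric Preservation again, this time to (H4), (H5), and (H6) at distance $\Sigma'$, once more with empty initial stores and with (H1) supplying the required step-index bound $n_2 \le n-1$. The conclusion gives $r_1@_{m_1}^{\Sigma_1} \sim_{n-1-n_2}^{\Sigma'} r_2@_{m_2}^{\Sigma_2}$, which unfolds for base values to the real-number distance statement $r_1 \sim^{\Sigma'\cdot\Sigma_1}_{m_1} r_2$. Rewriting $\Sigma_1$ to $\Sigma$ using the equality from the first step delivers exactly (C1).

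I do not anticipate any genuine obstacle: the store-preservation conjuncts of Metric Preservation are vacuous here because evaluation starts with $\varnothing$ and returns a base value; all arithmetic on step indices is handled by (H1); and the only mild care needed is to read off the correct conjuncts after unfolding $\sim_n^{\Sigma}$ at the base case. All of the real technical content lives inside Metric Preservation itself. The semantic point the corollary makes is that the analysis decoration $(\Sigma,m)$ computed during a single observed run is itself insensitive to which permissible input was supplied, so it is safe to reuse $\Sigma$ as a sound sensitivity bound relating the results of any two neighboring runs.
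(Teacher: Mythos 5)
Your proposal is correct and takes the same two-step route as the paper: apply Metric Preservation once to (H2), (H3), (H5) to pin down $\Sigma_1=\Sigma$ and $m_1=m$, then again to (H4), (H5), (H6) to obtain $r_1 \sim^{\Sigma'\cdot\Sigma_1}_{m_1} r_2$, and rewrite $\Sigma_1$ to $\Sigma$. One small slip in your commentary: the store conjunct (C2) of Metric Preservation is not \emph{vacuous} (the final stores $\sigma,\sigma_1,\sigma_2$ need not be empty even though evaluation starts from $\varnothing$); it is simply unused in the corollary.
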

\begin{proof}\ \\
  \begin{itemize}[label={},leftmargin=0pt]\item  By \nameref{thm:metric-preservation},
     {\textit{(H2)}}, {\textit{(H1)}}, {\textit{(H3)}} and {\textit{(H5)}} we have {{\color{\colorMATH}\ensuremath{\mathit{\Sigma _{1} = \Sigma }}}} and {{\color{\colorMATH}\ensuremath{\mathit{m_{1} = m}}}}.
  \item  By \nameref{thm:metric-preservation},
     {\textit{(H4)}}, {\textit{(H1)}}, {\textit{(H5)}} and {\textit{(H6)}} we have proved the goal {\textit{(C1)}}.
  \end{itemize}
\end{proof}

\begin{proof}[Proof of \nameref{thm:metric-preservation}]\ \\
  By strong induction on {{\color{\colorMATH}\ensuremath{\mathit{n}}}} and case analysis on {{\color{\colorMATH}\ensuremath{\mathit{e}}}}:
  \begin{itemize}[label=\textbf{-},leftmargin=1em]\item  \begin{itemize}[label={},leftmargin=0pt]\item  {\textbf{Case}} {{\color{\colorMATH}\ensuremath{\mathit{n=0}}}}: 
        Trivial by definition.
     \end{itemize} 
  \item  \begin{itemize}[label={},leftmargin=0pt]\item  {\textbf{Case}} {{\color{\colorMATH}\ensuremath{\mathit{n=n+1}}}} {\textbf{and}} {{\color{\colorMATH}\ensuremath{\mathit{e=x}}}}: 
     \item  By inversion on {\textit{(H4)}} and {\textit{(H5)}} we have:
        {{\color{\colorMATH}\ensuremath{\mathit{n_{1} = n_{2} = 0}}}},
        {{\color{\colorMATH}\ensuremath{\mathit{\sigma _{1}^{\prime} = \sigma _{1}}}}},
        {{\color{\colorMATH}\ensuremath{\mathit{\sigma _{2}^{\prime} = \sigma _{2}}}}},
        {{\color{\colorMATH}\ensuremath{\mathit{v_{1} = \rho _{1}(x)}}}} and
        {{\color{\colorMATH}\ensuremath{\mathit{v_{2} = \rho _{2}(x)}}}}.
        To show:
        {\textit{(C1)}}: {{\color{\colorMATH}\ensuremath{\mathit{0 = 0}}}};
        {\textit{(C2)}}: {{\color{\colorMATH}\ensuremath{\mathit{\sigma _{1} \sim _{n}^{\Sigma } \sigma _{2}}}}}; and
        {\textit{(C3)}}: {{\color{\colorMATH}\ensuremath{\mathit{\rho _{1}(x) \sim _{n}^{\Sigma } \rho _{2}(x)}}}}.
        {\textit{(C1)}} is trivial.
        {\textit{(C2)}} is immediate by {\textit{(H2)}}.
        {\textit{(C3)}} is immediate by {\textit{(H1)}}.
     \end{itemize} 
  \item  \begin{itemize}[label={},leftmargin=0pt]\item  {\textbf{Case}} {{\color{\colorMATH}\ensuremath{\mathit{n=n+1}}}} {\textbf{and}} {{\color{\colorMATH}\ensuremath{\mathit{e=r}}}}: 
     \item  By inversion on {\textit{(H4)}} and {\textit{(H5)}} we have
        {{\color{\colorMATH}\ensuremath{\mathit{n_{1} = n_{2} = 0}}}},
        {{\color{\colorMATH}\ensuremath{\mathit{v_{1} = v_{2} = r@_{{{\color{\colorSYNTAX}\texttt{disc}}}}^{{\textbf{Z}}}}}}},
        {{\color{\colorMATH}\ensuremath{\mathit{\sigma _{1}^{\prime} = \sigma _{1}}}}} and
        {{\color{\colorMATH}\ensuremath{\mathit{\sigma _{2}^{\prime} = \sigma _{2}}}}}.
        To show:
        {\textit{(C1)}}: {{\color{\colorMATH}\ensuremath{\mathit{0 = 0}}}};
        {\textit{(C2)}}: {{\color{\colorMATH}\ensuremath{\mathit{\sigma _{1} \sim _{n}^{\Sigma } \sigma _{2}}}}}; and
        {\textit{(C3)}}: {{\color{\colorMATH}\ensuremath{\mathit{r \sim _{{{\color{\colorSYNTAX}\texttt{disc}}}}^{0} r}}}}.
        {\textit{(C1)}} is trivial.
        {\textit{(C2)}} is immediate by {\textit{(H2)}}.
        {\textit{(C3)}} is immediate by definition of relation {{\color{\colorMATH}\ensuremath{\mathit{\sim _{m}^{r}}}}}.
     \end{itemize} 
  \item  \begin{itemize}[label={},leftmargin=0pt]\item  {\textbf{Case}} {{\color{\colorMATH}\ensuremath{\mathit{n=n+1}}}} {\textbf{and}} {{\color{\colorMATH}\ensuremath{\mathit{e=e_{1}+e_{2}}}}}: 
     \item  By inversion on {\textit{(H4)}} and {\textit{(H5)}} we have:
     \item  \ {{\color{\colorMATH}\ensuremath{\mathit{\begin{array}{rclclclcl@{\hspace*{1.00em}}l
           } \rho _{1} &{}\vdash {}& \mathrlap{\raisebox{-5pt}{\color{black!50}$\llcorner$}}\mathrlap{\raisebox{6pt}{\color{black!50}$\ulcorner$}}\hspace{2pt}\vphantom{\underset x{\overset XX}}\sigma _{1} &{},{}&e_{1}\hspace{2pt}\mathllap{\raisebox{-5pt}{\color{black!50}$\lrcorner$}}\mathllap{\raisebox{6pt}{\color{black!50}$\urcorner$}}\vphantom{\underset x{\overset XX}} &{}\Downarrow _{n_{1 1}}{}& \mathrlap{\raisebox{-5pt}{\color{black!50}$\llcorner$}}\mathrlap{\raisebox{6pt}{\color{black!50}$\ulcorner$}}\hspace{2pt}\vphantom{\underset x{\overset XX}}\sigma _{1}^{\prime}&{},{}&r_{1 1}@^{\Sigma _{1 1}}_{m_{1 1}}\hspace{2pt}\mathllap{\raisebox{-5pt}{\color{black!50}$\lrcorner$}}\mathllap{\raisebox{6pt}{\color{black!50}$\urcorner$}}\vphantom{\underset x{\overset XX}} & {{\color{\colorTEXT}\textnormal{{\textit{(H4.1)}}}}}
           \cr  \rho _{1} &{}\vdash {}& \mathrlap{\raisebox{-5pt}{\color{black!50}$\llcorner$}}\mathrlap{\raisebox{6pt}{\color{black!50}$\ulcorner$}}\hspace{2pt}\vphantom{\underset x{\overset XX}}\sigma _{1}^{\prime}&{},{}&e_{2}\hspace{2pt}\mathllap{\raisebox{-5pt}{\color{black!50}$\lrcorner$}}\mathllap{\raisebox{6pt}{\color{black!50}$\urcorner$}}\vphantom{\underset x{\overset XX}} &{}\Downarrow _{n_{1 2}}{}& \mathrlap{\raisebox{-5pt}{\color{black!50}$\llcorner$}}\mathrlap{\raisebox{6pt}{\color{black!50}$\ulcorner$}}\hspace{2pt}\vphantom{\underset x{\overset XX}}\sigma _{1}^{\prime \prime}&{},{}&r_{1 2}@^{\Sigma _{1 2}}_{m_{1 2}}\hspace{2pt}\mathllap{\raisebox{-5pt}{\color{black!50}$\lrcorner$}}\mathllap{\raisebox{6pt}{\color{black!50}$\urcorner$}}\vphantom{\underset x{\overset XX}} & {{\color{\colorTEXT}\textnormal{{\textit{(H4.2)}}}}}
           \cr  \rho _{2} &{}\vdash {}& \mathrlap{\raisebox{-5pt}{\color{black!50}$\llcorner$}}\mathrlap{\raisebox{6pt}{\color{black!50}$\ulcorner$}}\hspace{2pt}\vphantom{\underset x{\overset XX}}\sigma _{2} &{},{}&e_{1}\hspace{2pt}\mathllap{\raisebox{-5pt}{\color{black!50}$\lrcorner$}}\mathllap{\raisebox{6pt}{\color{black!50}$\urcorner$}}\vphantom{\underset x{\overset XX}} &{}\Downarrow _{n_{2 1}}{}& \mathrlap{\raisebox{-5pt}{\color{black!50}$\llcorner$}}\mathrlap{\raisebox{6pt}{\color{black!50}$\ulcorner$}}\hspace{2pt}\vphantom{\underset x{\overset XX}}\sigma _{2}^{\prime}&{},{}&r_{2 1}@^{\Sigma _{2 1}}_{m_{2 1}}\hspace{2pt}\mathllap{\raisebox{-5pt}{\color{black!50}$\lrcorner$}}\mathllap{\raisebox{6pt}{\color{black!50}$\urcorner$}}\vphantom{\underset x{\overset XX}} & {{\color{\colorTEXT}\textnormal{{\textit{(H5.1)}}}}}
           \cr  \rho _{2} &{}\vdash {}& \mathrlap{\raisebox{-5pt}{\color{black!50}$\llcorner$}}\mathrlap{\raisebox{6pt}{\color{black!50}$\ulcorner$}}\hspace{2pt}\vphantom{\underset x{\overset XX}}\sigma _{2}^{\prime}&{},{}&e_{2}\hspace{2pt}\mathllap{\raisebox{-5pt}{\color{black!50}$\lrcorner$}}\mathllap{\raisebox{6pt}{\color{black!50}$\urcorner$}}\vphantom{\underset x{\overset XX}} &{}\Downarrow _{n_{2 2}}{}& \mathrlap{\raisebox{-5pt}{\color{black!50}$\llcorner$}}\mathrlap{\raisebox{6pt}{\color{black!50}$\ulcorner$}}\hspace{2pt}\vphantom{\underset x{\overset XX}}\sigma _{2}^{\prime \prime}&{},{}&r_{2 2}@^{\Sigma _{2 2}}_{m_{2 2}}\hspace{2pt}\mathllap{\raisebox{-5pt}{\color{black!50}$\lrcorner$}}\mathllap{\raisebox{6pt}{\color{black!50}$\urcorner$}}\vphantom{\underset x{\overset XX}} & {{\color{\colorTEXT}\textnormal{{\textit{(H5.2)}}}}}
           \end{array}}}}}
     \item  and we also have:
        {{\color{\colorMATH}\ensuremath{\mathit{n_{1} = n_{1 1} + n_{1 2}}}}},
        {{\color{\colorMATH}\ensuremath{\mathit{n_{2} = n_{2 1} + n_{2 2}}}}},
        {{\color{\colorMATH}\ensuremath{\mathit{\sigma _{1}^{\prime} = \sigma _{1}^{\prime \prime}}}}},
        {{\color{\colorMATH}\ensuremath{\mathit{\sigma _{2}^{\prime} = \sigma _{2}^{\prime \prime}}}}},
        {{\color{\colorMATH}\ensuremath{\mathit{v_{1} = (r_{1 1}+r_{1 2})@^{\Sigma _{1 1}+\Sigma _{1 2}}_{m_{1 1}\sqcup m_{1 2}}}}}} and
        {{\color{\colorMATH}\ensuremath{\mathit{v_{2} = (r_{2 1}+r_{2 2})@^{\Sigma _{2 1}+\Sigma _{2 2}}_{m_{2 1}\sqcup m_{2 2}}}}}}.
        By IH ({{\color{\colorMATH}\ensuremath{\mathit{n = n}}}} decreasing), {\textit{(H1)}}, {\textit{(H2)}}, {\textit{(H3)}}, {\textit{(H4.1)}} and {\textit{(H5.1)}} we have:
        {{\color{\colorMATH}\ensuremath{\mathit{n_{1 1} = n_{2 1}}}}}                                 {\textit{(IH.1.C1)}};
        {{\color{\colorMATH}\ensuremath{\mathit{\sigma _{1}^{\prime} \sim ^{\Sigma }_{n-n_{1 1}} \sigma _{2}^{\prime}}}}}                       {\textit{(IH.1.C2)}}; and
        {{\color{\colorMATH}\ensuremath{\mathit{r_{1 1}@^{\Sigma _{1 1}}_{m_{1 1}} \sim ^{\Sigma }_{n-n_{1 1}} r_{2 1}@^{\Sigma _{2 1}}_{m_{2 1}}}}}} {\textit{(IH.1.C3)}}.
        By unfolding the definition in {\textit{(IH.1.C3)}}, we have:
        {{\color{\colorMATH}\ensuremath{\mathit{\Sigma _{1 1} = \Sigma _{2 1}}}}}             {\textit{(IH.1.C3.1)}};
        {{\color{\colorMATH}\ensuremath{\mathit{m_{1 1} = m_{2 1}}}}}             {\textit{(IH.1.C3.2)}}; and
        {{\color{\colorMATH}\ensuremath{\mathit{r_{1 1} \sim ^{\Sigma \mathord{\cdotp }\Sigma _{1 1}}_{m_{1 1}} r_{2 1}}}}} {\textit{(IH.1.C3.3)}}.
        Note the following facts:
        {{\color{\colorMATH}\ensuremath{\mathit{n_{1 2} \leq  n-n_{1 1}}}}}    {\textit{(F1)}}; and
        {{\color{\colorMATH}\ensuremath{\mathit{\rho _{1} \sim ^{\Sigma }_{n-n_{1 1}}}}}} {\textit{(F2)}}.
        {\textit{(F1)}} follows from {\textit{(H3)}} and {{\color{\colorMATH}\ensuremath{\mathit{n_{1} = n_{1 1} + n_{1 2}}}}}.
        {\textit{(F2)}} follows from {\textit{(H1)}} and {\nameref{thm:step-index-weakening}.1}.
        By IH ({{\color{\colorMATH}\ensuremath{\mathit{n = n-n_{1 1}}}}} decreasing), {\textit{(F2)}}, {\textit{(IH.1.C2)}}, {\textit{(F1)}}, {\textit{(H4.2)}} and {\textit{(H5.2)}} we have:
        {{\color{\colorMATH}\ensuremath{\mathit{n_{1 2} = n_{2 2}}}}}                                   {\textit{(IH.2.C1)}};
        {{\color{\colorMATH}\ensuremath{\mathit{\sigma _{1}^{\prime \prime} \sim ^{\Sigma }_{n-n_{1 1}-n_{1 2}} \sigma _{2}^{\prime \prime}}}}}                     {\textit{(IH.2.C2)}}; and
        {{\color{\colorMATH}\ensuremath{\mathit{r_{1 2}@^{\Sigma _{1 2}}_{m_{1 2}} \sim ^{\Sigma }_{n_{1 1}-n_{1 2}} r_{2 2}@^{\Sigma _{2 2}}_{m_{2 2}}}}}} {\textit{(IH.2.C3)}}.
        By unfolding the definition in {\textit{(IH.2.C3)}}, we have:
        {{\color{\colorMATH}\ensuremath{\mathit{\Sigma _{1 2} = \Sigma _{2 2}}}}}             {\textit{(IH.2.C3.1)}};
        {{\color{\colorMATH}\ensuremath{\mathit{m_{1 2} = m_{2 2}}}}}             {\textit{(IH.2.C3.2)}}; and
        {{\color{\colorMATH}\ensuremath{\mathit{r_{1 2} \sim ^{\Sigma \mathord{\cdotp }\Sigma _{1 2}}_{m_{1 2}} r_{2 2}}}}} {\textit{(IH.2.C3.3)}}.
        To show:
        {\textit{(C1)}}: {{\color{\colorMATH}\ensuremath{\mathit{n_{1 1} + n_{1 2} = n_{2 1} + n_{2 2}}}}};
        {\textit{(C2)}}: {{\color{\colorMATH}\ensuremath{\mathit{\sigma _{1}^{\prime \prime} \sim ^{\Sigma }_{n-n_{1 1}-n_{1 2}} \sigma _{2}^{\prime \prime}}}}}; and
        {\textit{(C3)}}:
        {{\color{\colorMATH}\ensuremath{\mathit{(r_{1 1}+r_{1 2})@^{\Sigma _{1 1}+\Sigma _{1 2}}_{m_{1 1}\sqcup m_{1 2}}
         \sim ^{\Sigma }_{n-n_{1 1}-n_{1 2}}
         (r_{2 1}+r_{2 2})@^{\Sigma _{2 1}+\Sigma _{2 2}}_{m_{2 1}\sqcup m_{2 2}}}}}}.
        {\textit{(C1)}} is immediate from {\textit{(IH.1.C1)}} and {\textit{(IH.2.C1)}}.
        {\textit{(C2)}} is immediate from {\textit{(IH.2.C2)}}.
        To show {\textit{(C3)}} we must show:
        {\textit{(C3.1)}}: {{\color{\colorMATH}\ensuremath{\mathit{\Sigma _{1 1} + \Sigma _{1 2} = \Sigma _{2 1} + \Sigma _{2 2}}}}};
        {\textit{(C3.2)}}: {{\color{\colorMATH}\ensuremath{\mathit{m_{1 1} \sqcup  m_{1 2} = m_{2 1} \sqcup  m_{2 2}}}}}; and
        {\textit{(C3.3)}}: {{\color{\colorMATH}\ensuremath{\mathit{(r_{1 1}+r_{1 2}) \sim ^{\Sigma \mathord{\cdotp }(\Sigma _{1 1}+\Sigma _{1 2})}_{m_{1 1}\sqcup m_{1 2}} (r_{2 1} + r_{2 2})}}}}.
        {\textit{(C3.1)}} is immediate from {\textit{(IH.1.C3.1)}} and {\textit{(IH.2.C3.1)}}.
        {\textit{(C3.2)}} is immediate from {\textit{(IH.1.C3.2)}} and {\textit{(IH.2.C3.2)}}.
        {\textit{(C3.3)}} holds as follows:
        By \nameref{thm:plus-respects}, {\textit{(IH.1.C3.3)}} and {\textit{(IH.2.C3.3)}}:
        {{\color{\colorMATH}\ensuremath{\mathit{(r_{1 1}+r_{1 2}) \sim ^{\Sigma \mathord{\cdotp }\Sigma _{1 1}}_{m_{1 1}} (r_{2 1} + r_{1 2}) \sim ^{\Sigma \mathord{\cdotp }\Sigma _{1 2}}_{m_{1 2}} (r_{2 1} + r_{2 2})}}}}.
        By \nameref{thm:triangle}:
        {{\color{\colorMATH}\ensuremath{\mathit{(r_{1 1}+r_{1 2}) \sim ^{\Sigma \mathord{\cdotp }\Sigma _{1 1} + \Sigma \mathord{\cdotp }\Sigma _{1 2}}_{m_{1 1}\sqcup m_{1 2}} (r_{2 1} + r_{2 2})}}}}.
        By basic algebra:
        {{\color{\colorMATH}\ensuremath{\mathit{(r_{1 1}+r_{1 2})
         \sim ^{\Sigma \mathord{\cdotp }(\Sigma _{1 1}+\Sigma _{1 2})}_{m_{1 1}\sqcup m_{1 2}}
         (r_{2 1} + r_{2 2})}}}}
        We have shown the goal.
     \end{itemize} 
  \item  \begin{itemize}[label={},leftmargin=0pt]\item  {\textbf{Case}} {{\color{\colorMATH}\ensuremath{\mathit{n=n+1}}}} {\textbf{and}} {{\color{\colorMATH}\ensuremath{\mathit{e=e_{1}\ltimes e_{2}}}}}: 
     \item  By inversion on {\textit{(H4)}} and {\textit{(H5)}} we have:
     \item  \ {{\color{\colorMATH}\ensuremath{\mathit{\begin{array}{rclclclcl@{\hspace*{1.00em}}l
           } \rho _{1} &{}\vdash {}& \mathrlap{\raisebox{-5pt}{\color{black!50}$\llcorner$}}\mathrlap{\raisebox{6pt}{\color{black!50}$\ulcorner$}}\hspace{2pt}\vphantom{\underset x{\overset XX}}\sigma _{1} &{},{}&e_{1}\hspace{2pt}\mathllap{\raisebox{-5pt}{\color{black!50}$\lrcorner$}}\mathllap{\raisebox{6pt}{\color{black!50}$\urcorner$}}\vphantom{\underset x{\overset XX}} &{}\Downarrow _{n_{1 1}}{}& \mathrlap{\raisebox{-5pt}{\color{black!50}$\llcorner$}}\mathrlap{\raisebox{6pt}{\color{black!50}$\ulcorner$}}\hspace{2pt}\vphantom{\underset x{\overset XX}}\sigma _{1}^{\prime}&{},{}&r_{1 1}@^{{\textbf{Z}}}_{m_{1 1}}\hspace{2pt}\mathllap{\raisebox{-5pt}{\color{black!50}$\lrcorner$}}\mathllap{\raisebox{6pt}{\color{black!50}$\urcorner$}}\vphantom{\underset x{\overset XX}} & {{\color{\colorTEXT}\textnormal{{\textit{(H4.1)}}}}}
           \cr  \rho _{1} &{}\vdash {}& \mathrlap{\raisebox{-5pt}{\color{black!50}$\llcorner$}}\mathrlap{\raisebox{6pt}{\color{black!50}$\ulcorner$}}\hspace{2pt}\vphantom{\underset x{\overset XX}}\sigma _{1}^{\prime}&{},{}&e_{2}\hspace{2pt}\mathllap{\raisebox{-5pt}{\color{black!50}$\lrcorner$}}\mathllap{\raisebox{6pt}{\color{black!50}$\urcorner$}}\vphantom{\underset x{\overset XX}} &{}\Downarrow _{n_{1 2}}{}& \mathrlap{\raisebox{-5pt}{\color{black!50}$\llcorner$}}\mathrlap{\raisebox{6pt}{\color{black!50}$\ulcorner$}}\hspace{2pt}\vphantom{\underset x{\overset XX}}\sigma _{1}^{\prime \prime}&{},{}&r_{1 2}@^{{\textbf{Z}}}_{m_{1 2}}\hspace{2pt}\mathllap{\raisebox{-5pt}{\color{black!50}$\lrcorner$}}\mathllap{\raisebox{6pt}{\color{black!50}$\urcorner$}}\vphantom{\underset x{\overset XX}} & {{\color{\colorTEXT}\textnormal{{\textit{(H4.2)}}}}}
           \cr  \rho _{2} &{}\vdash {}& \mathrlap{\raisebox{-5pt}{\color{black!50}$\llcorner$}}\mathrlap{\raisebox{6pt}{\color{black!50}$\ulcorner$}}\hspace{2pt}\vphantom{\underset x{\overset XX}}\sigma _{2} &{},{}&e_{1}\hspace{2pt}\mathllap{\raisebox{-5pt}{\color{black!50}$\lrcorner$}}\mathllap{\raisebox{6pt}{\color{black!50}$\urcorner$}}\vphantom{\underset x{\overset XX}} &{}\Downarrow _{n_{2 1}}{}& \mathrlap{\raisebox{-5pt}{\color{black!50}$\llcorner$}}\mathrlap{\raisebox{6pt}{\color{black!50}$\ulcorner$}}\hspace{2pt}\vphantom{\underset x{\overset XX}}\sigma _{2}^{\prime}&{},{}&r_{2 1}@^{\Sigma _{1}}_{m_{2 1}}\hspace{2pt}\mathllap{\raisebox{-5pt}{\color{black!50}$\lrcorner$}}\mathllap{\raisebox{6pt}{\color{black!50}$\urcorner$}}\vphantom{\underset x{\overset XX}} & {{\color{\colorTEXT}\textnormal{{\textit{(H5.1)}}}}}
           \cr  \rho _{2} &{}\vdash {}& \mathrlap{\raisebox{-5pt}{\color{black!50}$\llcorner$}}\mathrlap{\raisebox{6pt}{\color{black!50}$\ulcorner$}}\hspace{2pt}\vphantom{\underset x{\overset XX}}\sigma _{2}^{\prime}&{},{}&e_{2}\hspace{2pt}\mathllap{\raisebox{-5pt}{\color{black!50}$\lrcorner$}}\mathllap{\raisebox{6pt}{\color{black!50}$\urcorner$}}\vphantom{\underset x{\overset XX}} &{}\Downarrow _{n_{2 2}}{}& \mathrlap{\raisebox{-5pt}{\color{black!50}$\llcorner$}}\mathrlap{\raisebox{6pt}{\color{black!50}$\ulcorner$}}\hspace{2pt}\vphantom{\underset x{\overset XX}}\sigma _{2}^{\prime \prime}&{},{}&r_{2 2}@^{\Sigma _{2}}_{m_{2 2}}\hspace{2pt}\mathllap{\raisebox{-5pt}{\color{black!50}$\lrcorner$}}\mathllap{\raisebox{6pt}{\color{black!50}$\urcorner$}}\vphantom{\underset x{\overset XX}} & {{\color{\colorTEXT}\textnormal{{\textit{(H5.2)}}}}}
           \end{array}}}}}
     \item  and we also have:
        {{\color{\colorMATH}\ensuremath{\mathit{n_{1} = n_{1 1} + n_{1 2}}}}},
        {{\color{\colorMATH}\ensuremath{\mathit{n_{2} = n_{2 1} + n_{2 2}}}}},
        {{\color{\colorMATH}\ensuremath{\mathit{\sigma _{1}^{\prime} = \sigma _{1}^{\prime \prime}}}}},
        {{\color{\colorMATH}\ensuremath{\mathit{\sigma _{2}^{\prime} = \sigma _{2}^{\prime \prime}}}}},
        {{\color{\colorMATH}\ensuremath{\mathit{v_{1} = (r_{1 1}\times r_{1 2})@^{r_{1 1}\Sigma _{1 2}}_{m_{1 2}}}}}} and
        {{\color{\colorMATH}\ensuremath{\mathit{v_{2} = (r_{2 1}\times r_{2 2})@^{r_{2 1}\Sigma _{2 2}}_{m_{2 2}}}}}}.
        By IH ({{\color{\colorMATH}\ensuremath{\mathit{n = n}}}} decreasing), {\textit{(H1)}}, {\textit{(H2)}}, {\textit{(H3)}}, {\textit{(H4.1)}}
        and {\textit{(H5.1)}} we have:
        {{\color{\colorMATH}\ensuremath{\mathit{n_{1 1} = n_{2 1}}}}}                                     {\textit{(IH.1.C1)}};
        {{\color{\colorMATH}\ensuremath{\mathit{\sigma _{1}^{\prime} \sim ^{\Sigma }_{n-n_{1 1}} \sigma _{2}^{\prime}}}}}                           {\textit{(IH.1.C2)}}; and
        {{\color{\colorMATH}\ensuremath{\mathit{r_{1 1}@^{\Sigma _{1 1}}_{m_{1 1}} \sim ^{\Sigma }_{n-n_{1 1}-n_{1 2}} r_{2 1}@^{\Sigma _{2 1}}_{m_{2 1}}}}}} {\textit{(IH.1.C3)}}.
        By unfolding the definition in {\textit{(IH.1.C3)}}, we have:
        {{\color{\colorMATH}\ensuremath{\mathit{\Sigma _{1 1} = \Sigma _{2 1}}}}}       {\textit{(IH.1.C3.1)}};
        {{\color{\colorMATH}\ensuremath{\mathit{m_{1 1} = m_{2 1}}}}}       {\textit{(IH.1.C3.2)}}; and
        {{\color{\colorMATH}\ensuremath{\mathit{r_{1 1} \sim ^{0}_{m_{1 1}} r_{2 1}}}}} {\textit{(IH.1.C3.3)}}.
        As a consequence of {\textit{(IH.1.C3)}}, we have: {{\color{\colorMATH}\ensuremath{\mathit{r_{1 1} = r_{2 1}}}}}.
        Note the following facts:
        {{\color{\colorMATH}\ensuremath{\mathit{n_{1 2} \leq  n-n_{1 1}}}}}    {\textit{(F1)}}; and
        {{\color{\colorMATH}\ensuremath{\mathit{\rho _{1} \sim ^{\Sigma }_{n-n_{1 1}}}}}} {\textit{(F2)}}.
        {\textit{(F1)}} follows from {\textit{(H3)}} and {{\color{\colorMATH}\ensuremath{\mathit{n_{1} = n_{1 1} + n_{1 2}}}}}.
        {\textit{(F2)}} follows from {\textit{(H1)}} and {\nameref{thm:step-index-weakening}.1}.
        By IH ({{\color{\colorMATH}\ensuremath{\mathit{n = n-n_{1 1}}}}} decreasing), {\textit{(F2)}}, {\textit{(IH.1.C2)}}, {\textit{(F1)}}, {\textit{(H4.2)}} and {\textit{(H5.2)}} we have:
        {{\color{\colorMATH}\ensuremath{\mathit{n_{1 2} = n_{2 2}}}}}                                   {\textit{(IH.2.C1)}};
        {{\color{\colorMATH}\ensuremath{\mathit{\sigma _{1}^{\prime \prime} \sim ^{\Sigma }_{n-n_{1 1}-n_{1 2}} \sigma _{2}^{\prime \prime}}}}}                     {\textit{(IH.2.C2)}}; and
        {{\color{\colorMATH}\ensuremath{\mathit{r_{1 2}@^{\Sigma _{1 2}}_{m_{1 2}} \sim ^{\Sigma }_{n_{1 1}-n_{1 2}} r_{2 2}@^{\Sigma _{2 2}}_{m_{2 2}}}}}} {\textit{(IH.2.C3)}}.
        By unfolding the definition in {\textit{(IH.2.C3)}}, we have:
        {{\color{\colorMATH}\ensuremath{\mathit{\Sigma _{1 2} = \Sigma _{2 2}}}}}             {\textit{(IH.2.C3.1)}};
        {{\color{\colorMATH}\ensuremath{\mathit{m_{1 2} = m_{2 2}}}}}             {\textit{(IH.2.C3.2)}}; and
        {{\color{\colorMATH}\ensuremath{\mathit{r_{1 2} \sim ^{\Sigma \mathord{\cdotp }\Sigma _{1 2}}_{m_{1 2}} r_{2 2}}}}} {\textit{(IH.2.C3.3)}}.
        To show:
        {\textit{(C1)}}: {{\color{\colorMATH}\ensuremath{\mathit{n_{1 1} + n_{1 2} = n_{2 1} + n_{2 2}}}}};
        {\textit{(C2)}}: {{\color{\colorMATH}\ensuremath{\mathit{\sigma _{1}^{\prime \prime} \sim ^{\Sigma }_{n-n_{1 1}-n_{1 2}} \sigma _{2}^{\prime \prime}}}}}; and
        {\textit{(C3)}}:
        {{\color{\colorMATH}\ensuremath{\mathit{(r_{1 1}\times r_{1 2})@^{r_{1 1}\Sigma _{1 2}}_{m_{1 2}}
         \sim ^{\Sigma }_{n-n_{1 1}-n_{1 2}}
         (r_{2 1}\times r_{2 2})@^{(r_{2 1}\Sigma _{2 2})}_{m_{2 2}}}}}}.
        {\textit{(C1)}} is immediate from {\textit{(IH.1.C1)}} and {\textit{(IH.2.C1)}}.
        {\textit{(C2)}} is immediate from {\textit{(IH.2.C2)}}.
        To show {\textit{(C3)}} we must show:
        {\textit{(C3.1)}}: {{\color{\colorMATH}\ensuremath{\mathit{r_{1 1}\Sigma _{1 2} = r_{2 1}\Sigma _{2 2}}}}};
        {\textit{(C3.2)}}: {{\color{\colorMATH}\ensuremath{\mathit{m_{1 2} = m_{2 2}}}}}; and
        {\textit{(C3.3)}}: {{\color{\colorMATH}\ensuremath{\mathit{(r_{1 1}\times r_{1 2}) \sim ^{\Sigma \mathord{\cdotp }r_{1 1}\Sigma _{1 2}}_{m_{1 2}} (r_{2 1}\times r_{2 2})}}}}.
        {\textit{(C3)}} holds as follows:
        {\textit{(C3.1)}} is immediate from {{\color{\colorMATH}\ensuremath{\mathit{r_{1 1} = r_{2 1}}}}} and {\textit{(IH.2.C3.1)}}.
        {\textit{(C3.2)}} is immediate from {\textit{(IH.2.C3.2)}}.
        {\textit{(C3.3)}} holds as follows:
        By \nameref{thm:times-respects}, {{\color{\colorMATH}\ensuremath{\mathit{r_{1 1} = r_{2 1}}}}} and {\textit{(IH.2.C3.3)}}:
        {{\color{\colorMATH}\ensuremath{\mathit{(r_{1 1}\times r_{1 2}) = (r_{2 1}\times r_{1 2}) \sim ^{r_{1 1}(\Sigma \mathord{\cdotp }\Sigma _{1 1})}_{m_{2 1}} (r_{2 1} \times  r_{2 2})}}}}.
        By basic algebra:
        {{\color{\colorMATH}\ensuremath{\mathit{(r_{1 1}\times r_{1 2}) = (r_{2 1}\times r_{1 2}) \sim ^{\Sigma \mathord{\cdotp }r_{1 1}\Sigma _{1 1}}_{m_{2 1}} (r_{2 1} \times  r_{2 2})}}}}.
        We have shown the goal.
     \end{itemize} 
  \item  \begin{itemize}[label={},leftmargin=0pt]\item  {\textbf{Case}} {{\color{\colorMATH}\ensuremath{\mathit{n=n+1}}}} {\textbf{and}} {{\color{\colorMATH}\ensuremath{\mathit{e=e_{1}\rtimes e_{2}}}}}: 
        {\textit{Analogous to previous case.}}
     \end{itemize} 
  \item  \begin{itemize}[label={},leftmargin=0pt]\item  {\textbf{Case}} {{\color{\colorMATH}\ensuremath{\mathit{n=n+1}}}} {\textbf{and}} {{\color{\colorMATH}\ensuremath{\mathit{e = {{\color{\colorSYNTAX}\texttt{if0}}}(e_{1})\{ e_{2}\} \{ e_{3}\} }}}}: 
     \item  By inversion on {\textit{(H4)}} and {\textit{(H5)}} we have 4 subcases, each which induce:
     \item  \ {{\color{\colorMATH}\ensuremath{\mathit{\begin{array}{rclclclcl@{\hspace*{1.00em}}l
           } \rho _{1} &{}\vdash {}& \mathrlap{\raisebox{-5pt}{\color{black!50}$\llcorner$}}\mathrlap{\raisebox{6pt}{\color{black!50}$\ulcorner$}}\hspace{2pt}\vphantom{\underset x{\overset XX}}\sigma _{1} &{},{}&e_{1}\hspace{2pt}\mathllap{\raisebox{-5pt}{\color{black!50}$\lrcorner$}}\mathllap{\raisebox{6pt}{\color{black!50}$\urcorner$}}\vphantom{\underset x{\overset XX}} &{}\Downarrow _{n_{1 1}}{}& \mathrlap{\raisebox{-5pt}{\color{black!50}$\llcorner$}}\mathrlap{\raisebox{6pt}{\color{black!50}$\ulcorner$}}\hspace{2pt}\vphantom{\underset x{\overset XX}}\sigma _{1}^{\prime}&{},{}&r_{1}@^{{\textbf{Z}}}_{m_{1}}\hspace{2pt}\mathllap{\raisebox{-5pt}{\color{black!50}$\lrcorner$}}\mathllap{\raisebox{6pt}{\color{black!50}$\urcorner$}}\vphantom{\underset x{\overset XX}} & {{\color{\colorTEXT}\textnormal{{\textit{(H4.1)}}}}}
           \cr  \rho _{2} &{}\vdash {}& \mathrlap{\raisebox{-5pt}{\color{black!50}$\llcorner$}}\mathrlap{\raisebox{6pt}{\color{black!50}$\ulcorner$}}\hspace{2pt}\vphantom{\underset x{\overset XX}}\sigma _{2} &{},{}&e_{1}\hspace{2pt}\mathllap{\raisebox{-5pt}{\color{black!50}$\lrcorner$}}\mathllap{\raisebox{6pt}{\color{black!50}$\urcorner$}}\vphantom{\underset x{\overset XX}} &{}\Downarrow _{n_{2 1}}{}& \mathrlap{\raisebox{-5pt}{\color{black!50}$\llcorner$}}\mathrlap{\raisebox{6pt}{\color{black!50}$\ulcorner$}}\hspace{2pt}\vphantom{\underset x{\overset XX}}\sigma _{2}^{\prime}&{},{}&r_{2}@^{{\textbf{Z}}}_{m_{2}}\hspace{2pt}\mathllap{\raisebox{-5pt}{\color{black!50}$\lrcorner$}}\mathllap{\raisebox{6pt}{\color{black!50}$\urcorner$}}\vphantom{\underset x{\overset XX}} & {{\color{\colorTEXT}\textnormal{{\textit{(H5.1)}}}}}
           \end{array}}}}}
     \item  By IH ({{\color{\colorMATH}\ensuremath{\mathit{n = n}}}} decreasing), {\textit{(H1)}}, {\textit{(H2)}}, {\textit{(H3)}}, {\textit{(H4.1)}} and {\textit{(H5.1)}} we have:
        {{\color{\colorMATH}\ensuremath{\mathit{n_{1 1} = n_{2 1}}}}}           {\textit{(IH.1.C1)}};
        {{\color{\colorMATH}\ensuremath{\mathit{\sigma _{1}^{\prime} \sim ^{\Sigma }_{n-n_{1 1}} \sigma _{2}^{\prime}}}}} {\textit{(IH.1.C2)}}; and
        {{\color{\colorMATH}\ensuremath{\mathit{r_{1} \sim ^{0}_{m_{1}\sqcap m_{2}} r_{2}}}}}     {\textit{(IH.1.C3)}}.
        As a consequence of {\textit{(IH.1.C3)}}, we have: {{\color{\colorMATH}\ensuremath{\mathit{r_{1} = r_{2}}}}}.
        The 4 subcases initially are for:
        {\textit{(1)}}: {{\color{\colorMATH}\ensuremath{\mathit{r_{1} = 0}}}} and {{\color{\colorMATH}\ensuremath{\mathit{r_{2} = 0}}}};
        {\textit{(2)}}: {{\color{\colorMATH}\ensuremath{\mathit{r_{1} = 0}}}} and {{\color{\colorMATH}\ensuremath{\mathit{r_{2} \neq  0}}}};
        {\textit{(3)}}: {{\color{\colorMATH}\ensuremath{\mathit{r_{1} \neq  0}}}} and {{\color{\colorMATH}\ensuremath{\mathit{r_{2} = 0}}}}; and
        {\textit{(4)}}: {{\color{\colorMATH}\ensuremath{\mathit{r_{1} \neq  0}}}} and {{\color{\colorMATH}\ensuremath{\mathit{r_{2} \neq  0}}}}.
        However 2 are absurd given {{\color{\colorMATH}\ensuremath{\mathit{r_{1} = r_{2}}}}}, so these 4 subcases
        collapse to 2:
     \item  \begin{itemize}[label=\textbf{-},leftmargin=1em]\item  \begin{itemize}[label={},leftmargin=0pt]\item  {\textbf{Subcase}} {{\color{\colorMATH}\ensuremath{\mathit{r_{1} = r_{2} = 0}}}}:
           \item  From prior inversion on {\textit{(H4)}} and {\textit{(H5)}} and fact {{\color{\colorMATH}\ensuremath{\mathit{r_{1} = r_{2} = 0}}}}, we also have:
           \item  \ {{\color{\colorMATH}\ensuremath{\mathit{\begin{array}{rclclclcl@{\hspace*{1.00em}}l
                 } \rho _{1} &{}\vdash {}& \mathrlap{\raisebox{-5pt}{\color{black!50}$\llcorner$}}\mathrlap{\raisebox{6pt}{\color{black!50}$\ulcorner$}}\hspace{2pt}\vphantom{\underset x{\overset XX}}\sigma _{1}^{\prime}&{},{}&e_{2}\hspace{2pt}\mathllap{\raisebox{-5pt}{\color{black!50}$\lrcorner$}}\mathllap{\raisebox{6pt}{\color{black!50}$\urcorner$}}\vphantom{\underset x{\overset XX}} &{}\Downarrow _{n_{1 2}}{}& \mathrlap{\raisebox{-5pt}{\color{black!50}$\llcorner$}}\mathrlap{\raisebox{6pt}{\color{black!50}$\ulcorner$}}\hspace{2pt}\vphantom{\underset x{\overset XX}}\sigma _{1}^{\prime \prime}&{},{}&v_{1}\hspace{2pt}\mathllap{\raisebox{-5pt}{\color{black!50}$\lrcorner$}}\mathllap{\raisebox{6pt}{\color{black!50}$\urcorner$}}\vphantom{\underset x{\overset XX}}             & {{\color{\colorTEXT}\textnormal{{\textit{(H4.2)}}}}}
                 \cr  \rho _{2} &{}\vdash {}& \mathrlap{\raisebox{-5pt}{\color{black!50}$\llcorner$}}\mathrlap{\raisebox{6pt}{\color{black!50}$\ulcorner$}}\hspace{2pt}\vphantom{\underset x{\overset XX}}\sigma _{2}^{\prime}&{},{}&e_{2}\hspace{2pt}\mathllap{\raisebox{-5pt}{\color{black!50}$\lrcorner$}}\mathllap{\raisebox{6pt}{\color{black!50}$\urcorner$}}\vphantom{\underset x{\overset XX}} &{}\Downarrow _{n_{2 2}}{}& \mathrlap{\raisebox{-5pt}{\color{black!50}$\llcorner$}}\mathrlap{\raisebox{6pt}{\color{black!50}$\ulcorner$}}\hspace{2pt}\vphantom{\underset x{\overset XX}}\sigma _{2}^{\prime \prime}&{},{}&v_{2}\hspace{2pt}\mathllap{\raisebox{-5pt}{\color{black!50}$\lrcorner$}}\mathllap{\raisebox{6pt}{\color{black!50}$\urcorner$}}\vphantom{\underset x{\overset XX}}             & {{\color{\colorTEXT}\textnormal{{\textit{(H5.2)}}}}}
                 \end{array}}}}}
           \item  and we also have:
              {{\color{\colorMATH}\ensuremath{\mathit{n_{1} = n_{1 1} + n_{1 2}}}}},
              {{\color{\colorMATH}\ensuremath{\mathit{n_{2} = n_{2 1} + n_{2 2}}}}},
              {{\color{\colorMATH}\ensuremath{\mathit{\sigma _{1}^{\prime} = \sigma _{1}^{\prime \prime}}}}},
              {{\color{\colorMATH}\ensuremath{\mathit{\sigma _{2}^{\prime} = \sigma _{2}^{\prime \prime}}}}},
              {{\color{\colorMATH}\ensuremath{\mathit{v_{1} = v_{1}}}}} and
              {{\color{\colorMATH}\ensuremath{\mathit{v_{2} = v_{2}}}}}.
              We continue reasoning in a generic way outside this subcase\ldots
           \end{itemize}
        \item  \begin{itemize}[label={},leftmargin=0pt]\item  {\textbf{Subcase}} {{\color{\colorMATH}\ensuremath{\mathit{r_{1} \neq  0}}}} and {{\color{\colorMATH}\ensuremath{\mathit{r_{2} \neq  0}}}}:
           \item  From prior inversion on {\textit{(H4)}} and {\textit{(H5)}} and fact {{\color{\colorMATH}\ensuremath{\mathit{r_{1} = r_{2} \neq  0}}}}, we also have:
           \item  \ {{\color{\colorMATH}\ensuremath{\mathit{\begin{array}{rclclclcl@{\hspace*{1.00em}}l
                 } \rho _{1} &{}\vdash {}& \mathrlap{\raisebox{-5pt}{\color{black!50}$\llcorner$}}\mathrlap{\raisebox{6pt}{\color{black!50}$\ulcorner$}}\hspace{2pt}\vphantom{\underset x{\overset XX}}\sigma _{1}^{\prime}&{},{}&e_{3}\hspace{2pt}\mathllap{\raisebox{-5pt}{\color{black!50}$\lrcorner$}}\mathllap{\raisebox{6pt}{\color{black!50}$\urcorner$}}\vphantom{\underset x{\overset XX}} &{}\Downarrow _{n_{1 2}}{}& \mathrlap{\raisebox{-5pt}{\color{black!50}$\llcorner$}}\mathrlap{\raisebox{6pt}{\color{black!50}$\ulcorner$}}\hspace{2pt}\vphantom{\underset x{\overset XX}}\sigma _{1}^{\prime \prime}&{},{}&v_{1}\hspace{2pt}\mathllap{\raisebox{-5pt}{\color{black!50}$\lrcorner$}}\mathllap{\raisebox{6pt}{\color{black!50}$\urcorner$}}\vphantom{\underset x{\overset XX}}             & {{\color{\colorTEXT}\textnormal{{\textit{(H4.2)}}}}}
                 \cr  \rho _{2} &{}\vdash {}& \mathrlap{\raisebox{-5pt}{\color{black!50}$\llcorner$}}\mathrlap{\raisebox{6pt}{\color{black!50}$\ulcorner$}}\hspace{2pt}\vphantom{\underset x{\overset XX}}\sigma _{2}^{\prime}&{},{}&e_{3}\hspace{2pt}\mathllap{\raisebox{-5pt}{\color{black!50}$\lrcorner$}}\mathllap{\raisebox{6pt}{\color{black!50}$\urcorner$}}\vphantom{\underset x{\overset XX}} &{}\Downarrow _{n_{2 2}}{}& \mathrlap{\raisebox{-5pt}{\color{black!50}$\llcorner$}}\mathrlap{\raisebox{6pt}{\color{black!50}$\ulcorner$}}\hspace{2pt}\vphantom{\underset x{\overset XX}}\sigma _{2}^{\prime \prime}&{},{}&v_{2}\hspace{2pt}\mathllap{\raisebox{-5pt}{\color{black!50}$\lrcorner$}}\mathllap{\raisebox{6pt}{\color{black!50}$\urcorner$}}\vphantom{\underset x{\overset XX}}             & {{\color{\colorTEXT}\textnormal{{\textit{(H5.2)}}}}}
                 \end{array}}}}}
           \item  and we also have:
              {{\color{\colorMATH}\ensuremath{\mathit{n_{1} = n_{1 1} + n_{1 2}}}}}
              {{\color{\colorMATH}\ensuremath{\mathit{n_{2} = n_{2 1} + n_{2 2}}}}}
              {{\color{\colorMATH}\ensuremath{\mathit{\sigma _{1}^{\prime} = \sigma _{1}^{\prime \prime}}}}}
              {{\color{\colorMATH}\ensuremath{\mathit{\sigma _{2}^{\prime} = \sigma _{2}^{\prime \prime}}}}}
              {{\color{\colorMATH}\ensuremath{\mathit{v_{1} = v_{1}}}}}
              {{\color{\colorMATH}\ensuremath{\mathit{v_{2} = v_{2}}}}}
              We continue reasoning in a generic way outside this subcase\ldots
           \end{itemize}
        \end{itemize}
     \item  Note the following facts:
        {{\color{\colorMATH}\ensuremath{\mathit{n_{1 2} \leq  n-n_{1 1}}}}}    {\textit{(F1)}}; and
        {{\color{\colorMATH}\ensuremath{\mathit{\rho _{1} \sim ^{\Sigma }_{n-n_{1 1}}}}}} {\textit{(F2)}}.
        {\textit{(F1)}} follows from {\textit{(H3)}} and {{\color{\colorMATH}\ensuremath{\mathit{n_{1} = n_{1 1} + n_{1 2}}}}}.
        {\textit{(F2)}} follows from {\textit{(H1)}} and {\nameref{thm:step-index-weakening}.1}.
        By IH ({{\color{\colorMATH}\ensuremath{\mathit{n = n-n_{1 1}}}}} decreasing), {\textit{(F2)}}, {\textit{(IH.1.C2)}}, {\textit{(F1)}}, {\textit{(H4.2)}} and {\textit{(H5.2)}} we have:
        {{\color{\colorMATH}\ensuremath{\mathit{n_{2 1} = n_{2 2}}}}}               {\textit{(IH.2.C1)}};
        {{\color{\colorMATH}\ensuremath{\mathit{\sigma _{1}^{\prime \prime} \sim ^{\Sigma }_{n-n_{1 1}-n_{2 1}} \sigma _{2}^{\prime \prime}}}}} {\textit{(IH.2.C2)}}; and
        {{\color{\colorMATH}\ensuremath{\mathit{v_{1} \sim ^{\Sigma }_{n-n_{1 1}-n_{2 1}} v_{2}}}}}   {\textit{(IH.2.C3)}}.
        To show:
        {\textit{(C1)}}: {{\color{\colorMATH}\ensuremath{\mathit{n_{1 1} + n_{1 2} = n_{2 1} + n_{2 2}}}}};
        {\textit{(C2)}}: {{\color{\colorMATH}\ensuremath{\mathit{\sigma _{1}^{\prime \prime} \sim ^{\Sigma }_{n-n_{1 1}-n_{1 2}} \sigma _{2}^{\prime \prime}}}}}; and
        {\textit{(C3)}}: {{\color{\colorMATH}\ensuremath{\mathit{v_{1} \sim ^{\Sigma }_{n-n_{1 1}-n_{1 2}} v_{2}}}}}.
        {\textit{(C1)}} is immediate from {\textit{(IH.1.C1)}} and {\textit{(IH.2.C1)}}.
        {\textit{(C2)}} is immediate from {\textit{(IH.2.C2)}}.
        {\textit{(C3)}} is immediate from {\textit{(IH.2.C3)}}
     \end{itemize} 
  \item  \begin{itemize}[label={},leftmargin=0pt]\item  {\textbf{Case}} {{\color{\colorMATH}\ensuremath{\mathit{n=n+1}}}} {\textbf{and}} {{\color{\colorMATH}\ensuremath{\mathit{e=\langle e_{1},e_{2}\rangle }}}}: 
     \item  By inversion on {\textit{(H4)}} and {\textit{(H5)}} we have:
     \item  \ {{\color{\colorMATH}\ensuremath{\mathit{\begin{array}{rclclclcl@{\hspace*{1.00em}}l
           } \rho _{1} &{}\vdash {}& \mathrlap{\raisebox{-5pt}{\color{black!50}$\llcorner$}}\mathrlap{\raisebox{6pt}{\color{black!50}$\ulcorner$}}\hspace{2pt}\vphantom{\underset x{\overset XX}}\sigma _{1} &{},{}&e_{1}\hspace{2pt}\mathllap{\raisebox{-5pt}{\color{black!50}$\lrcorner$}}\mathllap{\raisebox{6pt}{\color{black!50}$\urcorner$}}\vphantom{\underset x{\overset XX}} &{}\Downarrow _{n_{1 1}}{}& \mathrlap{\raisebox{-5pt}{\color{black!50}$\llcorner$}}\mathrlap{\raisebox{6pt}{\color{black!50}$\ulcorner$}}\hspace{2pt}\vphantom{\underset x{\overset XX}}\sigma _{1}^{\prime}&{},{}&v_{1 1}\hspace{2pt}\mathllap{\raisebox{-5pt}{\color{black!50}$\lrcorner$}}\mathllap{\raisebox{6pt}{\color{black!50}$\urcorner$}}\vphantom{\underset x{\overset XX}} & {{\color{\colorTEXT}\textnormal{{\textit{(H4.1)}}}}}
           \cr  \rho _{1} &{}\vdash {}& \mathrlap{\raisebox{-5pt}{\color{black!50}$\llcorner$}}\mathrlap{\raisebox{6pt}{\color{black!50}$\ulcorner$}}\hspace{2pt}\vphantom{\underset x{\overset XX}}\sigma _{1}^{\prime}&{},{}&e_{2}\hspace{2pt}\mathllap{\raisebox{-5pt}{\color{black!50}$\lrcorner$}}\mathllap{\raisebox{6pt}{\color{black!50}$\urcorner$}}\vphantom{\underset x{\overset XX}} &{}\Downarrow _{n_{1 2}}{}& \mathrlap{\raisebox{-5pt}{\color{black!50}$\llcorner$}}\mathrlap{\raisebox{6pt}{\color{black!50}$\ulcorner$}}\hspace{2pt}\vphantom{\underset x{\overset XX}}\sigma _{1}^{\prime \prime}&{},{}&v_{1 2}\hspace{2pt}\mathllap{\raisebox{-5pt}{\color{black!50}$\lrcorner$}}\mathllap{\raisebox{6pt}{\color{black!50}$\urcorner$}}\vphantom{\underset x{\overset XX}} & {{\color{\colorTEXT}\textnormal{{\textit{(H4.2)}}}}}
           \cr  \rho _{2} &{}\vdash {}& \mathrlap{\raisebox{-5pt}{\color{black!50}$\llcorner$}}\mathrlap{\raisebox{6pt}{\color{black!50}$\ulcorner$}}\hspace{2pt}\vphantom{\underset x{\overset XX}}\sigma _{2} &{},{}&e_{1}\hspace{2pt}\mathllap{\raisebox{-5pt}{\color{black!50}$\lrcorner$}}\mathllap{\raisebox{6pt}{\color{black!50}$\urcorner$}}\vphantom{\underset x{\overset XX}} &{}\Downarrow _{n_{2 1}}{}& \mathrlap{\raisebox{-5pt}{\color{black!50}$\llcorner$}}\mathrlap{\raisebox{6pt}{\color{black!50}$\ulcorner$}}\hspace{2pt}\vphantom{\underset x{\overset XX}}\sigma _{2}^{\prime}&{},{}&v_{2 1}\hspace{2pt}\mathllap{\raisebox{-5pt}{\color{black!50}$\lrcorner$}}\mathllap{\raisebox{6pt}{\color{black!50}$\urcorner$}}\vphantom{\underset x{\overset XX}} & {{\color{\colorTEXT}\textnormal{{\textit{(H5.1)}}}}}
           \cr  \rho _{2} &{}\vdash {}& \mathrlap{\raisebox{-5pt}{\color{black!50}$\llcorner$}}\mathrlap{\raisebox{6pt}{\color{black!50}$\ulcorner$}}\hspace{2pt}\vphantom{\underset x{\overset XX}}\sigma _{2}^{\prime}&{},{}&e_{2}\hspace{2pt}\mathllap{\raisebox{-5pt}{\color{black!50}$\lrcorner$}}\mathllap{\raisebox{6pt}{\color{black!50}$\urcorner$}}\vphantom{\underset x{\overset XX}} &{}\Downarrow _{n_{2 2}}{}& \mathrlap{\raisebox{-5pt}{\color{black!50}$\llcorner$}}\mathrlap{\raisebox{6pt}{\color{black!50}$\ulcorner$}}\hspace{2pt}\vphantom{\underset x{\overset XX}}\sigma _{2}^{\prime \prime}&{},{}&v_{2 2}\hspace{2pt}\mathllap{\raisebox{-5pt}{\color{black!50}$\lrcorner$}}\mathllap{\raisebox{6pt}{\color{black!50}$\urcorner$}}\vphantom{\underset x{\overset XX}} & {{\color{\colorTEXT}\textnormal{{\textit{(H5.2)}}}}}
           \end{array}}}}}
     \item  and we also have:
        {{\color{\colorMATH}\ensuremath{\mathit{n_{1} = n_{1 1} + n_{1 2}}}}},
        {{\color{\colorMATH}\ensuremath{\mathit{n_{2} = n_{2 1} + n_{2 2}}}}},
        {{\color{\colorMATH}\ensuremath{\mathit{\sigma _{1}^{\prime} = \sigma _{1}^{\prime \prime}}}}},
        {{\color{\colorMATH}\ensuremath{\mathit{\sigma _{2}^{\prime} = \sigma _{2}^{\prime \prime}}}}},
        {{\color{\colorMATH}\ensuremath{\mathit{v_{1} = \langle v_{1 1},v_{1 2}\rangle }}}} and
        {{\color{\colorMATH}\ensuremath{\mathit{v_{2} = \langle v_{2 1},v_{2 2}\rangle }}}}.
        By IH ({{\color{\colorMATH}\ensuremath{\mathit{n = n}}}} decreasing), {\textit{(H1)}}, {\textit{(H2)}}, {\textit{(H3)}}, {\textit{(H4.1)}} and {\textit{(H5.1)}} we have:
        {{\color{\colorMATH}\ensuremath{\mathit{n_{1 1} = n_{2 1}}}}}            {\textit{(IH.1.C1)}};
        {{\color{\colorMATH}\ensuremath{\mathit{\sigma _{1}^{\prime} \sim ^{\Sigma }_{n-n_{1 1}} \sigma _{2}^{\prime}}}}}  {\textit{(IH.1.C2)}}; and
        {{\color{\colorMATH}\ensuremath{\mathit{v_{1 1} \sim ^{\Sigma }_{n-n_{1 1}} v_{2 1}}}}}  {\textit{(IH.1.C3)}}.
        Note the following facts:
        {{\color{\colorMATH}\ensuremath{\mathit{n_{1 2} \leq  n-n_{1 1}}}}}    {\textit{(F1)}}; and
        {{\color{\colorMATH}\ensuremath{\mathit{\rho _{1} \sim ^{\Sigma }_{n-n_{1 1}}}}}} {\textit{(F2)}}.
        {\textit{(F1)}} follows from {\textit{(H3)}} and {{\color{\colorMATH}\ensuremath{\mathit{n_{1} = n_{1 1} + n_{1 2}}}}}.
        {\textit{(F2)}} follows from {\textit{(H1)}} and {\nameref{thm:step-index-weakening}.1}.
        By IH ({{\color{\colorMATH}\ensuremath{\mathit{n = n-n_{1 1}}}}} decreasing), {\textit{(F2)}}, {\textit{(IH.1.C2)}}, {\textit{(F1)}}, {\textit{(H4.2)}} and {\textit{(H5.2)}} we have:
        {{\color{\colorMATH}\ensuremath{\mathit{n_{1 2} = n_{2 2}}}}}               {\textit{(IH.2.C1)}};
        {{\color{\colorMATH}\ensuremath{\mathit{\sigma _{1}^{\prime \prime} \sim ^{\Sigma }_{n-n_{1 1}-n_{1 2}} \sigma _{2}^{\prime \prime}}}}} {\textit{(IH.2.C2)}}; and
        {{\color{\colorMATH}\ensuremath{\mathit{v_{1 2} \sim ^{\Sigma }_{n-n_{1 1}-n_{1 2}} v_{2 2}}}}} {\textit{(IH.2.C3)}}.
        To show:
        {\textit{(C1)}}: {{\color{\colorMATH}\ensuremath{\mathit{n_{1 1} + n_{1 2} = n_{2 1} + n_{2 2}}}}};
        {\textit{(C2)}}: {{\color{\colorMATH}\ensuremath{\mathit{\sigma _{1}^{\prime \prime} \sim ^{\Sigma }_{n-n_{1 1}-n_{1 2}} \sigma _{2}^{\prime \prime}}}}}; and
        {\textit{(C3)}}: {{\color{\colorMATH}\ensuremath{\mathit{\langle v_{1 1},v_{1 2}\rangle  \sim ^{\Sigma }_{n-n_{1 1}-n_{1 2}} \langle v_{2 1},v_{2 2}\rangle }}}}.
        {\textit{(C1)}} is immediate from {\textit{(IH.1.C1)}} and {\textit{(IH.2.C1)}}.
        {\textit{(C2)}} is immediate from {\textit{(IH.2.C2)}}.
        {\textit{(C3)}} is immediate from {\textit{(IH.1.C3)}} and {\textit{(IH.2.C3)}}.
     \end{itemize} 
  \item  \begin{itemize}[label={},leftmargin=0pt]\item  {\textbf{Case}} {{\color{\colorMATH}\ensuremath{\mathit{n=n+1}}}} {\textbf{and}} {{\color{\colorMATH}\ensuremath{\mathit{e=\pi _{i}(e)}}}}: 
     \item  By inversion on {\textit{(H4)}} and {\textit{(H5)}} we have:
     \item  \ {{\color{\colorMATH}\ensuremath{\mathit{\begin{array}{rclclclcl@{\hspace*{1.00em}}l
           } \rho _{1} &{}\vdash {}& \mathrlap{\raisebox{-5pt}{\color{black!50}$\llcorner$}}\mathrlap{\raisebox{6pt}{\color{black!50}$\ulcorner$}}\hspace{2pt}\vphantom{\underset x{\overset XX}}\sigma _{1} &{},{}&e\hspace{2pt}\mathllap{\raisebox{-5pt}{\color{black!50}$\lrcorner$}}\mathllap{\raisebox{6pt}{\color{black!50}$\urcorner$}}\vphantom{\underset x{\overset XX}} &{}\Downarrow _{n_{1}}{}& \mathrlap{\raisebox{-5pt}{\color{black!50}$\llcorner$}}\mathrlap{\raisebox{6pt}{\color{black!50}$\ulcorner$}}\hspace{2pt}\vphantom{\underset x{\overset XX}}\sigma _{1}^{\prime}&{},{}&\langle v_{1 1},v_{1 2}\rangle \hspace{2pt}\mathllap{\raisebox{-5pt}{\color{black!50}$\lrcorner$}}\mathllap{\raisebox{6pt}{\color{black!50}$\urcorner$}}\vphantom{\underset x{\overset XX}} & {{\color{\colorTEXT}\textnormal{{\textit{(H4.1)}}}}}
           \cr  \rho _{2} &{}\vdash {}& \mathrlap{\raisebox{-5pt}{\color{black!50}$\llcorner$}}\mathrlap{\raisebox{6pt}{\color{black!50}$\ulcorner$}}\hspace{2pt}\vphantom{\underset x{\overset XX}}\sigma _{2} &{},{}&e\hspace{2pt}\mathllap{\raisebox{-5pt}{\color{black!50}$\lrcorner$}}\mathllap{\raisebox{6pt}{\color{black!50}$\urcorner$}}\vphantom{\underset x{\overset XX}} &{}\Downarrow _{n_{2}}{}& \mathrlap{\raisebox{-5pt}{\color{black!50}$\llcorner$}}\mathrlap{\raisebox{6pt}{\color{black!50}$\ulcorner$}}\hspace{2pt}\vphantom{\underset x{\overset XX}}\sigma _{2}^{\prime}&{},{}&\langle v_{2 1},v_{2 2}\rangle \hspace{2pt}\mathllap{\raisebox{-5pt}{\color{black!50}$\lrcorner$}}\mathllap{\raisebox{6pt}{\color{black!50}$\urcorner$}}\vphantom{\underset x{\overset XX}} & {{\color{\colorTEXT}\textnormal{{\textit{(H5.1)}}}}}
           \end{array}}}}}
     \item  and we also have:
        {{\color{\colorMATH}\ensuremath{\mathit{n_{1} = n_{1}}}}},
        {{\color{\colorMATH}\ensuremath{\mathit{n_{2} = n_{2}}}}},
        {{\color{\colorMATH}\ensuremath{\mathit{\sigma _{1}^{\prime} = \sigma _{1}^{\prime}}}}},
        {{\color{\colorMATH}\ensuremath{\mathit{\sigma _{2}^{\prime} = \sigma _{2}^{\prime}}}}},
        {{\color{\colorMATH}\ensuremath{\mathit{v_{1} = v_{1 i}}}}} and
        {{\color{\colorMATH}\ensuremath{\mathit{v_{2} = v_{2 i}}}}}.
        By IH ({{\color{\colorMATH}\ensuremath{\mathit{n = n}}}} decreasing), {\textit{(H1)}}, {\textit{(H2)}}, {\textit{(H3)}}, {\textit{(H4.1)}} and {\textit{(H5.1)}} we have:
        {{\color{\colorMATH}\ensuremath{\mathit{n_{1} = n_{2}}}}}                         {\textit{(IH.1.C1)}};
        {{\color{\colorMATH}\ensuremath{\mathit{\sigma _{1}^{\prime} \sim ^{\Sigma }_{n-n_{1}} \sigma _{2}^{\prime}}}}}              {\textit{(IH.1.C2)}}; and
        {{\color{\colorMATH}\ensuremath{\mathit{\langle v_{1 1},v_{1 2}\rangle  \sim ^{\Sigma }_{n-n_{1}} \langle v_{2 1},v_{2 2}\rangle }}}}  {\textit{(IH.1.C3)}}.
        To show:
        {\textit{(C1)}}: {{\color{\colorMATH}\ensuremath{\mathit{n_{1} = n_{2}}}}};
        {\textit{(C2)}}: {{\color{\colorMATH}\ensuremath{\mathit{\sigma _{1}^{\prime} \sim ^{\Sigma }_{n-n_{1}} \sigma _{2}^{\prime}}}}}; and
        {\textit{(C3)}}: {{\color{\colorMATH}\ensuremath{\mathit{v_{1 n^{\prime}} \sim ^{\Sigma }_{n-n_{1}} v_{2 n^{\prime}}}}}}.
        {\textit{(C1)}} is immediate from {\textit{(IH.1.C1)}}.
        {\textit{(C2)}} is immediate from {\textit{(IH.1.C2)}}.
        {\textit{(C3)}} is immediate from {\textit{(IH.1.C3)}}.
     \end{itemize} 
  \item  \begin{itemize}[label={},leftmargin=0pt]\item  {\textbf{Case}} {{\color{\colorMATH}\ensuremath{\mathit{n=n+1}}}} {\textbf{and}} {{\color{\colorMATH}\ensuremath{\mathit{e=\lambda x.\hspace*{0.33em}e}}}}: 
        By inversion on {\textit{(H4)}} and {\textit{(H5)}} we have:
        {{\color{\colorMATH}\ensuremath{\mathit{n_{1} = 0}}}},
        {{\color{\colorMATH}\ensuremath{\mathit{n_{2} = 0}}}},
        {{\color{\colorMATH}\ensuremath{\mathit{\sigma _{1}^{\prime} = \sigma _{1}}}}},
        {{\color{\colorMATH}\ensuremath{\mathit{\sigma _{2}^{\prime} = \sigma _{2}}}}},
        {{\color{\colorMATH}\ensuremath{\mathit{v_{1} = \langle \lambda x.\hspace*{0.33em}e\mathrel{|}\rho _{1}\rangle }}}} and
        {{\color{\colorMATH}\ensuremath{\mathit{v_{2} = \langle \lambda x.\hspace*{0.33em}e\mathrel{|}\rho _{2}\rangle }}}}.
        To show:
        {\textit{(C1)}}: {{\color{\colorMATH}\ensuremath{\mathit{0 = 0}}}};
        {\textit{(C2)}}: {{\color{\colorMATH}\ensuremath{\mathit{\sigma _{1} \sim _{n}^{\Sigma } \sigma _{2}}}}}; and
        {\textit{(C3)}}: {{\color{\colorMATH}\ensuremath{\mathit{\langle \lambda x.\hspace*{0.33em}e\mathrel{|}\rho _{1}\rangle  \sim _{n}^{\Sigma } \langle \lambda x.\hspace*{0.33em}e\mathrel{|}\rho _{2}\rangle }}}}.
        {\textit{(C1)}} is trivial.
        {\textit{(C2)}} is immediate from {\textit{(H2)}}.
        {\textit{(C3)}} holds as follows:
        Unfolding the definition, we must show:
        {\textit{(C3)}}: {{\color{\colorMATH}\ensuremath{\mathit{\begin{array}[t]{lrl
                 } &{} {}&  \forall  n^{\prime}\leq n,v_{1},v_{2},\sigma _{1}^{\prime},\sigma _{2}^{\prime}.\hspace*{0.33em} \sigma _{1}^{\prime} \sim _{n^{\prime}}^{\Sigma } \sigma _{2}^{\prime} \hspace*{0.33em}\wedge \hspace*{0.33em} v_{1} \sim _{n^{\prime}}^{\Sigma } v_{2}
                 \cr  &{}\Rightarrow {}&  \sigma _{1}^{\prime},\{ x\mapsto v_{1}\} \uplus \rho _{1},e \sim _{n^{\prime}}^{\Sigma } \sigma _{2}^{\prime},\{ x\mapsto v_{2}\} \uplus \rho _{2},e
                 \end{array}}}}}.
        To show {\textit{(C3)}}, we assume:
        {{\color{\colorMATH}\ensuremath{\mathit{\sigma _{1}^{\prime} \sim _{n^{\prime}}^{\Sigma } \sigma _{2}^{\prime}}}}}  {\textit{(C3.H1)}}; and
        {{\color{\colorMATH}\ensuremath{\mathit{v_{1} \sim _{n^{\prime}}^{\Sigma } v_{2}}}}}    {\textit{(C3.H2)}}.
        And we must show:
        {\textit{(C3.1)}}: {{\color{\colorMATH}\ensuremath{\mathit{\sigma _{1}^{\prime},\{ x\mapsto v_{1}\} \uplus \rho _{1},e \sim _{n^{\prime}}^{\Sigma } \sigma _{2}^{\prime},\{ x\mapsto v_{2}\} \uplus \rho _{2},e}}}}.
        Note the following facts:
        {{\color{\colorMATH}\ensuremath{\mathit{\rho _{1} \sim _{n^{\prime}}^{\Sigma } \rho _{1}}}}}               {\textit{(F1)}}; and
        {{\color{\colorMATH}\ensuremath{\mathit{\{ x\mapsto v_{1}\} \uplus \rho _{1} \sim _{n^{\prime}}^{\Sigma } \{ x\mapsto v_{2}\} \uplus \rho _{2}}}}} {\textit{(F2)}}.
        {\textit{(F1)}} holds from {\textit{H1}} and {\nameref{thm:step-index-weakening}.1}.
        {\textit{(F2)}} holds from {\textit{(F1)}}, {\textit{(C3.H2)}} and the definition of {{\color{\colorMATH}\ensuremath{\mathit{\rho  \sim _{n^{\prime}}^{\Sigma } \rho }}}}.
        {\textit{(C3.1)}} then holds by IH ({{\color{\colorMATH}\ensuremath{\mathit{n = n^{\prime}}}}} decreasing), {{\color{\colorMATH}\ensuremath{\mathit{F2}}}} and {{\color{\colorMATH}\ensuremath{\mathit{C3.H1}}}}.
     \end{itemize} 
  \item  \begin{itemize}[label={},leftmargin=0pt]\item  {\textbf{Case}} {{\color{\colorMATH}\ensuremath{\mathit{n=n+1}}}} {\textbf{and}} {{\color{\colorMATH}\ensuremath{\mathit{e=e_{1}(e_{2})}}}}: 
     \item  By inversion on {\textit{(H4)}} and {\textit{(H5)}} we have:
     \item  \ {{\color{\colorMATH}\ensuremath{\mathit{\begin{array}{rclclclcl@{\hspace*{1.00em}}l
           } \rho _{1}         &{}\vdash {}& \mathrlap{\raisebox{-5pt}{\color{black!50}$\llcorner$}}\mathrlap{\raisebox{6pt}{\color{black!50}$\ulcorner$}}\hspace{2pt}\vphantom{\underset x{\overset XX}}\sigma _{1} &{},{}&e_{1}\hspace{2pt}\mathllap{\raisebox{-5pt}{\color{black!50}$\lrcorner$}}\mathllap{\raisebox{6pt}{\color{black!50}$\urcorner$}}\vphantom{\underset x{\overset XX}}  &{}\Downarrow _{n_{1 1}}{}& \mathrlap{\raisebox{-5pt}{\color{black!50}$\llcorner$}}\mathrlap{\raisebox{6pt}{\color{black!50}$\ulcorner$}}\hspace{2pt}\vphantom{\underset x{\overset XX}}\sigma _{1}^{\prime}&{},{}&\langle \lambda x.\hspace*{0.33em}e_{1}^{\prime}\mathrel{|}\rho _{1}^{\prime}\rangle \hspace{2pt}\mathllap{\raisebox{-5pt}{\color{black!50}$\lrcorner$}}\mathllap{\raisebox{6pt}{\color{black!50}$\urcorner$}}\vphantom{\underset x{\overset XX}} & {{\color{\colorTEXT}\textnormal{{\textit{(H4.1)}}}}}
           \cr  \rho _{1}         &{}\vdash {}& \mathrlap{\raisebox{-5pt}{\color{black!50}$\llcorner$}}\mathrlap{\raisebox{6pt}{\color{black!50}$\ulcorner$}}\hspace{2pt}\vphantom{\underset x{\overset XX}}\sigma _{1}^{\prime}&{},{}&e_{2}\hspace{2pt}\mathllap{\raisebox{-5pt}{\color{black!50}$\lrcorner$}}\mathllap{\raisebox{6pt}{\color{black!50}$\urcorner$}}\vphantom{\underset x{\overset XX}}  &{}\Downarrow _{n_{1 2}}{}& \mathrlap{\raisebox{-5pt}{\color{black!50}$\llcorner$}}\mathrlap{\raisebox{6pt}{\color{black!50}$\ulcorner$}}\hspace{2pt}\vphantom{\underset x{\overset XX}}\sigma _{1}^{\prime \prime}&{},{}&v_{1}\hspace{2pt}\mathllap{\raisebox{-5pt}{\color{black!50}$\lrcorner$}}\mathllap{\raisebox{6pt}{\color{black!50}$\urcorner$}}\vphantom{\underset x{\overset XX}}           & {{\color{\colorTEXT}\textnormal{{\textit{(H4.2)}}}}}
           \cr  \{ x\mapsto v_{1}\} \uplus \rho _{1}^{\prime} &{}\vdash {}& \mathrlap{\raisebox{-5pt}{\color{black!50}$\llcorner$}}\mathrlap{\raisebox{6pt}{\color{black!50}$\ulcorner$}}\hspace{2pt}\vphantom{\underset x{\overset XX}}\sigma _{2}^{\prime \prime}&{},{}&e_{1}^{\prime}\hspace{2pt}\mathllap{\raisebox{-5pt}{\color{black!50}$\lrcorner$}}\mathllap{\raisebox{6pt}{\color{black!50}$\urcorner$}}\vphantom{\underset x{\overset XX}} &{}\Downarrow _{n_{1 3}}{}& \mathrlap{\raisebox{-5pt}{\color{black!50}$\llcorner$}}\mathrlap{\raisebox{6pt}{\color{black!50}$\ulcorner$}}\hspace{2pt}\vphantom{\underset x{\overset XX}}\sigma _{1}^{\prime \prime \prime}&{},{}&v_{1}^{\prime}\hspace{2pt}\mathllap{\raisebox{-5pt}{\color{black!50}$\lrcorner$}}\mathllap{\raisebox{6pt}{\color{black!50}$\urcorner$}}\vphantom{\underset x{\overset XX}}          & {{\color{\colorTEXT}\textnormal{{\textit{(H4.3)}}}}}
           \cr  \rho _{2}         &{}\vdash {}& \mathrlap{\raisebox{-5pt}{\color{black!50}$\llcorner$}}\mathrlap{\raisebox{6pt}{\color{black!50}$\ulcorner$}}\hspace{2pt}\vphantom{\underset x{\overset XX}}\sigma _{2} &{},{}&e_{1}\hspace{2pt}\mathllap{\raisebox{-5pt}{\color{black!50}$\lrcorner$}}\mathllap{\raisebox{6pt}{\color{black!50}$\urcorner$}}\vphantom{\underset x{\overset XX}}  &{}\Downarrow _{n_{2 1}}{}& \mathrlap{\raisebox{-5pt}{\color{black!50}$\llcorner$}}\mathrlap{\raisebox{6pt}{\color{black!50}$\ulcorner$}}\hspace{2pt}\vphantom{\underset x{\overset XX}}\sigma _{2}^{\prime}&{},{}&\langle \lambda x.\hspace*{0.33em}e_{2}^{\prime}\mathrel{|}\rho _{2}^{\prime}\rangle \hspace{2pt}\mathllap{\raisebox{-5pt}{\color{black!50}$\lrcorner$}}\mathllap{\raisebox{6pt}{\color{black!50}$\urcorner$}}\vphantom{\underset x{\overset XX}} & {{\color{\colorTEXT}\textnormal{{\textit{(H5.1)}}}}}
           \cr  \rho _{2}         &{}\vdash {}& \mathrlap{\raisebox{-5pt}{\color{black!50}$\llcorner$}}\mathrlap{\raisebox{6pt}{\color{black!50}$\ulcorner$}}\hspace{2pt}\vphantom{\underset x{\overset XX}}\sigma _{2}^{\prime}&{},{}&e_{2}\hspace{2pt}\mathllap{\raisebox{-5pt}{\color{black!50}$\lrcorner$}}\mathllap{\raisebox{6pt}{\color{black!50}$\urcorner$}}\vphantom{\underset x{\overset XX}}  &{}\Downarrow _{n_{2 2}}{}& \mathrlap{\raisebox{-5pt}{\color{black!50}$\llcorner$}}\mathrlap{\raisebox{6pt}{\color{black!50}$\ulcorner$}}\hspace{2pt}\vphantom{\underset x{\overset XX}}\sigma _{2}^{\prime \prime}&{},{}&v_{2}\hspace{2pt}\mathllap{\raisebox{-5pt}{\color{black!50}$\lrcorner$}}\mathllap{\raisebox{6pt}{\color{black!50}$\urcorner$}}\vphantom{\underset x{\overset XX}}           & {{\color{\colorTEXT}\textnormal{{\textit{(H5.2)}}}}}
           \cr  \{ x\mapsto v_{2}\} \uplus \rho _{2}^{\prime} &{}\vdash {}& \mathrlap{\raisebox{-5pt}{\color{black!50}$\llcorner$}}\mathrlap{\raisebox{6pt}{\color{black!50}$\ulcorner$}}\hspace{2pt}\vphantom{\underset x{\overset XX}}\sigma _{2}^{\prime \prime}&{},{}&e_{2}^{\prime}\hspace{2pt}\mathllap{\raisebox{-5pt}{\color{black!50}$\lrcorner$}}\mathllap{\raisebox{6pt}{\color{black!50}$\urcorner$}}\vphantom{\underset x{\overset XX}} &{}\Downarrow _{n_{2 3}}{}& \mathrlap{\raisebox{-5pt}{\color{black!50}$\llcorner$}}\mathrlap{\raisebox{6pt}{\color{black!50}$\ulcorner$}}\hspace{2pt}\vphantom{\underset x{\overset XX}}\sigma _{2}^{\prime \prime \prime}&{},{}&v_{2}^{\prime}\hspace{2pt}\mathllap{\raisebox{-5pt}{\color{black!50}$\lrcorner$}}\mathllap{\raisebox{6pt}{\color{black!50}$\urcorner$}}\vphantom{\underset x{\overset XX}}          & {{\color{\colorTEXT}\textnormal{{\textit{(H5.3)}}}}}
           \end{array}}}}}
     \item  and we also have:
        {{\color{\colorMATH}\ensuremath{\mathit{n_{1} = n_{1 1} + n_{1 2} + n_{1 3} + 1}}}},
        {{\color{\colorMATH}\ensuremath{\mathit{n_{2} = n_{2 1} + n_{2 2} + n_{2 3} + 1}}}},
        {{\color{\colorMATH}\ensuremath{\mathit{\sigma _{1}^{\prime} = \sigma _{1}^{\prime \prime \prime}}}}},
        {{\color{\colorMATH}\ensuremath{\mathit{\sigma _{2}^{\prime} = \sigma _{2}^{\prime \prime \prime}}}}},
        {{\color{\colorMATH}\ensuremath{\mathit{v_{1} = v_{1}^{\prime}}}}} and
        {{\color{\colorMATH}\ensuremath{\mathit{v_{2} = v_{2}^{\prime}}}}}.
        By IH ({{\color{\colorMATH}\ensuremath{\mathit{n = n}}}} decreasing), {\textit{(H1)}}, {\textit{(H2)}}, {\textit{(H3)}}, {\textit{(H4.1)}} and {\textit{(H5.1)}} we have:
        {{\color{\colorMATH}\ensuremath{\mathit{n_{1 1} = n_{2 1}}}}}                    {\textit{(IH.1.C1)}};
        {{\color{\colorMATH}\ensuremath{\mathit{\sigma _{1}^{\prime} \sim ^{\Sigma }_{n-n_{1 1}} \sigma _{2}^{\prime}}}}}          {\textit{(IH.1.C2)}}; and
        {{\color{\colorMATH}\ensuremath{\mathit{\langle \lambda x.\hspace*{0.33em}e_{1}^{\prime}\mathrel{|}\rho _{1}^{\prime} \sim ^{\Sigma }_{n-n_{1 1}} v_{2 1}}}}}  {\textit{(IH.1.C3)}}.
        Note the following facts:
        {{\color{\colorMATH}\ensuremath{\mathit{n_{1 2} \leq  n-n_{1 1}}}}}    {\textit{(F1)}}; and
        {{\color{\colorMATH}\ensuremath{\mathit{\rho _{1} \sim ^{\Sigma }_{n-n_{1 1}}}}}} {\textit{(F2)}}.
        {\textit{(F1)}} follows from {\textit{(H3)}} and {{\color{\colorMATH}\ensuremath{\mathit{n_{1} = n_{1 1} + n_{1 2} + n_{1 3} + 1}}}}.
        {\textit{(F2)}} follows from {\textit{(H1)}} and {\nameref{thm:step-index-weakening}.1}.
        By IH ({{\color{\colorMATH}\ensuremath{\mathit{n = n-n_{1 1}}}}} decreasing), {\textit{(H2)}}, {\textit{(IH.1.C2)}}, {\textit{(F1)}}, {\textit{(H4.2)}} and {\textit{(H5.2)}} we have:
        {{\color{\colorMATH}\ensuremath{\mathit{n_{1 2} = n_{2 2}}}}}               {\textit{(IH.2.C1)}};
        {{\color{\colorMATH}\ensuremath{\mathit{\sigma _{1}^{\prime \prime} \sim ^{\Sigma }_{n-n_{1 1}-n_{1 2}} \sigma _{2}^{\prime \prime}}}}} {\textit{(IH.2.C2)}}; and
        {{\color{\colorMATH}\ensuremath{\mathit{v_{1} \sim ^{\Sigma }_{n-n_{1 1}-n_{1 2}} v_{2}}}}}   {\textit{(IH.2.C3)}}.
        Note the following facts, each of which follow from {\textit{(H3)}}
        and {{\color{\colorMATH}\ensuremath{\mathit{n_{1} = n_{1 1} + n_{1 2} + n_{1 3} + 1}}}}:
        {{\color{\colorMATH}\ensuremath{\mathit{n_{1 3} \leq  n-n_{1 1}-n_{1 2}}}}}   {\textit{(F3)}}; and
        {{\color{\colorMATH}\ensuremath{\mathit{n-n_{1 1}-n_{1 2}-n_{1 3} > 0}}}} {\textit{(F4)}}.
        Also note the following facts which follow from {\textit{(IH.1.C3)}},
        {\textit{(IH.2.C2)}}, {\textit{(IH.2.C3)}}, {\textit{(F3)}} and {\textit{(F4)}}:
        {{\color{\colorMATH}\ensuremath{\mathit{n_{1 3} = n_{2 3}}}}}                       {\textit{(F4.C1)}};
        {{\color{\colorMATH}\ensuremath{\mathit{\sigma _{1}^{\prime \prime \prime} \sim ^{\Sigma }_{n-n_{1 1}-n_{1 2}-n_{1 3}-1} \sigma _{2}^{\prime \prime \prime}}}}}   {\textit{(F4.C2)}}; and
        {{\color{\colorMATH}\ensuremath{\mathit{v_{1}^{\prime} \sim ^{\sigma }_{n-n_{1 1}-n_{1 2}-n_{1 3}-1} v_{2}^{\prime}}}}}   {\textit{(F4.C3)}}.
        To show:
        {\textit{(C1)}}: {{\color{\colorMATH}\ensuremath{\mathit{n_{1 1} + n_{1 2} + n_{1 3} + 1 = n_{2 1} + n_{2 2} + n_{2 3} + 1}}}};
        {\textit{(C2)}}: {{\color{\colorMATH}\ensuremath{\mathit{\sigma _{1}^{\prime \prime \prime} \sim ^{\Sigma }_{n-n_{1 1}-n_{1 2}-n_{1 3}-1} \sigma _{2}^{\prime \prime \prime}}}}}; and
        {\textit{(C3)}}: {{\color{\colorMATH}\ensuremath{\mathit{v_{1}^{\prime} \sim ^{\Sigma }_{n-n_{1 1}-n_{1 2}-n_{1 3}-1} v_{2}^{\prime}}}}}.
        {\textit{(C1)}} is immediate from {\textit{(IH.1.C1)}}, {\textit{(IH.2.C1)}} and {\textit{(F4.C1)}}.
        {\textit{(C2)}} is immediate from {\textit{(F4.C2)}} and {\nameref{thm:step-index-weakening}.2}.
        {\textit{(C3)}} is immediate from {\textit{(F4.C3)}} and {\nameref{thm:step-index-weakening}.3}.
     \end{itemize} 
  \item  \begin{itemize}[label={},leftmargin=0pt]\item  {\textbf{Case}} {{\color{\colorMATH}\ensuremath{\mathit{n=n+1}}}} {\textbf{and}} {{\color{\colorMATH}\ensuremath{\mathit{e={{\color{\colorSYNTAX}\texttt{ref}}}(e)}}}}: 
     \item  By inversion on {\textit{(H4)}} and {\textit{(H5)}} we have:
     \item  \ {{\color{\colorMATH}\ensuremath{\mathit{\begin{array}{rclclclcl@{\hspace*{1.00em}}l
           } \rho _{1} &{}\vdash {}& \mathrlap{\raisebox{-5pt}{\color{black!50}$\llcorner$}}\mathrlap{\raisebox{6pt}{\color{black!50}$\ulcorner$}}\hspace{2pt}\vphantom{\underset x{\overset XX}}\sigma _{1} &{},{}&e\hspace{2pt}\mathllap{\raisebox{-5pt}{\color{black!50}$\lrcorner$}}\mathllap{\raisebox{6pt}{\color{black!50}$\urcorner$}}\vphantom{\underset x{\overset XX}} &{}\Downarrow _{n_{1}}{}& \mathrlap{\raisebox{-5pt}{\color{black!50}$\llcorner$}}\mathrlap{\raisebox{6pt}{\color{black!50}$\ulcorner$}}\hspace{2pt}\vphantom{\underset x{\overset XX}}\sigma _{1}^{\prime}&{},{}&v_{1}\hspace{2pt}\mathllap{\raisebox{-5pt}{\color{black!50}$\lrcorner$}}\mathllap{\raisebox{6pt}{\color{black!50}$\urcorner$}}\vphantom{\underset x{\overset XX}}    & {{\color{\colorTEXT}\textnormal{{\textit{(H4.1)}}}}}
           \cr  \multicolumn{5}{r}{\ell _{1}} &{}={}& \multicolumn{3}{l}{{\text{alloc}}({\text{dom}}(\sigma _{1}^{\prime}))} & {{\color{\colorTEXT}\textnormal{{\textit{(H4.2)}}}}}
           \cr  \rho _{2} &{}\vdash {}& \mathrlap{\raisebox{-5pt}{\color{black!50}$\llcorner$}}\mathrlap{\raisebox{6pt}{\color{black!50}$\ulcorner$}}\hspace{2pt}\vphantom{\underset x{\overset XX}}\sigma _{2} &{},{}&e\hspace{2pt}\mathllap{\raisebox{-5pt}{\color{black!50}$\lrcorner$}}\mathllap{\raisebox{6pt}{\color{black!50}$\urcorner$}}\vphantom{\underset x{\overset XX}} &{}\Downarrow _{n_{2}}{}& \mathrlap{\raisebox{-5pt}{\color{black!50}$\llcorner$}}\mathrlap{\raisebox{6pt}{\color{black!50}$\ulcorner$}}\hspace{2pt}\vphantom{\underset x{\overset XX}}\sigma _{2}^{\prime}&{},{}&v_{2}\hspace{2pt}\mathllap{\raisebox{-5pt}{\color{black!50}$\lrcorner$}}\mathllap{\raisebox{6pt}{\color{black!50}$\urcorner$}}\vphantom{\underset x{\overset XX}}    & {{\color{\colorTEXT}\textnormal{{\textit{(H5.1)}}}}}
           \cr  \multicolumn{5}{r}{\ell _{2}} &{}={}& \multicolumn{3}{l}{{\text{alloc}}({\text{dom}}(\sigma _{2}^{\prime}))} & {{\color{\colorTEXT}\textnormal{{\textit{(H4.2)}}}}}
           \end{array}}}}}
     \item  and we also have:
        {{\color{\colorMATH}\ensuremath{\mathit{n_{1} = n_{1}}}}},
        {{\color{\colorMATH}\ensuremath{\mathit{n_{2} = n_{2}}}}},
        {{\color{\colorMATH}\ensuremath{\mathit{\sigma _{1}^{\prime} = \{ \ell \mapsto v_{1}\} \uplus \sigma _{1}^{\prime}}}}},
        {{\color{\colorMATH}\ensuremath{\mathit{\sigma _{2}^{\prime} = \{ \ell \mapsto v_{2}\} \uplus \sigma _{2}^{\prime}}}}},
        {{\color{\colorMATH}\ensuremath{\mathit{v_{1} = \ell _{1}}}}} and
        {{\color{\colorMATH}\ensuremath{\mathit{v_{2} = \ell _{2}}}}}.
        By IH ({{\color{\colorMATH}\ensuremath{\mathit{n = n}}}} decreasing), {\textit{(H1)}}, {\textit{(H2)}}, {\textit{(H3)}}, {\textit{(H4.1)}} and {\textit{(H5.1)}} we have:
        {{\color{\colorMATH}\ensuremath{\mathit{n_{1} = n_{2}}}}}            {\textit{(IH.1.C1)}};
        {{\color{\colorMATH}\ensuremath{\mathit{\sigma _{1}^{\prime} \sim ^{\Sigma }_{n-n_{1}} \sigma _{2}^{\prime}}}}} {\textit{(IH.1.C2)}}; and
        {{\color{\colorMATH}\ensuremath{\mathit{v_{1} \sim ^{\Sigma }_{n-n_{1}} v_{2}}}}}   {\textit{(IH.1.C3)}}.
        Because {{\color{\colorMATH}\ensuremath{\mathit{\sigma _{1} \sim ^{\Sigma }_{n-n_{1}} \sigma _{2}^{\prime}}}}}, we know {{\color{\colorMATH}\ensuremath{\mathit{{\text{dom}}(\sigma _{1}) = {\text{dom}}(\sigma _{2})}}}} and therefore {{\color{\colorMATH}\ensuremath{\mathit{\ell _{1} = \ell _{2}}}}}.
        To show:
        {\textit{(C1)}}: {{\color{\colorMATH}\ensuremath{\mathit{n_{1} = n_{2}}}}};
        {\textit{(C2)}}: {{\color{\colorMATH}\ensuremath{\mathit{\{ \ell \mapsto v_{1}\} \uplus \sigma _{1}^{\prime} \sim ^{\Sigma }_{n-n_{1}} \{ \ell \mapsto v_{2}\} \uplus \sigma _{2}^{\prime}}}}}; and
        {\textit{(C3)}}: {{\color{\colorMATH}\ensuremath{\mathit{\ell _{1} \sim ^{\Sigma }_{n-n_{1}} \ell _{2}}}}}.
        {\textit{(C1)}} is immediate from {\textit{(IH.1.C1)}}.
        {\textit{(C2)}} is immediate from {\textit{(IH.1.C2)}}, {\textit{(IH.1.C3)}} and the definition of {{\color{\colorMATH}\ensuremath{\mathit{\sigma  \sim _{n}^{\Sigma } \sigma }}}}.
        {\textit{(C3)}} is immediate by definition of {{\color{\colorMATH}\ensuremath{\mathit{v \sim _{n}^{\Sigma } v}}}} and {{\color{\colorMATH}\ensuremath{\mathit{\ell _{1} = \ell _{2}}}}}.
     \end{itemize} 
  \item  \begin{itemize}[label={},leftmargin=0pt]\item  {\textbf{Case}} {{\color{\colorMATH}\ensuremath{\mathit{n=n+1}}}} {\textbf{and}} {{\color{\colorMATH}\ensuremath{\mathit{e={!}e}}}}: 
     \item  By inversion on {\textit{(H4)}} and {\textit{(H5)}} we have:
     \item  \ {{\color{\colorMATH}\ensuremath{\mathit{\begin{array}{rclclclcl@{\hspace*{1.00em}}l
           } \rho _{1} &{}\vdash {}& \mathrlap{\raisebox{-5pt}{\color{black!50}$\llcorner$}}\mathrlap{\raisebox{6pt}{\color{black!50}$\ulcorner$}}\hspace{2pt}\vphantom{\underset x{\overset XX}}\sigma _{1} &{},{}&e\hspace{2pt}\mathllap{\raisebox{-5pt}{\color{black!50}$\lrcorner$}}\mathllap{\raisebox{6pt}{\color{black!50}$\urcorner$}}\vphantom{\underset x{\overset XX}} &{}\Downarrow _{n_{1}}{}& \mathrlap{\raisebox{-5pt}{\color{black!50}$\llcorner$}}\mathrlap{\raisebox{6pt}{\color{black!50}$\ulcorner$}}\hspace{2pt}\vphantom{\underset x{\overset XX}}\sigma _{1}^{\prime}&{},{}&\ell _{1}\hspace{2pt}\mathllap{\raisebox{-5pt}{\color{black!50}$\lrcorner$}}\mathllap{\raisebox{6pt}{\color{black!50}$\urcorner$}}\vphantom{\underset x{\overset XX}} & {{\color{\colorTEXT}\textnormal{{\textit{(H4.1)}}}}}
           \cr  \rho _{2} &{}\vdash {}& \mathrlap{\raisebox{-5pt}{\color{black!50}$\llcorner$}}\mathrlap{\raisebox{6pt}{\color{black!50}$\ulcorner$}}\hspace{2pt}\vphantom{\underset x{\overset XX}}\sigma _{2} &{},{}&e\hspace{2pt}\mathllap{\raisebox{-5pt}{\color{black!50}$\lrcorner$}}\mathllap{\raisebox{6pt}{\color{black!50}$\urcorner$}}\vphantom{\underset x{\overset XX}} &{}\Downarrow _{n_{2}}{}& \mathrlap{\raisebox{-5pt}{\color{black!50}$\llcorner$}}\mathrlap{\raisebox{6pt}{\color{black!50}$\ulcorner$}}\hspace{2pt}\vphantom{\underset x{\overset XX}}\sigma _{2}^{\prime}&{},{}&\ell _{2}\hspace{2pt}\mathllap{\raisebox{-5pt}{\color{black!50}$\lrcorner$}}\mathllap{\raisebox{6pt}{\color{black!50}$\urcorner$}}\vphantom{\underset x{\overset XX}} & {{\color{\colorTEXT}\textnormal{{\textit{(H5.1)}}}}}
           \end{array}}}}}
     \item  and we also have:
        {{\color{\colorMATH}\ensuremath{\mathit{n_{1} = n_{1}}}}},
        {{\color{\colorMATH}\ensuremath{\mathit{n_{2} = n_{2}}}}},
        {{\color{\colorMATH}\ensuremath{\mathit{\sigma _{1}^{\prime} = \sigma _{1}^{\prime}}}}},
        {{\color{\colorMATH}\ensuremath{\mathit{\sigma _{2}^{\prime} = \sigma _{2}^{\prime}}}}},
        {{\color{\colorMATH}\ensuremath{\mathit{v_{1} = \sigma _{1}^{\prime}(\ell _{1})}}}} and
        {{\color{\colorMATH}\ensuremath{\mathit{v_{2} = \sigma _{2}^{\prime}(\ell _{2})}}}}.
        By IH ({{\color{\colorMATH}\ensuremath{\mathit{n = n}}}} decreasing), {\textit{(H1)}}, {\textit{(H2)}}, {\textit{(H3)}}, {\textit{(H4.1)}} and {\textit{(H5.1)}} we have:
        {{\color{\colorMATH}\ensuremath{\mathit{n_{1} = n_{2}}}}}            {\textit{(IH.1.C1)}};
        {{\color{\colorMATH}\ensuremath{\mathit{\sigma _{1}^{\prime} \sim ^{\Sigma }_{n-n_{1}} \sigma _{2}^{\prime}}}}} {\textit{(IH.1.C2)}}; and
        {{\color{\colorMATH}\ensuremath{\mathit{\ell _{1} \sim ^{\Sigma }_{n-n_{1}} \ell _{2}}}}}   {\textit{(IH.1.C3)}}.
        Because {{\color{\colorMATH}\ensuremath{\mathit{\ell _{1} \sim ^{\Sigma }_{n-n_{1}} \ell _{2}}}}}, we know {{\color{\colorMATH}\ensuremath{\mathit{\ell _{1} = \ell _{2}}}}} by definition of {{\color{\colorMATH}\ensuremath{\mathit{v \sim _{n}^{\Sigma } v}}}}.
        To show:
        {\textit{(C1)}}: {{\color{\colorMATH}\ensuremath{\mathit{n_{1} = n_{2}}}}};
        {\textit{(C2)}}: {{\color{\colorMATH}\ensuremath{\mathit{\sigma _{1}^{\prime} \sim ^{\Sigma }_{n-n_{1}} \sigma _{2}^{\prime}}}}}; and
        {\textit{(C3)}}: {{\color{\colorMATH}\ensuremath{\mathit{\sigma _{1}^{\prime}(\ell _{1}) \sim ^{\Sigma }_{n-n_{1}} \sigma _{2}^{\prime}(\ell _{2})}}}}.
        {\textit{(C1)}} is immediate from {\textit{(IH.1.C1)}}.
        {\textit{(C2)}} is immediate from {\textit{(IH.1.C2)}}.
        {\textit{(C3)}} is immediate from {\textit{(IH.1.C2)}} and {{\color{\colorMATH}\ensuremath{\mathit{\ell _{1} = \ell _{2}}}}}.
     \end{itemize} 
  \item  \begin{itemize}[label={},leftmargin=0pt]\item  {\textbf{Case}} {{\color{\colorMATH}\ensuremath{\mathit{n=n+1}}}} {\textbf{and}} {{\color{\colorMATH}\ensuremath{\mathit{e=e_{1} \leftarrow  e_{2}}}}}: 
     \item  By inversion on {\textit{(H4)}} and {\textit{(H5)}} we have:
     \item  \ {{\color{\colorMATH}\ensuremath{\mathit{\begin{array}{rclclclcl@{\hspace*{1.00em}}l
           } \rho _{1} &{}\vdash {}& \mathrlap{\raisebox{-5pt}{\color{black!50}$\llcorner$}}\mathrlap{\raisebox{6pt}{\color{black!50}$\ulcorner$}}\hspace{2pt}\vphantom{\underset x{\overset XX}}\sigma _{1} &{},{}&e_{1}\hspace{2pt}\mathllap{\raisebox{-5pt}{\color{black!50}$\lrcorner$}}\mathllap{\raisebox{6pt}{\color{black!50}$\urcorner$}}\vphantom{\underset x{\overset XX}} &{}\Downarrow _{n_{1 1}}{}& \mathrlap{\raisebox{-5pt}{\color{black!50}$\llcorner$}}\mathrlap{\raisebox{6pt}{\color{black!50}$\ulcorner$}}\hspace{2pt}\vphantom{\underset x{\overset XX}}\sigma _{1}^{\prime}&{},{}&\ell _{1}\hspace{2pt}\mathllap{\raisebox{-5pt}{\color{black!50}$\lrcorner$}}\mathllap{\raisebox{6pt}{\color{black!50}$\urcorner$}}\vphantom{\underset x{\overset XX}} & {{\color{\colorTEXT}\textnormal{{\textit{(H4.1)}}}}}
           \cr  \rho _{1} &{}\vdash {}& \mathrlap{\raisebox{-5pt}{\color{black!50}$\llcorner$}}\mathrlap{\raisebox{6pt}{\color{black!50}$\ulcorner$}}\hspace{2pt}\vphantom{\underset x{\overset XX}}\sigma _{1}^{\prime}&{},{}&e_{2}\hspace{2pt}\mathllap{\raisebox{-5pt}{\color{black!50}$\lrcorner$}}\mathllap{\raisebox{6pt}{\color{black!50}$\urcorner$}}\vphantom{\underset x{\overset XX}} &{}\Downarrow _{n_{1 2}}{}& \mathrlap{\raisebox{-5pt}{\color{black!50}$\llcorner$}}\mathrlap{\raisebox{6pt}{\color{black!50}$\ulcorner$}}\hspace{2pt}\vphantom{\underset x{\overset XX}}\sigma _{1}^{\prime \prime}&{},{}&\ell _{2}\hspace{2pt}\mathllap{\raisebox{-5pt}{\color{black!50}$\lrcorner$}}\mathllap{\raisebox{6pt}{\color{black!50}$\urcorner$}}\vphantom{\underset x{\overset XX}} & {{\color{\colorTEXT}\textnormal{{\textit{(H4.2)}}}}}
           \cr  \rho _{2} &{}\vdash {}& \mathrlap{\raisebox{-5pt}{\color{black!50}$\llcorner$}}\mathrlap{\raisebox{6pt}{\color{black!50}$\ulcorner$}}\hspace{2pt}\vphantom{\underset x{\overset XX}}\sigma _{2} &{},{}&e_{1}\hspace{2pt}\mathllap{\raisebox{-5pt}{\color{black!50}$\lrcorner$}}\mathllap{\raisebox{6pt}{\color{black!50}$\urcorner$}}\vphantom{\underset x{\overset XX}} &{}\Downarrow _{n_{2 1}}{}& \mathrlap{\raisebox{-5pt}{\color{black!50}$\llcorner$}}\mathrlap{\raisebox{6pt}{\color{black!50}$\ulcorner$}}\hspace{2pt}\vphantom{\underset x{\overset XX}}\sigma _{2}^{\prime}&{},{}&v_{1}\hspace{2pt}\mathllap{\raisebox{-5pt}{\color{black!50}$\lrcorner$}}\mathllap{\raisebox{6pt}{\color{black!50}$\urcorner$}}\vphantom{\underset x{\overset XX}} & {{\color{\colorTEXT}\textnormal{{\textit{(H5.1)}}}}}
           \cr  \rho _{2} &{}\vdash {}& \mathrlap{\raisebox{-5pt}{\color{black!50}$\llcorner$}}\mathrlap{\raisebox{6pt}{\color{black!50}$\ulcorner$}}\hspace{2pt}\vphantom{\underset x{\overset XX}}\sigma _{2}^{\prime}&{},{}&e_{2}\hspace{2pt}\mathllap{\raisebox{-5pt}{\color{black!50}$\lrcorner$}}\mathllap{\raisebox{6pt}{\color{black!50}$\urcorner$}}\vphantom{\underset x{\overset XX}} &{}\Downarrow _{n_{2 2}}{}& \mathrlap{\raisebox{-5pt}{\color{black!50}$\llcorner$}}\mathrlap{\raisebox{6pt}{\color{black!50}$\ulcorner$}}\hspace{2pt}\vphantom{\underset x{\overset XX}}\sigma _{2}^{\prime \prime}&{},{}&v_{2}\hspace{2pt}\mathllap{\raisebox{-5pt}{\color{black!50}$\lrcorner$}}\mathllap{\raisebox{6pt}{\color{black!50}$\urcorner$}}\vphantom{\underset x{\overset XX}} & {{\color{\colorTEXT}\textnormal{{\textit{(H5.2)}}}}}
           \end{array}}}}}
     \item  and we also have:
        {{\color{\colorMATH}\ensuremath{\mathit{n_{1} = n_{1 1} + n_{1 2}}}}},
        {{\color{\colorMATH}\ensuremath{\mathit{n_{2} = n_{2 1} + n_{2 2}}}}},
        {{\color{\colorMATH}\ensuremath{\mathit{\sigma _{1}^{\prime} = \sigma _{1}^{\prime \prime}}}}},
        {{\color{\colorMATH}\ensuremath{\mathit{\sigma _{2}^{\prime} = \sigma _{2}^{\prime \prime}}}}},
        {{\color{\colorMATH}\ensuremath{\mathit{v_{1} = v_{1}}}}} and
        {{\color{\colorMATH}\ensuremath{\mathit{v_{2} = v_{2}}}}}.
        By IH ({{\color{\colorMATH}\ensuremath{\mathit{n = n}}}} decreasing), {\textit{(H1)}}, {\textit{(H2)}}, {\textit{(H3)}}, {\textit{(H4.1)}} and {\textit{(H5.1)}} we have:
        {{\color{\colorMATH}\ensuremath{\mathit{n_{1 1} = n_{2 1}}}}}           {\textit{(IH.1.C1)}};
        {{\color{\colorMATH}\ensuremath{\mathit{\sigma _{1}^{\prime} \sim ^{\Sigma }_{n-n_{1 1}} \sigma _{2}^{\prime}}}}} {\textit{(IH.1.C2)}}; and
        {{\color{\colorMATH}\ensuremath{\mathit{\ell _{1} \sim ^{\Sigma }_{n-n_{1 1}} \ell _{2}}}}}   {\textit{(IH.1.C3)}}.
        Because {{\color{\colorMATH}\ensuremath{\mathit{\ell _{1} \sim ^{\Sigma }_{n-n_{1 1}} \ell _{2}}}}} we know {{\color{\colorMATH}\ensuremath{\mathit{\ell _{1} = \ell _{2}}}}}.
        Note the following facts:
        {{\color{\colorMATH}\ensuremath{\mathit{n_{1 2} \leq  n-n_{1 1}}}}}    {\textit{(F1)}}; and
        {{\color{\colorMATH}\ensuremath{\mathit{\rho _{1} \sim ^{\Sigma }_{n-n_{1 1}}}}}} {\textit{(F2)}}.
        {\textit{(F1)}} follows from {\textit{(H3)}} and {{\color{\colorMATH}\ensuremath{\mathit{n_{1} = n_{1 1} + n_{1 2}}}}}.
        {\textit{(F2)}} follows from {\textit{(H1)}} and {\nameref{thm:step-index-weakening}.1}.
        By IH ({{\color{\colorMATH}\ensuremath{\mathit{n = n-n_{1 1}}}}} decreasing), {\textit{(F2)}}, {\textit{(IH.1.C2)}}, {\textit{(F1)}}, {\textit{(H4.2)}} and {\textit{(H5.2)}} we have:
        {{\color{\colorMATH}\ensuremath{\mathit{n_{1 2} = n_{2 2}}}}}               {\textit{(IH.2.C1)}};
        {{\color{\colorMATH}\ensuremath{\mathit{\sigma _{1}^{\prime \prime} \sim ^{\Sigma }_{n-n_{1 1}-n_{1 2}} \sigma _{2}^{\prime \prime}}}}} {\textit{(IH.2.C2)}}; and
        {{\color{\colorMATH}\ensuremath{\mathit{v_{1} \sim ^{\Sigma }_{n-n_{1 1}-n_{1 2}} v_{2}}}}}   {\textit{(IH.2.C3)}}.
        To show:
        {\textit{(C1)}}: {{\color{\colorMATH}\ensuremath{\mathit{n_{1 1} + n_{1 2} = n_{2 1} + n_{2 2}}}}};
        {\textit{(C2)}}: {{\color{\colorMATH}\ensuremath{\mathit{\sigma _{1}^{\prime \prime}[\ell \mapsto v_{1}] \sim ^{\Sigma }_{n-n_{1 1}-n_{1 2}} \sigma _{2}^{\prime \prime}[\ell \mapsto v_{2}]}}}}; and
        {\textit{(C3)}}: {{\color{\colorMATH}\ensuremath{\mathit{v_{1} \sim ^{\Sigma }_{n-n_{1 1}-n_{1 2}} v_{2}}}}}.
        {\textit{(C1)}} is immediate from {\textit{(IH.1.C1)}} and {\textit{(IH.2.C1)}}.
        {\textit{(C2)}} is immediate from {\textit{(IH.2.C2)}} and {\textit{(IH.2.C3)}}.
        {\textit{(C3)}} is immediate from {\textit{(IH.2.C3)}}.
     \end{itemize} 
  \end{itemize}
\end{proof}

\section{Additional Case Studies}
\subsection{MWEM with Pandas}

The MWEM algorithm \cite{hardt2012simple} constructs a differentially private synthetic dataset that approximates a real dataset. MWEM produces competitive privacy bounds by utilizing a combination of the exponential mechanism, Laplacian/Gaussian noise, and the multiplicative weights update rule.
The algorithm uses these mechanisms iteratively, providing a tight analysis of privacy leakage
via composition.

The inputs to the MWEM are as follows: some uniform or random distribution over a domain ($syn\_data$), some sensitive dataset ($age\_counts$), a query workload, a number of iterations {{\color{\colorMATH}\ensuremath{\mathit{i}}}}, and a privacy budget {{\color{\colorMATH}\ensuremath{\mathit{\epsilon }}}}.

The algorithm works by, at each iteration:
\begin{itemize}
  \item privately selecting a query from the query workload (using the exponential mechanism) whose result on the synthetic dataset greatly differs from the real dataset

\vspace{1em}
\begin{minted}{python}
for t in range(i):
  q = exponential(q_workload,score_fn,eps/(2*i))
  ...
\end{minted}
\vspace{-2ex}
  \item and then privately using the query result on the real dataset to adjust the synthetic dataset towards the truth using the multiplicative weights update rule

\vspace{1em}
\begin{minted}{python}
for t in range(i):
  ...
  syn_data = mwem_step(q, age_counts, syn_data)
\end{minted}
\vspace{-2ex}
\end{itemize}

We present a modified, adaptive MWEM algorithm (Figure~\ref{fig:mwem}) which privately halts execution if the error of the synthetic dataset reaches an acceptably low level before the entire privacy budget is exceeded, conserving the remainder of the budget for other private analyses.

\begin{figure}

\begin{minted}{python}
def mwem_step(query, real_data, syn_data):
  lower, upper = query
  sm = [v for k, v in syn_data.items()]
  total = np.sum(sm)
  q_ans = range_query(real_data, lower, upper)
  real = dduo.renyi_gauss(|$\alpha $|=alpha,|$\epsilon $|=eps,q_ans,sens)
  syn = range_query(syn_data, lower, upper)
  l = [(k, mwem_update(k, x, lower, upper,
        real, syn, total))
      for k, x in syn_data.items()]
  return dict(l)

with dduo.RenyiFilter(alpha,20.0):
  with dduo.RenyiOdometer((alpha,2.0)) as odo:
    while stable < stability_thresh:
      e = err(age_counts,curr_syn)
      curr_noisy_err=dduo.renyi_gauss(|$\alpha $|=alpha,|$\epsilon $|=1.0,e)
      if (curr_noisy_err < thresh):
        stable += step
      else:
        stable = 0
      for t in range(iterations):
        q = exponential(q_workload,score_fn,eps/(2*i))
        curr_syn = mwem_step(q,age_counts,curr_syn)

acc = dduo.renyi_gauss(alpha, eps_acc,
  accuracy(age_counts,curr_syn))
print(f"final accuracy: {acc}")
\end{minted}

\vspace{-1em}
\begin{minted}[frame=lines,bgcolor=mygray]{text}
Odometer_|$(\alpha ,\epsilon )$|(|$\{ \textit{data.csv} \mapsto  (10.0, 0.5)\} $|)
final accuracy: 0.703
\end{minted}

\vspace{-2ex}
\caption{MWEM with Pandas}
\label{fig:mwem}

\end{figure}


\subsection{DiffPrivLib}

DiffPrivLib is library for experimenting with analytics and machine learning with differential privacy in Python by IBM. It provides a comprehensive suite of differentially private mechanisms, tools, and machine learning models.

While DiffPrivLib provides several mechanisms, models and tools for developing private applications, as well as a basic privacy accountant, it lacks the ability to perform
a tight privacy analysis in the context of more sophisticated forms of composition
with dynamic and adaptive privacy tracking. Via integration with
\dduo we are able to gain these abilities with minimal changes to library and program code.

Figure~\ref{fig:dpl1} shows an example of a modified DiffPrivLib program: a private naive Bayes classifier run on the standard iris dataset. The original program has been modified with \dduo hooks to detect sensitivity violations and track privacy cost.

We also present a \dduo instrumented example of differentially private logistic regression with DiffPrivLib (Figure~\ref{fig:dpl2}).

Both of these programs have been modified to perform \emph{adaptively} private clipping. Over
several iterations, clipping parameters are gradually modified to optimize model accuracy.
This form of control flow on probabilistic values is only sound following the adaptive composition strategies that \dduo provides. In order to preserve the privacy budget, such hyperparameter optimization procedures should normally be run on artificial datasets based on domain knowledge.

The changes required for the integration with the DiffPrivLib library consist of 15 lines of \dduo instrumentation code.

\begin{figure}
\begin{minted}{python}
from dduo import sklearn as sk
from dduo import DiffPrivLib as dpl
with dduo.AdvEdOdometer() as odo:
  while noisy_acc < thresh or iters < max_iters:
    prev_bounds = bounds
    bounds = update_bounds(bounds)
    clf = dpl.GNB(bounds=bounds, epsilon=epsilon)
    clf.fit(X_train, y_train)
    prev_acc = noisy_acc
    accuracy = dpl.score(y_test, clf.predict(X_test))
    noisy_acc = dduo.gauss(epsilon_acc,delta,accuracy)
    if noisy_acc < prev_acc:
      bounds = prev_bounds
    iters += 1
dduo.print_privacy_cost()
\end{minted}

\vspace{-1em}
\begin{minted}[frame=lines,bgcolor=mygray]{text}
Odometer_|$(\epsilon ,\delta )$|(|$\{ \textit{data.csv} \mapsto  (0.82, 0.0035)\} $|)
\end{minted}

\vspace{-2ex}
\caption{DiffPrivLib: Naive Bayes Classification}
\label{fig:dpl1}
\end{figure}

\begin{figure}
\begin{minted}{python}
from dduo import sklearn as sk
from dduo import DiffPrivLib as dpl
with dduo.AdvEdOdometer() as odo:
  while noisy_acc < thresh or norm > 0.0:
    dp_clf = dpl.LogisticRegression(epsilon=epsilon,
      data_norm = norm)
    dp_clf.fit(X_train, y_train)
    accuracy = dp_clf.score(X_test, y_test)
    noisy_acc = dduo.gauss(epsilon_acc,delta,accuracy)
    norm -= step
dduo.print_privacy_cost()
\end{minted}

\vspace{-1em}
\begin{minted}[frame=lines,bgcolor=mygray]{text}
Odometer_|$(\epsilon ,\delta )$|(|$\{ \textit{data.csv} \mapsto  (0.53, 0.0015)\} $|)
\end{minted}
\vspace{-2ex}
\caption{DiffPrivLib: Logistic Regression}
\label{fig:dpl2}
\end{figure}

%
%
%
%
%
%

\end{document}